\documentclass[12pt]{amsart}

\usepackage{amsfonts}
\usepackage{amsmath}
\setcounter{MaxMatrixCols}{10}
\textheight=600pt

\textwidth=480pt
\hoffset=-50pt
\newtheorem{thm}{Theorem}[section]

\newtheorem{cor}[thm]{Corollary}
\newtheorem{lem}[thm]{Lemma}
\newtheorem{prop}[thm]{Proposition}
\newtheorem*{prob*}{Problem}
\newtheorem*{thm*}{Theorem}

\theoremstyle{definition}
\newtheorem{defn}[thm]{Definition}

\newtheorem*{defn*}{Definition}
\newtheorem{rem}[thm]{Remark}

\newtheorem*{rem*}{Remark}
\numberwithin{equation}{section}

\newcommand{\C}{\mathbb C}
\newcommand{\R}{\mathbb R}

\DeclareMathOperator{\const}{const}

\DeclareMathOperator{\Mat}{Mat}

\newcommand{\Tr}{\mathop{\mathrm{Tr}}}

\newcommand{\hP}{\widehat{P}}

\newcommand{\hI}{\widehat{I}}
\newcommand{\hK}{\widehat{K}}

\begin{document}
\title[The product of two dependent random matrices]
 {\bf{Dropping the independence: singular values for products of two coupled random matrices}}

\author{Gernot Akemann}
\address{Fakult\"at f\"ur Physik, Universit\"at Bielefeld,  Postfach 100131, D-33501 Bielefeld, Germany, and Universit\'e Paris-Sud, CNRS, LPTMS, UMR 8626, B{\^a}t. 100, Orsay F-91405, France} \email{akemann@physik.uni-bielefeld.de}

\author{Eugene Strahov}
\address{Department of Mathematics, The Hebrew University of
Jerusalem, Givat Ram, Jerusalem
91904, Israel}\email{strahov@math.huji.ac.il}
\thanks{The first author (G.~A.) is supported partly 
by Investissements d'Avenir du LabEx PALM
(ANR-10-LABX-0039-PALM) and 
by the SFB$|$TR12 ``Symmetries
and Universality in Mesoscopic Systems'' of the German research council DFG.
The second author (E.~S.) is supported in part by the Hebrew University  Grant  ``Non-hermitian random matrices''  No.\ 0337592.\\ }
\keywords{Products of complex random matrices, determinantal point processes, biorthogonal ensembles, Meijer G-kernel,
central limit theorems}

\commby{}
\begin{abstract}
We study the singular values of the product of two coupled rectangular random matrices as a determinantal point process. 
Each of the two factors is given by a parameter dependent linear combination of two independent, complex Gaussian random matrices, which is equivalent to a coupling of the two factors via an Itzykson-Zuber term. 
We prove that the squared singular values of such a product form a biorthogonal ensemble and establish its exact solvability.
The parameter dependence allows us to interpolate between the singular value statistics of the Laguerre ensemble and that of the product of two independent complex Ginibre ensembles which are both known. 
We give exact formulae for the correlation kernel in terms of a complex double contour integral, suitable for the subsequent asymptotic analysis. In particular, we derive a Christoffel-Darboux type formula for the correlation kernel, based on a five term recurrence relation for our biorthogonal functions. It enables us to find its scaling limit at the origin representing a hard edge. The resulting limiting kernel coincides with the
universal Meijer G-kernel found by several authors in different ensembles.
We show that the central limit theorem holds for the linear statistics of the singular values and give the
limiting variance explicitly.
\end{abstract}
\maketitle
\section{Introduction}
A remarkable feature of products of independent complex Gaussian matrices, i.e. independent matrices with i.i.d. standard complex Gaussian entries, is the exact solvability of the statistical properties of their eigenvalues
and singular values. Indeed, it was shown  in  
\cite{Akemann1} that the eigenvalues of such products form  a determinantal
point processes in $\C$. The behaviour of singular values for such products was studied in  
\cite{AkemannKieburgWei,AkemannIpsenKieburg}, where it was observed that its (squared) singular values 
also form a determinantal point process in $\R_{\geq 0}$. The correlation kernels of these two different determinantal point processes can be written
explicitly in terms of  Meijer G-functions, with suitable choices of parameters.

These results 
have opened the possibility to investigate products of independent complex Gaussian matrices on the same level as the well-known classical ensembles of Random Matrix Theory, such as the Ginibre ensemble and the Laguerre ensemble.
We refer the reader to the books by
Anderson,  Guionnet and Zeitouni \cite{Anderson}, 
and by Forrester \cite{ForresterLogGases} 
for an introduction to Random Matrix Theory,
as well as to \cite{ABD} for a compilation of its most recent applications.

The study of products of random matrices goes back to Furstenberg and Kesten \cite{FK} who were interested in its Lyapunov exponents that characterise dynamical systems. Many statistical mechanics applications have been summarised in the book by Crisanti, Paladin, and Vulpiani \cite{CPV}, and most recent examples for applications include telecommunications \cite{Ralf} and combinatorics \cite{Karol}. A very particular case of the product of two coupled matrices was applied to Quantum Chromodynamics (QCD) with chemical potential in \cite{Osborn}, where the complex eigenvalue spectrum was determined. This example will be important for our paper, due to the coupling of the matrices.  

The recent rapid development on products of matrices  is summarised in the review 
\cite{AkemannIpsen}, to where we refer for details and references. 
In particular, in  the work by Kuijlaars and Zhang \cite{KuijlaarsZhang} a new class of so-called Meijer G-kernels was found near the origin, representing a hard edge. The name alludes to the appearance of the Meijer G-function. This kernel generalises the Bessel kernel and contains the kernels of Borodin \cite{BorodinBO}, as pointed out by Kuijlaars and Stivigny \cite{KS}. It is universal 
as it remains unchanged when multiplying by an additional independent inverse complex Gaussian matrices as shown by Forrester \cite{ForresterProductWishart} or by an additional truncated unitary matrix as shown by Kuijlaars and Stivigny \cite{KS}. Furthermore, it appears in the Cauchy two-matrix model \cite{Bertola} and its multi-matrix extension \cite{MB2} of Bertola and coworkers.
Because the Cauchy two-matrix model was used recently to solve the (Laplace transform) of a matrix model with Bures measure by Forrester and Kieburg \cite{PeterMario}, this kernel enjoys  applications to quantum density matrices. And we will also find this limiting Meijer G-kernel for two independent matrices, starting from two coupled random matrices.
It was shown in Kuijlaars and Zhang \cite{KuijlaarsZhang} that the class of kernels  
is integrable in the sense of Its, Isergin, Korepin, and Slavnov  \cite{Its}.
This enabled the description of the squared singular values by Hamiltonian equations \cite{Strahov}.
For a survey on integrable operators see Deift \cite{Deift1}.
Furthermore, contact was made to questions from Gaussian analytic functions in \cite{AkemannIpsen,AkemannIpsenStrahov} by studying the asymptotics of gap and overcrowding probabilities. For very recent results on determinantal point processes related to products
of independent complex Gaussian matrices we refer the reader to Kuijlaars  \cite{Kuijlaars},  Forrester and Wang  \cite{ForresterWang}, and Forrester and  
Liu \cite{ForresterLiu}.

Two questions arise naturally: What happens when the assumption of a Gaussian distribution of matrix elements is dropped?
When the matrices in the product are independent, but not necessarily Gaussian, a number of results for the statistics of eigenvalues and of singular
values in the\textit{ global} asymptotic regime is available. 
The paper by O`Rourke and Soshnikov \cite{Rourke} gives an analogue of the circular law for the product
of a finite number of non-Hermitian random matrices, generalising the result by Burda, Janik, and Waclaw \cite{Burda3}. For a description
of the statistics of singular values of products of independent matrices, and, in particular, for the Central Limit Theorem for the squared singular values
we refer the reader to the papers by 
G$\ddot{\mbox{o}}$tze, Tikhomirov and their co-workers \cite{AlexeevGotzeTikhomirov,GotzeNaumovTikhomirov,GotzeKostersTikhomirov}. 
Results on the \textit{local} statistics for products of independent matrices with non-Gaussian entries are still not available, to the best of our knowledge. This is no doubt due to the lack of integrability in the non-Gaussian case.

The second question is whether some of the above results 
can be extended to those of coupled random matrices. 
Here we consider a product of two \textit{dependent} matrices, and concentrate on the statistics of the squared singular values.
Such random matrices appeared first in the work by  Osborn \cite{Osborn} in the context of QCD with a baryon chemical potential $\mu$ 
as follows:
\begin{equation}
\label{OsbornMatrix}
D=\left(\begin{array}{cc}
          0 & iA+\mu B \\
          iA^*+\mu B^* & 0
        \end{array}
\right).
\end{equation}
Here $A$ and $B$ are rectangular independent matrices with i.i.d standard complex Gaussian entries, and $\mu\in[0,1]$ is a dimensionless parameter.
The motivation to consider (\ref{OsbornMatrix}) comes from the observation that the QCD Dirac operator $D$ 
has this off-diagonal block form in the so-called chiral basis.  For the random matrix application to QCD we refer to the review by Verbaarschot and Wettig \cite{JT}, see also chapter 32 in \cite{ABD} by Verbaarschot.
In \cite{Osborn} the correlations of complex eigenvalues of $D$ were determined, which is equivalent to determining the eigenvalues of the product matrix 
$Y=X_1X_2$, with 
$X_1=\left(iA+\mu B\right)$ and $X_2=\left(iA^*+\mu B^*\right)$. The change of variables from matrices $A, B$ to $X_1, X_2$ reveals that the latter are coupled by an Itzykson-Zuber term, 
in addition to their Gaussian weight.
Very recently it has been suggested in \cite{TT1,TT2} to study the singular values of the Dirac operator in QCD and QCD-like theories
instead, in order to better understand the high-density regime.
This is one of the motivations for us to study the (squared) singular values of the product matrix $Y$. 
Apart from this physical interpretation the parameter $\mu$ allows to interpolate between the classical Laguerre ensemble at $\mu=0$ solved by orthogonal Laguerre polynomials and the recent solution of the product of two independent Gaussian random matrices at $\mu=1$ given in terms of biorthogonal functions.

This paper is organised as follows. In Section \ref{SectionMuGaussianComplexMatrices} we define the notion of $\mu$-dependent Gaussian complex random matrices, making this notion of interpolation and of its limits more precise. 
We state our main results in Section \ref{SectionStatementOfResults}. In particular 
we demonstrate the exact solvability of the statistical properties of the singular values of the product matrix $Y$ for arbitrary parameter values $\mu$: the joint probability density function of the squared singular values is a determinantal process on $\R_{\geq 0}$ and can be computed explicitly in terms of modified Bessel functions of first and second kind.
This determinantal point process is a biorthogonal ensemble in the sense of Borodin \cite{BorodinBO}.
For this parameter dependent ensemble we derive different
formulae for the correlation kernel including a Christoffel-Darboux type formula and a double complex contour integral representation. We compute the hard edge scaling limit at the origin, and we obtain a Central Limit Theorem for fluctuations of linear statistics.
Sections \ref{ProofsInitial}-\ref{proofsFinite} and Appendix \ref{limitA} contain the proofs of our statements.
\ \\
\textbf{Acknowledgements.}
We are grateful to Percy Deift for discussions, and to Jonathan Breuer for a clear explanation of the results in Ref. \cite{BreuerDuits} to us. One of us (G.A.) would like to thank the LPTMS Orsay for hospitality where part of these results were finalised.

\section{
Parameter dependent Gaussian complex matrices}\label{SectionMuGaussianComplexMatrices}
Before we present our results we will define a family of parameter dependent coupled Gaussian random variables, and the corresponding notion for random matrices.
By this we mean the following.
\begin{defn}
Let $\mu\in (0,1)$, $\alpha(\mu)=\frac{1+\mu}{2\mu}$, and $\delta(\mu)=\frac{1-\mu}{2\mu}$. We will  refer to two complex random variables, $z$ and $\xi$ as to  \textit{$\mu$-dependent Gaussian complex variables} if the joint density of these variables is given by
$$
\rho(z,\xi)=\frac{1}{\pi^2\mu}\exp\left[-\alpha(\mu)(z\bar{z}+\xi\bar{\xi})+\delta(\mu)(z\xi+\bar{z}\bar{\xi})\right].
$$
\end{defn}
\begin{defn}
Let
$$
X_1=\left(\begin{array}{ccc}
            X_{1,1}^{(1)} & \ldots & X_{1,M}^{(1)} \\
            \vdots &  &  \\
            X_{N,1}^{(1)} & \ldots & X_{N,M}^{(1)}
          \end{array}
\right),\;\;\; X_2=\left(\begin{array}{ccc}
            X_{1,1}^{(2)} & \ldots & X_{1,N}^{(2)} \\
            \vdots &  &  \\
            X_{M,1}^{(2)} & \ldots & X_{M,N}^{(2)}
          \end{array}
\right)
$$
be two matrices whose complex random entries are defined by the following conditions
\begin{itemize}
  \item $X_{i,j}^{(1)}$, $1\leq i\leq N$, $1\leq j\leq M$ are independent;
  \item $X_{i,j}^{(2)}$, $1\leq i\leq M$, $1\leq j\leq N$ are independent;
  \item For each $1\leq i\leq N$, and for each $1\leq j\leq M$ the pair $(X_{i,j}^{(1)}, X_{j,i}^{(2)})$ is a pair of $\mu$-dependent
  Gaussian complex random variables.
\end{itemize}
We will refer to  such random matrices  $X_1$ and $X_2$ as to \textit{$\mu$-dependent Gaussian complex random matrices}.
\end{defn}
Alternatively,  we can define $\mu$-dependent Gaussian complex random matrices as follows. Let $\Mat(\C, N\times  M)$  denote the space of $N\times M$ complex matrices $X_1$,
and $\Mat(\C, M\times  N)$ denote the space of $M\times N$ complex random matrices $X_2$. We consider the probability distribution $P_{N,M}(X_1,X_2)dX_1dX_2$ on the Cartesian
product of $\Mat(\C, N\times  M)$ and $\Mat(\C, M\times  N)$
\begin{equation}\label{JointDensityX1X2}
\begin{split}
P_{N,M}(X_1,X_2)dX_1dX_2=&c\cdot 
\exp\left[-\alpha(\mu)\Tr\left(X_1X_1^*+X_2^*X_2\right)
+\delta(\mu)\Tr\left(X_1X_2+X_2^*X_1^*\right)\right]\\
&\times\prod\limits_{i=1}^N\prod\limits_{j=1}^Md{X_{i,j}^{(1)}}^Rd{X_{i,j}^{(1)}}^I\prod\limits_{i=1}^M\prod\limits_{j=1}^Nd{X_{i,j}^{(2)}}^Rd{X_{i,j}^{(2)}}^I,
\end{split}
\end{equation}
where $X_{i,j}^{(1)}={X_{i,j}^{(1)}}^R+i{X_{i,j}^{(1)}}^I$, $X_{i,j}^{(1)}={X_{i,j}^{(1)}}^R+i{X_{i,j}^{(1)}}^I$ denote the sums of the real and imaginary parts of the matrix entries $X_{i,j}^{(1)}$ and $X_{i,j}^{(2)}$, and $c$ is a normalising constant. 
The second term in the exponent proportional to $\delta(\mu)$ is nothing else than the Itzykson-Zuber term (for non-hermitian matrices) coupling the two matrices\footnote{However, because we will be interested in the singular values of the product matrix $X_1X_2$, we will not use their integration formula \cite{HC,IZ} for this term.}.
We have
$$
\Tr(X_1X_1^*)=\sum\limits_{i=1}^{N}\sum\limits_{j=1}^MX_{i,j}^{(1)}\overline{X_{i,j}^{(1)}},\;\;
\Tr(X_2^*X_2)=\sum\limits_{i=1}^{N}\sum\limits_{j=1}^MX_{j,i}^{(2)}\overline{X_{j,i}^{(2)}},
$$
and
$$
\Tr(X_1X_2)=\sum\limits_{i=1}^{N}\sum\limits_{j=1}^MX_{i,j}^{(1)}X_{j,i}^{(2)},\;\;\Tr(X_2^*X_1^*)=\sum\limits_{i=1}^{N}\sum\limits_{j=1}^M\overline{X_{i,j}^{(1)}}
\; \overline{X_{j,i}^{(2)}}.
$$
Therefore the formula for $P_{N,M}(X_1,X_2)dX_1dX_2$ can be rewritten as
\begin{equation}
\begin{split}
P_{N,M}(X_1,X_2)dX_1dX_2=&c\cdot\prod\limits_{i=1}^N\prod\limits_{j=1}^M
 e^{-\alpha(\mu)\left(X_{i,j}^{(1)}\overline{X_{i,j}^{(1)}}+X_{j,i}^{(2)}\overline{X_{j,i}^{(2)}}\right)
 +\delta(\mu)\left(X_{i,j}^{(1)}X_{j,i}^{(2)}+\overline{X_{i,j}^{(1)}}
\; \overline{X_{j,i}^{(2)}}\right)}\\
&\times d{X_{i,j}^{(1)}}^Rd{X_{i,j}^{(1)}}^Id{X_{j,i}^{(2)}}^Rd{X_{j,i}^{(2)}}^I.
\end{split}
\end{equation}
It is clear from the formula just written above that $P_{N,M}(X_1,X_2)$ is indeed the probability distribution of the $\mu$-dependent Gaussian complex matrices $X_1$
and $X_2$. In addition, note that the normalising constant $c$ is equal to
$$
c=\frac{1}{\left(\pi^2\mu\right)^{NM}}.
$$
\begin{prop}
Let $A$, $B$ be two independent matrices of size $N\times M$ with i.i.d
standard complex Gaussian entries. Define the  random matrices $X_1$ and $X_2$ as
\begin{equation}\label{GaussSums}
X_1=\frac{1}{\sqrt{2}}\left(A-i\sqrt{\mu}B\right),\;\; X_2=\frac{1}{\sqrt{2}}\left(A^*-i\sqrt{\mu}B^*\right).
\end{equation}
Then the  matrices $X_1$ and $X_2$ 
are $\mu$-dependent Gaussian complex random matrices.
\end{prop}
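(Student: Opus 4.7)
The strategy is to reduce the matrix-level statement to an entry-by-entry claim, and then establish that claim by a direct change of variables. Since the relation (\ref{GaussSums}) couples $A_{i,j}$ and $B_{i,j}$ only to the pair $(X_{i,j}^{(1)}, X_{j,i}^{(2)})$, the $NM$ pairs indexed by $(i,j)$ are independent of each other because the underlying pairs $(A_{i,j},B_{i,j})$ are. In particular, the entries of $X_1$ alone (resp.\ $X_2$ alone) are independent, verifying the first two bullets in the preceding definition. So it suffices to show that each individual pair $(z,\xi):=(X_{i,j}^{(1)},X_{j,i}^{(2)})$ has density $\rho(z,\xi)$.

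To this end I would set $a:=A_{i,j}$, $b:=B_{i,j}$, whose joint density is $\pi^{-2}\exp(-|a|^2-|b|^2)$ with respect to $da_R\,da_I\,db_R\,db_I$, and change variables to $(z_R,z_I,\xi_R,\xi_I)$ via
\begin{equation*}
z=\tfrac{1}{\sqrt{2}}(a-i\sqrt{\mu}\,b),\qquad \xi=\tfrac{1}{\sqrt{2}}(\bar a-i\sqrt{\mu}\,\bar b).
\end{equation*}
Writing out the four real components one obtains an explicit $4\times 4$ real Jacobian matrix whose determinant works out to $-\mu$, so $da_R\,da_I\,db_R\,db_I=\mu^{-1}dz_R\,dz_I\,d\xi_R\,d\xi_I$. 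This accounts for the prefactor $(\pi^2\mu)^{-1}$ appearing in $\rho$.

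The remaining step is the algebraic identity
\begin{equation*}
|a|^2+|b|^2=\alpha(\mu)\bigl(|z|^2+|\xi|^2\bigr)-\delta(\mu)(z\xi+\bar z\bar\xi).
\end{equation*}
I would prove it by computing $|z|^2+|\xi|^2=|a|^2+\mu|b|^2$ and $z\xi+\bar z\bar\xi=|a|^2-\mu|b|^2$ directly from the definitions (the cross terms $a\bar b$ and $\bar a b$ cancel in both expressions), and then using $\alpha-\delta=1$ and $\alpha+\delta=1/\mu$ to combine them. Substituting into the transformed density yields $\rho(z,\xi)$ exactly.

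There is no substantial obstacle here; the only point that requires a moment of care is the Jacobian, because the transformation is $\R$-linear but not $\C$-linear (due to the conjugation in $X_2$), so one must work with the four real coordinates rather than treating $(z,\xi)$ as a holomorphic image of $(a,b)$. Once that is handled, the exponent matches by the elementary identity above and the proposition follows.
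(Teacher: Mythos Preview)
Your proposal is correct and is exactly the ``direct calculation'' the paper alludes to; the paper's proof consists of the single sentence ``This can be checked by direct calculation,'' and you have simply carried that calculation out in full, correctly handling the real Jacobian and the exponent identity.
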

\begin{proof}This can be checked by direct calculation.
\end{proof}

\section{Statement of results}\label{SectionStatementOfResults}
\subsection{The joint probability density function}
Our first result is an explicit formula for the joint probability density function for the squared singular values of the random matrix $X_1X_2$.
Recall that the modified Bessel function of the first kind $I_{\kappa}(z)$ is defined by
\begin{equation}\label{BesselFunctionI}
I_{\kappa}(z)=
\sum\limits_{m=0}^{\infty}\frac{1}{m!\Gamma(\kappa+m+1)}
\left( \frac{z}{2}\right)^{2m+\kappa},
\end{equation}
and the modified Bessel function of the second kind $K_{\kappa}(z)$ can be defined by the integral formula
\begin{equation}
K_{\kappa}(z)=\frac{\Gamma\left(\kappa+\frac{1}{2}\right)(2z)^{\kappa}}{\sqrt{\pi}}\int\limits_0^{\infty}\frac{\cos (t)dt}{(t^2+z^2)^{\kappa^2+\frac{1}{2}}},
\end{equation}
see, for example, Gradshteyn and Ryzhik \cite{GradshteinRyzhik}.
\begin{thm} \label{PropositionTheMainJointProbabilityDensity}  Let $X_1\in\Mat\left(\C, N\times M\right)$ and $X_2\in\Mat\left(\C, M\times N\right)$ be two $\mu$-dependent Gaussian
complex matrices. Assume that $M\geq N$, and set
$$
\nu=M-N.
$$
Then the joint probability density function for
the squared singular values $y_1$, $\ldots$, $y_N$ of the matrix $Y=X_1X_2$ is given by
\begin{equation}\label{TheMainJointProbabilityDensityFunction}
\begin{split}
&P(y_1,\ldots,y_N)=\frac{1}{Z_N}\det\left[y_i^{\frac{j-1}{2}}I_{j-1}\left(2\delta(\mu)\sqrt{y_i}\right)\right]_{i,j=1}^N
\det\left[y_i^{\frac{j+\nu-1}{2}}K_{j+\nu-1}\left(2\alpha(\mu)\sqrt{y_i}\right)\right]_{i,j=1}^N,
\end{split}
\end{equation}
where
\begin{equation}\label{ZN}
Z_N=\frac{N!\,
\alpha(\mu)^{N\nu+\frac{N(N-1)}{2}}\delta(\mu)^{\frac{N(N-1)}{2}}}{2^N\left(\alpha(\mu)^2-\delta(\mu)^2\right)^{N\nu+N^2}}\;\prod\limits_{j=1}^N\Gamma(j)\Gamma(j+\nu).
\end{equation}
\end{thm}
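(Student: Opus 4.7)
The plan is to reduce the joint density (\ref{JointDensityX1X2}) to the stated form by successive matrix changes of variables. The key observation is that the Itzykson--Zuber-like coupling $\delta(\Tr(X_1X_2)+\Tr(X_2^*X_1^*))=2\delta\re\Tr Y$ depends on the product $Y=X_1X_2$ alone, which allows one to separate it from the auxiliary integrations and recognize the two determinantal factors in (\ref{TheMainJointProbabilityDensityFunction}) as arising respectively from a ``coupling'' $U(N)$ integral producing Bessel $I$, and a Gaussian integration against the non-compact singular-value directions producing Bessel $K$.

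I would first apply the rectangular SVD $X_2=U_2\left(\begin{smallmatrix}\Sigma_2\\0\end{smallmatrix}\right)V_2^*$ with $U_2\in U(M)$, $V_2\in U(N)$, $\Sigma_2=\diag(\sigma_1,\ldots,\sigma_N)$, and correspondingly set $X_1U_2=[Y_1,Y_2]$ with $Y_1\in\Mat(\C,N\times N)$, $Y_2\in\Mat(\C,N\times \nu)$, so that $Y=Y_1\Sigma_2V_2^*$. The $U_2$-integration is trivial, $Y_2$ is a free Gaussian that integrates to a constant, and the SVD Jacobian supplies $\prod_{i<j}(\sigma_i^2-\sigma_j^2)^2\prod_i\sigma_i^{2\nu+1}$. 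Substituting $Y_1=YV_2\Sigma_2^{-1}$ (with Jacobian $\prod_j\sigma_j^{-2N}$) turns the remaining Gaussian on $Y_1$ into $\alpha\Tr(Y^*Y\,V_2\Sigma_2^{-2}V_2^*)$, while the coupling remains $2\delta\re\Tr Y$.

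I would then diagonalize $Y=U_Y\Lambda V_Y^*$ with $\Lambda=\diag(\sqrt{y_1},\ldots,\sqrt{y_N})$, which brings in the Vandermonde $\prod_{i<j}(y_i-y_j)^2$ from the SVD of $Y$, and set $W:=V_Y^*V_2$, $\tilde U:=V_Y^*U_Y$ as independent $U(N)$ variables. The $W$-integration is of HCIZ type,
\begin{equation*}
\int_{U(N)} e^{-\alpha\Tr(\diag(y_i)W\diag(\sigma_j^{-2})W^*)}dW = c\,\frac{\det\!\left[e^{-\alpha y_i/\sigma_j^2}\right]_{i,j=1}^N}{\Delta(y)\,\Delta(\sigma^{-2})},
\end{equation*}
while the $\tilde U$-integration $\int_{U(N)}e^{\delta\Tr(\Lambda\tilde U+\Lambda\tilde U^*)}d\tilde U$ is of Leutwyler--Smilga/Brezin--Gross--Witten type and reduces, by a standard identity for one-sided unitary integrals with diagonal source, to (up to constants) $\det[y_i^{(j-1)/2}I_{j-1}(2\delta\sqrt{y_i})]_{i,j}/\Delta(y)$: this produces the first determinant in (\ref{TheMainJointProbabilityDensityFunction}). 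The remaining $\sigma_j$-integrations are handled by combining Vandermondes via $\Delta(\sigma^2)^2/\Delta(\sigma^{-2})=\Delta(\sigma^2)\prod_j\sigma_j^{2(N-1)}$ and applying the Andreev identity to $\Delta(\sigma^2)\det[e^{-\alpha y_i/\sigma_j^2}]$ against $\prod_j\sigma_j^{2\nu-1}e^{-\alpha\sigma_j^2}d\sigma_j$. Writing $\Delta(\sigma^2)=\det[\sigma_j^{2(i-1)}]_{i,j}$ reduces the multi-integral to $N!\det[\int_0^\infty\sigma^{2(i+\nu-1)-1}e^{-\alpha\sigma^2-\alpha y_k/\sigma^2}d\sigma]_{i,k}$, and the Mellin identity $\int_0^\infty\sigma^{2\kappa-1}e^{-\alpha\sigma^2-\alpha y/\sigma^2}d\sigma=y^{\kappa/2}K_\kappa(2\alpha\sqrt{y})$ with $\kappa=i+\nu-1$, together with a transposition, delivers the second determinant in (\ref{TheMainJointProbabilityDensityFunction}).

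The main technical obstacle is the evaluation of the one-sided $U(N)$ integral $\int e^{\delta\Tr(\Lambda\tilde U+\Lambda\tilde U^*)}d\tilde U$: unlike the two-sided HCIZ formula this admits no direct Harish-Chandra evaluation, and its reduction to a determinant of modified Bessel $I$ functions requires a separate identity. A secondary burden is the careful bookkeeping of all Jacobian factors, group volumes and normalizations to recover the explicit constant $Z_N$ in (\ref{ZN}); a good consistency check is the $\delta\to 0$ limit, in which $y^{(j-1)/2}I_{j-1}(2\delta\sqrt{y})\sim \delta^{j-1}y^{j-1}/(j-1)!$ and the first determinant collapses to a Vandermonde, reproducing the known formula for the product of two independent complex Gaussian matrices and absorbing the $\delta^{N(N-1)/2}\prod_j\Gamma(j)$ factors in (\ref{ZN}).
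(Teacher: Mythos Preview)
Your proposal is correct and follows essentially the same route as the paper: reduce the rectangular problem to a square one by splitting off the irrelevant block of $X_1$ (the paper does this via a block decomposition in its Lemma~\ref{LemmaReduction}, you via the SVD of $X_2$ directly), then perform the HCIZ integral over one unitary and the Leutwyler--Smilga integral over the other, and finally use Andr\'eief together with the Bessel-$K$ Mellin identity to integrate out the auxiliary singular values. The only part you leave as ``bookkeeping'' is the explicit evaluation of $Z_N$, which the paper obtains by one more application of Andr\'eief and the integral \cite[6.576.7]{GradshteinRyzhik} for $\int y^{(i+j+\nu)/2}I_i(2\delta\sqrt{y})K_{j+\nu}(2\alpha\sqrt{y})\,dy$.
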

Let us regard $\mu$ as a deformation parameter, and consider two interesting limits of the joint probability density function  $P(y_1,\ldots,y_N)$.
In the first limiting case the two Gaussian matrices become independent,
corresponding to $\mu\rightarrow 1$, with  
$\delta(\mu)\rightarrow 0$ and $\alpha(\mu)\rightarrow 1$.
This fact is obvious from the very definition of two $\mu$-dependent Gaussian complex matrices.
It can also be seen directly from the explicit formula for the joint probability density function  $P(y_1,\ldots,y_N)$,
equation (\ref{TheMainJointProbabilityDensityFunction}), as shown in Appendix \ref{limitA}:
\begin{equation}\label{FirstLimtingRelation}
\underset{\mu\rightarrow 1}{\lim}
P(y_1,\ldots,y_N)
=\frac{\det\left[y_i^{j-1}\right]_{i,j=1}^N
}{N!\prod\limits_{j=1}^N\Gamma(j)^2\Gamma(j+\nu)}
\det\left[G^{2,0}_{0,2}\left(\begin{array}{cc}
                - \\
               0,j+\nu-1
             \end{array}\biggl|y_i
\right)\right]_{i,j=1}^N.
\end{equation}
Here we have introduced the Meijer G-function (see e.g. Luke \cite{Luke})
\begin{equation}
G_{p,q}^{m,n}\left(\begin{array}{cccc}
                                                                a_1, & a_2, & \ldots, & a_p \\
                                                                b_1, & b_2, & \ldots, & b_q
                                                              \end{array}\biggl| z
\right)=\frac{1}{2\pi i}\int\limits_{C}\frac{\prod_{j=1}^m\Gamma(b_j-s)\prod_{j=1}^n\Gamma(1-a_j+s)}{\prod_{j=m+1}^q\Gamma(1-b_j+s)\prod_{j=n+1}^p\Gamma(a_j-s)}z^sds.
\label{G-def}
\end{equation}
An empty product is interpreted as unity, for the indices $m\geq q$, $n\geq p$. The contour of integration $C$ depends on the location of the poles of the Gamma functions, and we refer to the NIST handbook \cite{NIST} for details on the different possibilities. 
In particular  the following formula holds, see 9.34.3 in \cite{GradshteinRyzhik},
\begin{equation}\label{K=G}
G^{2,0}_{0,2}\left(\begin{array}{cc}
                - \\
               0,l
             \end{array}\biggl|y
\right)=2y^{\frac{l}{2}}K_l(2\sqrt{y}) .
\end{equation}
The right-hand side of equation (\ref{FirstLimtingRelation}) 
agrees with the joint probability density function  
of squared singular values
of two independent rectangular complex Ginibre matrices, see Akemann, Ipsen and Kieburg 
\cite{AkemannIpsenKieburg}, formulae (18) and (21).
Here we only consider the special case that the matrix $Y=X_1X_2$ is square, with $\nu=\nu_1$ and $\nu_2=0$ compared to there. We will need $Y$ to be square for the group integrals that we encounter in the derivation for general $\mu\in(0,1)$.

The second interesting limit is that of $\mu\rightarrow 0$. In this limit 
$\delta(\mu)$ and  $\alpha(\mu)$ diverge, 
and we obtain the joint density  equivalent to the classical Laguerre ensemble.
To find the limit of the joint probability density function as $\mu\rightarrow 0$ we use formula
(\ref{TheMainJointProbabilityDensityFunction}), and replace the modified Bessel functions inside the determinants by their large argument asymptotic expressions. A short calculation in Appendix \ref{limitA} yields 
\begin{equation}\label{SecondLimtingRelation}
\underset{\mu\rightarrow 0}{\lim}P(y_1,\ldots,y_N)
=\frac{2^{N(M-1)}}{N!\prod\limits_{j=1}^N\Gamma(j)\Gamma(j+\nu)}
\left(\det\left[y_i^{\frac{j-1}{2}}\right]_{i,j=1}^N\right)^2
\prod\limits_{i=1}^Ny_i^{\frac{\nu-1}{2}}\exp\left[-2y_i^{\frac{1}{2}}\right].
\end{equation}
Changing variables in equation (\ref{SecondLimtingRelation}),
\begin{equation}\label{change}
y_i\mapsto v_i=2y_i^{\frac{1}{2}},
\end{equation}
we obtain the joint probability density function of the classical Laguerre ensemble
$$
\frac{1}{N!\prod\limits_{j=1}^N\Gamma(j)\Gamma(j+\nu)}
\left(\det\left[v_i^{j-1}\right]_{i,j=1}^N\right)^2\prod\limits_{i=1}^Nv_i^\nu e^{-v_i},
$$
see Forrester \cite{ForresterLogGases}, Chapter 7. The change of variables is necessary because we started from the singular values of $Y=X_1X_1^*$ in this limit,  rather than of $X_1$ which is the single matrix with Gaussian distribution left in this limit.

We conclude that the 
product $X_1X_2$ of two $\mu$-dependent Gaussian complex matrices
represents an \textit{interpolating biorthogonal ensemble}.
It interpolates between the ensemble 
of two independent complex Gaussian matrices, and the Laguerre
ensemble of a single complex Gaussian matrix.

\subsection{Exact formulae for the correlation kernel}
Theorem \ref{PropositionTheMainJointProbabilityDensity} implies that the squared singular values $y_1$, $\ldots$, $y_N$ of
the product $X_1X_2$ of two $\mu$-dependent Gaussian complex matrices form a determinantal point process,
\begin{equation}\label{detPP}
P(y_1,\ldots,y_N)=
 \det\left[K_N(y_i,y_j)\right]_{i,j=1}^N\ .
\end{equation}
Here we present exact formulae for the correlation kernel of this process.
\begin{thm}\label{TheoremCorrelationKernelExact1}
The correlation kernel $K_N(x,y)$ of the determinantal point process formed by the squared singular values of  $X_1X_2$ is given by
\begin{equation}\label{K}
K_N(x,y)=\sum\limits_{n=0}^{N-1}P_n(x)Q_n(y),
\end{equation}
where the functions $P_{0}(x)$, $P_{1}(x)$, $\ldots$ are defined by
 \begin{equation}\label{P}
P_n(x)=(-1)^n(\nu+n)!n!\sum\limits_{k=0}^n\frac{\left(\alpha(\mu)^2-\delta(\mu)^2\right)^{k+\frac{1}{2}}}{\delta(\mu)^k}\frac{(-n)_k}{(\nu+k)!k!}
\,x^{\frac{k}{2}}I_k(2\delta(\mu)\sqrt{x}),
\end{equation}
and the functions
$Q_{0}(y)$, $Q_{1}(y)$, $\ldots$ are defined by
\begin{equation}\label{Q}
Q_{n}(y)=(-1)^n\frac{2}{(n!)^2}
\sum\limits_{l=0}^{n}\frac{\left(\alpha(\mu)^2-\delta(\mu)^2\right)^{l+\nu+\frac{1}{2}}}{\alpha(\mu)^{l+\nu}}
\frac{(-n)_l}{(\nu+l)!l!}\,y^{\frac{l+\nu}{2}}K_{l+\nu}(2\alpha(\mu)\sqrt{y}).
\end{equation}
\end{thm}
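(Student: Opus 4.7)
The plan is to reduce the claim to the biorthogonal ensemble framework of Borodin~\cite{BorodinBO} and then verify that the $(P_n)$ and $(Q_n)$ given in the statement form a biorthogonal pair. By Theorem~\ref{PropositionTheMainJointProbabilityDensity} the joint density has the product form $Z_N^{-1}\det[\phi_{j-1}(y_i)]\det[\psi_{j-1}(y_i)]$ with $\phi_k(x)=x^{k/2}I_k(2\delta(\mu)\sqrt{x})$ and $\psi_l(y)=y^{(l+\nu)/2}K_{l+\nu}(2\alpha(\mu)\sqrt{y})$, placing the squared singular values in Borodin's class. A standard argument then shows that the correlation kernel admits the representation $K_N(x,y)=\sum_{n=0}^{N-1}P_n(x)Q_n(y)$ for \emph{any} pair with $P_n\in\mathrm{span}\{\phi_0,\ldots,\phi_n\}$, $Q_n\in\mathrm{span}\{\psi_0,\ldots,\psi_n\}$ satisfying the biorthogonality $\int_0^\infty P_n(x)Q_m(x)\,dx=\delta_{nm}$. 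Because $(-n)_k$ vanishes for $k>n$, the formulae (\ref{P}) and (\ref{Q}) visibly lie in these spans, and the theorem is reduced to the biorthogonality check.

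The key computational input is the integral
\[
\int_0^\infty x^{\tfrac{k+l+\nu}{2}}\,I_k(2\delta(\mu)\sqrt{x})\,K_{l+\nu}(2\alpha(\mu)\sqrt{x})\,dx
=\frac{(k+l+\nu)!\,\delta(\mu)^k\,\alpha(\mu)^{l+\nu}}{2(\alpha(\mu)^2-\delta(\mu)^2)^{k+l+\nu+1}},
\]
which I would derive by the substitution $x=t^2$ and the classical Gradshteyn--Ryzhik~\cite{GradshteinRyzhik} formula 6.576.4 for $\int_0^\infty t^{-\lambda}K_\mu(at)I_\nu(bt)\,dt$ in terms of a Gauss hypergeometric function. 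With the parameters dictated by our integrand the second upper parameter and the lower parameter of the resulting ${}_2F_1$ coincide, so it collapses via ${}_2F_1(a,b;b;z)=(1-z)^{-a}$ to $(1-\delta(\mu)^2/\alpha(\mu)^2)^{-(k+l+\nu+1)}$, producing the displayed closed form after simplification.

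Substituting this into $\int P_n(x)Q_m(x)\,dx$, the intricate prefactors $(\alpha^2-\delta^2)^{k+1/2}/\delta^k$ in $P_n$ and $(\alpha^2-\delta^2)^{l+\nu+1/2}/\alpha^{l+\nu}$ in $Q_m$ cancel exactly against the $\mu$-dependent factors from the Bessel integral, leaving the purely combinatorial identity
\[
\int_0^\infty P_n(x)Q_m(x)\,dx=(-1)^{n+m}\frac{(\nu+n)!\,n!}{(m!)^2}\sum_{k=0}^n\sum_{l=0}^m\frac{(-n)_k(-m)_l\,(k+l+\nu)!}{(\nu+k)!(\nu+l)!\,k!\,l!}.
\]
I would evaluate this by summing over $k$ first: rewriting $(k+l+\nu)!/(\nu+k)!=(\nu+l)!(l+\nu+1)_k/[\nu!(\nu+1)_k]$ recognises the $k$-sum as a terminating ${}_2F_1(-n,l+\nu+1;\nu+1;1)$, which Chu--Vandermonde evaluates to $(-l)_n/(\nu+1)_n$. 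The surviving $(-l)_n$ vanishes for $l<n$, so the case $m<n$ is immediate; for $m\ge n$, the shift $l=n+j$ and the factorisation $(-m)_{n+j}=(-m)_n(n-m)_j$ collapse the remaining $l$-sum to $(-m)_n\sum_{j=0}^{m-n}(n-m)_j/j!$, the latter a terminating binomial sum equal to $(1-1)^{m-n}$, which is $0$ for $m>n$ and $1$ for $m=n$. Tracking the prefactors at $n=m$ and using $(-n)_n=(-1)^n n!$, $\nu!(\nu+1)_n=(\nu+n)!$ yields $\int P_n Q_n\,dx=1$, completing the biorthogonality.

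The main obstacle is the combinatorial collapse of the double sum: each individual ingredient (the Bessel integral, two Chu--Vandermonde evaluations) is classical, but the bookkeeping is delicate because one must choose the right order of summation, handle the case split $m<n$ versus $m\ge n$ cleanly, and check that the seemingly ad hoc coefficients in (\ref{P}) and (\ref{Q}) are calibrated so that every trace of $\mu$ disappears. Conceptually, \emph{constructing} such explicit $P_n,Q_n$ from scratch --- rather than verifying the given formulae --- is the genuinely non-trivial step; a natural route is to start from the known biorthogonal system of~\cite{AkemannIpsenKieburg} at $\mu=1$ (where $\delta(\mu)\to 0$ kills the $I_k$ sum to a single term) and deform in $\mu$ while preserving triangularity in the Bessel basis.
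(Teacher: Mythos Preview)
Your proof is correct, and it takes a genuinely different route from the paper. Both approaches start from Borodin's representation $K_N(x,y)=\sum_{k,l}c_{k,l}\psi_k(x)\varphi_l(y)$ with $C=G^{-1}$, and both use the same Bessel integral (this is exactly equation~(\ref{IntegralIK}) in the paper). The divergence is in how $C$ is handled. The paper \emph{computes} $C$: it factors out the $\mu$-dependent weights to reduce $G$ to the Hankel matrix $H_{N-1}=[(k+l+\nu)!]$, then inverts $H_{N-1}$ by recognising it as the moment matrix of the Laguerre weight and reading off the inverse from the Christoffel--Darboux kernel of Laguerre polynomials (Propositions~\ref{PropositionHankelInverse} and the following one). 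Only after obtaining $c_{k,l}$ explicitly does it split the result into the factors $P_n$ and $Q_n$. You, by contrast, take the formulae (\ref{P}) and (\ref{Q}) as given and \emph{verify} biorthogonality directly via two successive Chu--Vandermonde evaluations; the kernel formula then follows from the general fact that any biorthogonal triangular pair produces the correlation kernel.

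Your approach is shorter and more elementary --- it avoids the detour through Hankel matrix inversion and Laguerre orthogonal polynomials --- but, as you rightly note at the end, it is a verification rather than a derivation: it does not explain where (\ref{P}) and (\ref{Q}) come from. The paper's approach is constructive and makes the connection to classical orthogonal polynomials transparent, which is conceptually valuable (and is also how the authors actually \emph{found} the formulae). Both are valid proofs of the stated theorem.
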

Using explicit formulae for the functions $P_n(x)$ and $Q_n(x)$ (equations (\ref{P}) and (\ref{Q})) we derive the following formula for the correlation kernel $K_N(x,y)$
\begin{thm}\label{TheoremBorodinTypeFormula} The correlation kernel $K_{N}(x,y)$  can written as
\begin{equation}\label{KNMExplicitFormula1}
\begin{split}
K_{N}(x,y)=&\ 2\sum\limits_{k,l=0}^{N-1}\sum\limits_{i=0}^l
\frac{(-1)^{i+k}(\nu+N+i)!}{(N-1-k)!(\nu+k)!i!(l-i)!k!(\nu+i)!(\nu+k+i+1)}
\\
&\times\frac{\left(\alpha(\mu)^2-\delta(\mu)^2\right)^{k+l+\nu+1}}{\alpha(\mu)^{\nu+l}
\delta(\mu)^{k}}{x^{\frac{k}{2}}y^{\frac{l+\nu}{2}}I_k(2\delta(\mu)\sqrt{x})K_{l+\nu} (2\alpha(\mu)\sqrt{y})}\ .
\end{split}
\end{equation}
\end{thm}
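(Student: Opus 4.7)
The plan is to substitute the explicit formulae (\ref{P}) for $P_n(x)$ and (\ref{Q}) for $Q_n(y)$ directly into $K_N(x,y)=\sum_{n=0}^{N-1}P_n(x)Q_n(y)$, swap the order of summation so that the Bessel-function factors come outside, and then evaluate the remaining finite $n$-sum in closed form. When one multiplies $P_n(x)Q_n(y)$ the sign $(-1)^{2n}=1$ drops out and the scalar prefactor collapses to $2(\nu+n)!/n!$, giving a triple sum in $n,k,l$. After interchanging so that $k,l$ range over $\{0,\dots,N-1\}$ and $n$ over $\{\max(k,l),\dots,N-1\}$ inside, and using $(-n)_k=(-1)^k n!/(n-k)!$, all the $(k,l,N)$-dependence of the coefficient multiplying the generic term $x^{k/2}y^{(l+\nu)/2}I_k(2\delta(\mu)\sqrt{x})K_{l+\nu}(2\alpha(\mu)\sqrt{y})$ is concentrated in
\begin{equation*}
\mathcal{S}(k,l,N):=\sum_{n=\max(k,l)}^{N-1}\frac{(\nu+n)!}{n!}(-n)_k(-n)_l.
\end{equation*}

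Comparing with (\ref{KNMExplicitFormula1}), writing $\alpha,\delta$ for $\alpha(\mu),\delta(\mu)$ throughout, the theorem reduces to the finite hypergeometric identity
\begin{equation}\label{pfclaim}
\frac{\mathcal{S}(k,l,N)}{(\nu+l)!\,l!}=\frac{(-1)^k}{(N-1-k)!}\sum_{i=0}^{l}\frac{(-1)^{i}(\nu+N+i)!}{i!(l-i)!(\nu+i)!(\nu+k+i+1)}.
\end{equation}
The base case $l=0$ follows at once from the hockey-stick identity $\sum_{n=k}^{N-1}(\nu+n)!/(n-k)!=(\nu+N)!/((N-1-k)!(\nu+k+1))$ applied after pulling out $(\nu+k)!/(\nu+k)!=1$.

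The main obstacle is proving (\ref{pfclaim}) in general. The difficulty is that $\mathcal{S}(k,l,N)$ is manifestly symmetric in $k\leftrightarrow l$ while the right-hand side of (\ref{pfclaim}) is not, so no term-by-term match is possible and some global reorganization is unavoidable. The cleanest route appears to be induction on $N$: the increment $\mathcal{S}(k,l,N+1)-\mathcal{S}(k,l,N)$ is the single term $(-1)^{k+l}(\nu+N)!N!/((N-k)!(N-l)!)$, and the matching increment of the right-hand side of (\ref{pfclaim}) reduces to a terminating Chu--Vandermonde-type sum over $i$ that can be evaluated in closed form. An alternative is to use the integral representation $1/(\nu+k+i+1)=\int_0^1 t^{\nu+k+i}\,dt$ together with the operator identity $(\nu+j+t\partial_t)t^i=(\nu+i+j)t^i$ to rewrite the $i$-sum on the right of (\ref{pfclaim}) as $\frac{1}{l!}\int_0^1 t^{\nu+k}\prod_{j=1}^{N}(\nu+j+t\partial_t)(1-t)^l\,dt$, and then compare this integral with the Chu--Vandermonde expansion $\binom{n}{l}=\sum_{j}\binom{k}{j}\binom{n-k}{l-j}$ applied term-by-term inside $\mathcal{S}$. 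Once (\ref{pfclaim}) is in hand, substituting it back into the rearranged triple sum immediately produces the double-sum formula (\ref{KNMExplicitFormula1}).
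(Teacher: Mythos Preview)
Your reduction of the theorem to the finite identity \eqref{pfclaim} is exactly what the paper does (their Proposition~6.1 with $\alpha=\nu$ and $r=l$ is precisely your $\mathcal S(k,l,N)$ up to the sign $(-1)^{k+l}$), so the overall architecture is the same. There is a small structural wobble: you verify the case $l=0$ via the hockey-stick identity and then propose to induct on $N$; if you induct on $N$ the natural base case is $N=0$ (both sides vanish, using $1/(-m)!=0$), and the $l=0$ check is then a sanity check rather than the anchor. Your inductive step on $N$ does go through: in the increment of the right-hand side the factor $(\nu+k+i+1)$ in the denominator cancels against
\[
\frac{(\nu+N+i+1)!}{(N-k)!}-\frac{(\nu+N+i)!}{(N-1-k)!}=\frac{(\nu+N+i)!\,(\nu+k+i+1)}{(N-k)!},
\]
and what remains is the $l$-th forward difference at $i=0$ of the degree-$N$ polynomial $(\nu+i+1)_N$, which equals $\dfrac{N!}{(N-l)!}\,(\nu+l+1)_{N-l}$; this is the ``Chu--Vandermonde-type'' evaluation you allude to, and you should display it rather than leave it as an assertion.

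The paper instead inducts on $l$: from the defining sum one has $S(\alpha;k,l+1,N)=S(\alpha+1;k,l,N)-(\alpha+l+1)S(\alpha;k,l,N)$ (a shift of the parameter $\nu$ rather than of $N$), and one checks that the right-hand side of \eqref{pfclaim} satisfies the same recurrence by shifting the summation index $i\mapsto i+1$ in one of the two resulting sums. The base case for this induction is exactly your $l=0$ hockey-stick computation. Both routes are short; yours has the advantage that the increment on the $\mathcal S$-side is a single term, while the paper's has the advantage that your $l=0$ verification is the actual anchor rather than an aside.
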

Theorem \ref{TheoremBorodinTypeFormula} enables us to compare our biorthogonal ensemble with the family of the Laguerre-type biorthogonal ensembles introduced and studied
in Borodin \cite{BorodinBO}, Section 4. Also,  Theorem \ref{TheoremBorodinTypeFormula} can be used to investigate the transition of our biorthogonal ensemble to a Laguerre-type ensemble as $\mu$ approaches zero.
Consider the Laguerre-type ensemble defined by the right-hand side of equation (\ref{SecondLimtingRelation}). Using the same argument  as in Borodin \cite{BorodinBO},
Section 4, Theorem 4.1 we can write the correlation kernel $K^{\rm Lag}_N(x,y)$ of the ensemble from \cite{BorodinBO} with $\theta=1$ as
\begin{equation}\label{KNLaguerreExplicitFormula1}
K_{N}^{\rm Lag}(x,y)=\frac{e^{-x^{\frac{1}{2}}-y^{\frac{1}{2}}}}{x^{\frac{1}{4}}y^{\frac{1}{4}}}\sum\limits_{k,l=0}^{N-1}\sum\limits_{i=0}^l
\frac{(-1)^{i+k}(\nu+N+i)!2^{k+l+\nu}}{(N-1-k)!(\nu+k)!i!(l-i)!k!(\nu+i)!}
\frac{x^{\frac{k}{2}}y^{\frac{l+\nu}{2}}}{(\nu+k+i+1)}.
\\
\end{equation}
It is not hard to check using the asymptotic expressions for the modified Bessel functions of large arguments (see eq. (\ref{Bessasypm})) that the kernel $K_N(x,y)$ turns into a kernel  equivalent to
$K_{N}^{\rm Lag}(x,y)$ as $\mu\rightarrow 0$.\footnote{Two kernels $K(x,y)$ and $K'(x,y)$ are called equivalent
if $\det\left[K(x_i,x_j]\right)_{i,j=1}^m=\det\left[K'(x_i,x_j)\right]_{i,j=1}^m$, for any $m=1,2,\ldots$, for example $K'(x,y)=(f(x)/f(y))K(x,y)$. Thus two equivalent kernels define the same correlation functions.}\\

The subsequent asymptotic analysis requires a detailed investigation of the properties of the functions $P_n(x)$ and $Q_n(y)$ determining the correlation
kernel $K_N(x,y)$. In particular, we show that these functions satisfy the following biorthogonality condition.
\begin{prop}\label{Biorthogonality}
The functions  $P_n(x)$, $Q_n(x)$ defined by equations (\ref{P}) and (\ref{Q}) correspondingly satisfy the biorthogonality condition:
$$
\int\limits_{0}^{\infty}P_n(x)Q_m(x)dx=\delta_{n,m},\;\; n,m=0,1,2,\ldots .
$$
\end{prop}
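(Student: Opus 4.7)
The plan is to reduce the biorthogonality to a closed-form evaluation of a single Bessel integral, followed by a double hypergeometric sum that collapses via Chu--Vandermonde.

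\textbf{Step 1: Reduction to a Bessel integral.} Substituting the series expressions \eqref{P} and \eqref{Q} into $\int_0^\infty P_n(x) Q_m(x)\,dx$ and interchanging sum and integral, the integrals of products of the functions $x^{(k+l+\nu)/2} I_k(2\delta\sqrt{x}) K_{l+\nu}(2\alpha\sqrt{x})$ must be evaluated. After the substitution $t=\sqrt{x}$, this reduces to the Weber--Schafheitlin type integral
$$
\int_0^\infty t^{k+l+\nu+1} I_k(2\delta t)\,K_{l+\nu}(2\alpha t)\,dt ,
$$
which is computable in closed form via the standard formula (Gradshteyn--Ryzhik 6.576.4) and the reduction ${}_2F_1(a,c;c;z)=(1-z)^{-a}$. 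The resulting evaluation will be of the form
$$
\int_0^\infty x^{(k+l+\nu)/2} I_k(2\delta\sqrt{x}) K_{l+\nu}(2\alpha\sqrt{x})\,dx = \frac{\delta^k\,\alpha^{l+\nu}\,(k+l+\nu)!}{2\,(\alpha^2-\delta^2)^{k+l+\nu+1}},
$$
which I would also verify by inserting the Schl\"afli-type integral representation for $K_{l+\nu}$, performing the $x$-integration against $I_k$ via Weber's second exponential integral, and simplifying the resulting $t$-integral.

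\textbf{Step 2: Cancellation of $\mu$-dependence.} Substituting this value back, every power of $\alpha(\mu)$, $\delta(\mu)$, and $\alpha(\mu)^2-\delta(\mu)^2$ cancels, leaving the purely combinatorial double sum
$$
\int_0^\infty P_n(x) Q_m(x)\,dx \;=\; (-1)^{n+m}\,\frac{(\nu+n)!\,n!}{(m!)^2}\,\sum_{k=0}^n\sum_{l=0}^m \frac{(-n)_k (-m)_l (k+l+\nu)!}{k!\,l!\,(\nu+k)!\,(\nu+l)!}.
$$
This is the crucial identity to establish, and shows how the $\mu$-dependent biorthogonality collapses to a purely algebraic statement.

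\textbf{Step 3: Evaluating the double sum.} I would sum over $l$ first. Using $(k+l+\nu)!=(\nu+k)!\,(\nu+k+1)_l$ and $(\nu+l)!=\nu!\,(\nu+1)_l$, the inner sum becomes
$$
\frac{(\nu+k)!}{\nu!}\sum_{l=0}^m \frac{(-m)_l (\nu+k+1)_l}{l!\,(\nu+1)_l} \;=\; \frac{(\nu+k)!}{\nu!}\,{}_2F_1(-m,\nu+k+1;\nu+1;1),
$$
which by the Chu--Vandermonde identity equals $(\nu+k)!\,(-k)_m\big/\bigl(\nu!\,(\nu+1)_m\bigr)$. Since $(-k)_m=0$ for $k<m$, the remaining $k$-sum vanishes identically if $n<m$. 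For $n\ge m$, using $(-k)_m=(-1)^m k!/(k-m)!$ and the reindexing $k=m+j$, the sum over $j$ reduces to ${}_1F_0(-(n-m);-;1)=(1-1)^{n-m}$, which vanishes for $n>m$ and equals $1$ at $n=m$.

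\textbf{Step 4: Collecting constants.} At $n=m$, the remaining prefactor is $(-1)^m(-n)_n/(\nu!\,(\nu+1)_n)=n!/(\nu+n)!$, which multiplied by the overall coefficient $(\nu+n)!\,n!/(n!)^2$ yields exactly $1$. Hence $\int_0^\infty P_n Q_m\,dx=\delta_{n,m}$. The only nontrivial technical step is the Bessel integral in Step 1 (where bookkeeping of the constant $1/2$ must be done carefully, e.g. by testing against the known value $\int_0^\infty t\,K_0(t)\,dt=1$); once that is in hand the remainder is a routine hypergeometric collapse.
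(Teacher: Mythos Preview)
Your proof is correct, and it takes a genuinely different route from the paper's.

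The paper's argument is purely structural: it recalls from the proof of Theorem \ref{TheoremCorrelationKernelExact1} that the inverse Gram matrix $C=G^{-1}$ factors as $C=VW$, where $V$ and $W$ are the (triangular) transformation matrices satisfying $\mathbf{P}=V^T\mathbf{\Psi}$ and $\mathbf{Q}=W\mathbf{\Phi}$. From $C^{-1}=G=\int_0^\infty \mathbf{\Phi}\mathbf{\Psi}^T\,dx$ one gets $I=\bigl(\int \mathbf{\Phi}\mathbf{\Psi}^T\bigr)VW$; left-multiplying by $W$ and using $W\mathbf{\Phi}=\mathbf{Q}$, $\mathbf{\Psi}^TV=\mathbf{P}^T$ gives $W=\bigl(\int \mathbf{Q}\mathbf{P}^T\bigr)W$, hence $\int \mathbf{Q}\mathbf{P}^T=I$ by invertibility of $W$. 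No integral is actually computed; the biorthogonality is recognised as an automatic consequence of the construction.

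Your approach instead computes the integral directly: you invoke the Bessel integral (which is precisely the paper's formula \eqref{IntegralIK}), observe that all $\mu$-dependence cancels, and then collapse the remaining purely combinatorial double sum via Chu--Vandermonde and a binomial identity. This is more elementary and self-contained, as it does not appeal to the kernel construction or the factorisation $C=VW$; it also makes transparent \emph{why} the $\mu$-dependence disappears (the explicit powers of $\alpha,\delta,\alpha^2-\delta^2$ in $P_n$, $Q_m$ were chosen exactly to cancel against the Bessel integral). The paper's argument, by contrast, explains biorthogonality as a general feature of any such factorisation of the inverse Gram matrix, without ever touching the specific form of $\psi_j,\varphi_j$ beyond what was already used to build the kernel.
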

Moreover,  both $P_n(x)$ and $Q_n(y)$ satisfy
five term recurrence relations, and can be represented as contour integrals.
Namely, the following Proposition holds true.
\begin{prop}\label{PropositionRecurrenceRelations}
(a) For the functions $P_n(x)$ we have the following five term recurrence relation
\begin{equation}\label{xPn1}
xP_n(x)=a_{2,n}P_{n+2}(x)+a_{1,n}P_{n+1}(x)+a_{0,n}P_{n}(x)
+a_{-1,n}P_{n-1}(x)+a_{-2,n}P_{n-2}(x),
\end{equation}
where the coefficients $a_{2,n}$, $a_{1,n}$, $a_{0,n}$, $a_{-1,n}$, and $a_{-2,n}$
are given explicitly by
\begin{eqnarray}
\ \ \ \ \ \ \ \ a_{2,n}&=&\frac{1}{(n+2)(n+1)}\frac{\delta(\mu)^2}{\left(\alpha(\mu)^2-\delta(\mu)^2\right)^2},\\
a_{1,n}&=&\frac{1}{\alpha(\mu)^2-\delta(\mu)^2}+\frac{2(2n+\nu+2)}{(n+1)}\frac{\delta(\mu)^2}{\left(\alpha(\mu)^2-\delta(\mu)^2\right)^2},\\
a_{0,n}&=&\frac{3n^2+2\nu n+3n+\nu+1}{\alpha(\mu)^2-\delta(\mu)^2}+(6n^2+6n\nu+\nu^2+6n+3\nu+2)\frac{\delta(\mu)^2}{\left(\alpha(\mu)^2-\delta(\mu)^2\right)^2},\\
a_{-1,n}&=&\frac{n^2(n+\nu)(3n+\nu)}{\alpha(\mu)^2-\delta(\mu)^2}+2n^2(\nu+n)(2n+\nu)\frac{\delta(\mu)^2}{\left(\alpha(\mu)^2-\delta(\mu)^2\right)^2},\\
\label{xPn6}
a_{-2,n}&=&
(\nu+n)(\nu+n-1)n^2(n-1)^2\frac{\alpha(\mu)^2}{\left(\alpha(\mu)^2-\delta(\mu)^2\right)^2}.
\end{eqnarray}
(b) For the functions $Q_n(y)$ we have the following five term recurrence relation
\begin{equation}\label{yQn1}
yQ_n(y)=b_{2,n}Q_{n+2}(y)+b_{1,n}Q_{n+1}(y)+b_{0,n}Q_{n}(y)
+b_{-1,n}Q_{n-1}(y)+b_{-2,n}Q_{n-2}(y),
\end{equation}
where the coefficients $b_{2,n}$, $b_{1,n}$, $b_{0,n}$, $b_{-1,n}$, and $b_{-2,n}$ are given explicitly by
\begin{eqnarray}
\ \ \ \ \ \ \ \ b_{2,n}&=&(\nu+n+2)(\nu+n+1)(n+2)^2(n+1)^2\frac{\alpha(\mu)^2}{\left(\alpha(\mu)^2-\delta(\mu)^2\right)^2},\\
b_{1,n}&=&-\frac{(n+1)^2(n+\nu+1)^2}{\alpha(\mu)^2-\delta(\mu)^2}
+2(2n+\nu+2)(n+\nu+1)(n+1)^2\frac{\alpha(\mu)^2}{\left(\alpha(\mu)^2-\delta(\mu)^2\right)^2},\\
b_{0,n}&=&-\frac{(n+\nu)^2+2(n+1)(n+\nu)+n+1}{\alpha(\mu)^2-\delta(\mu)^2}\\
&&+\left((n+\nu)(5n+\nu+3)+n(n+3)+2\right)\frac{\alpha(\mu)^2}{\left(\alpha(\mu)^2-\delta(\mu)^2\right)^2},\nonumber\\
b_{-1,n}&=&-\frac{(3n+2\nu)}{n}\frac{1}{\alpha(\mu)^2-\delta(\mu)^2}
+\frac{2(2n+\nu)}{n}\frac{\alpha(\mu)^2}{\left(\alpha(\mu)^2-\delta(\mu)^2\right)^2},\\
b_{-2,n}&=&
\frac{1}{n(n-1)}\frac{\delta(\mu)^2}{\left(\alpha(\mu)^2-\delta(\mu)^2\right)^2}.
\label{yQn6}
\end{eqnarray}
\end{prop}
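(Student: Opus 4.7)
The plan is to combine a simple three-term identity satisfied by the underlying Bessel building blocks with the biorthogonality of Proposition~\ref{Biorthogonality}, and then extract the five coefficients by matching expansions in a triangular basis.

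Set $f_k(x):=x^{k/2}I_k(2\delta(\mu)\sqrt{x})$ and $h_l(y):=y^{(l+\nu)/2}K_{l+\nu}(2\alpha(\mu)\sqrt{y})$, so that $P_n=\sum_{k=0}^{n}\gamma_{n,k}f_k$ and $Q_n=\sum_{l=0}^{n}\eta_{n,l}h_l$ with nonzero leading coefficients $\gamma_{n,n},\eta_{n,n}$. A short computation from the contiguous identities $I_{k-1}(z)-I_{k+1}(z)=(2k/z)I_k(z)$ and $K_{\kappa+1}(z)-K_{\kappa-1}(z)=(2\kappa/z)K_\kappa(z)$ yields the clean relations
$$xf_k(x)=\frac{k+1}{\delta(\mu)}f_{k+1}(x)+f_{k+2}(x),\qquad yh_l(y)=h_{l+2}(y)-\frac{l+\nu+1}{\alpha(\mu)}h_{l+1}(y).$$

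Since the triangularity gives $\mathrm{span}(P_0,\ldots,P_m)=\mathrm{span}(f_0,\ldots,f_m)$ and analogously for $Q$/$h$, these two identities force $xP_n\in\mathrm{span}(P_0,\ldots,P_{n+2})$ and $yQ_n\in\mathrm{span}(Q_0,\ldots,Q_{n+2})$. Writing $xP_n=\sum_m a_{m,n}P_m$ and applying the biorthogonality of Proposition~\ref{Biorthogonality} in both directions,
$$a_{m,n}=\int_0^\infty xP_n(x)Q_m(x)\,dx=\int_0^\infty P_n(x)\bigl(xQ_m(x)\bigr)\,dx,$$
the first representation forces $a_{m,n}=0$ for $m>n+2$ (since $xP_n$ is a combination of $P_0,\ldots,P_{n+2}$ and $\int P_l Q_m=\delta_{l,m}$), while the second forces $a_{m,n}=0$ for $m<n-2$ (since $xQ_m$ is a combination of $Q_0,\ldots,Q_{m+2}$). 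This establishes the five-term structure of (\ref{xPn1}); the derivation of (\ref{yQn1}) is identical with the roles of $P$ and $Q$ interchanged.

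It remains to identify the five nonzero coefficients. Substituting the $f$-expansion of $P_n$ into $xP_n=\sum_{j=-2}^{2}a_{j,n}P_{n+j}$ and using the $xf_k$ identity, the coefficient of $f_{n+2-s}$ on the left equals $\frac{n+1-s}{\delta(\mu)}\gamma_{n,n+1-s}+\gamma_{n,n-s}$ (with $\gamma_{n,k}=0$ outside $0\leq k\leq n$), while on the right only $P_{n+2-s},\ldots,P_{n+2}$ contribute. Using the closed form
$$\gamma_{n,k}=(-1)^n(\nu+n)!\,n!\,\frac{(\alpha(\mu)^2-\delta(\mu)^2)^{k+1/2}(-n)_k}{\delta(\mu)^k(\nu+k)!\,k!},$$
the matching at levels $k=n+2,n+1,n,n-1,n-2$ gives a triangular $5\times 5$ linear system that is solved sequentially: the top level immediately yields $a_{2,n}=\delta(\mu)^2/((n+1)(n+2)(\alpha(\mu)^2-\delta(\mu)^2)^2)$, and each lower level determines one new coefficient. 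The $b_{j,n}$ are produced by the same procedure starting from $yh_l=h_{l+2}-\frac{l+\nu+1}{\alpha(\mu)}h_{l+1}$ and the analogous triangular coefficients $\eta_{n,l}$; the asymmetry $\alpha(\mu)\leftrightarrow\delta(\mu)$ between $a_{j,n}$ and $b_{-j,n}$ reflects the opposite orientations of the two three-term identities. The main obstacle is the algebraic simplification at levels $k=n-1$ and $k=n-2$: Pochhammer symbols $(-n)_k$ and ratios of factorials coming from five neighbouring $P_{n+j}$ have to be collected into the compact polynomial expressions displayed in (\ref{xPn1})--(\ref{yQn6}). This is mechanical but bookkeeping-heavy, and the automatic vanishing at levels $k<n-2$ (guaranteed by the biorthogonality argument) provides a useful internal consistency check.
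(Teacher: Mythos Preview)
Your argument is correct and rests on the same underlying fact as the paper's proof---the three-term recurrences for the normalised Bessel building blocks---but your mechanism for passing from these to the five-term recurrences for $P_n$ and $Q_n$ is different. The paper never invokes biorthogonality here: it writes $\mathbf{P}=\mathcal{V}\mathbf{\hat I}$ with an explicit lower-triangular transition matrix $\mathcal{V}$, computes $\mathcal{V}^{-1}$ in closed form, and then evaluates the conjugation $R_P=\mathcal{V}E\mathcal{V}^{-1}$ entry by entry using four dedicated summation identities (collected in a Lemma). That route yields the bandwidth five only \emph{a posteriori}, as a consequence of those identities collapsing to a handful of Kronecker deltas.

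Your route is arguably more conceptual: you use Proposition~\ref{Biorthogonality} in both directions to see \emph{a priori} that the recurrence matrix is five-banded, and then you need only solve a $5\times5$ triangular system at the top levels $k=n+2,\ldots,n-2$ instead of handling an infinite matrix product. The price is that you still owe the reader the explicit solution of that triangular system for both $a_{j,n}$ and $b_{j,n}$, which---as you acknowledge---is where all the algebra sits. The paper's combinatorial Lemma packages the same algebra in a reusable form, but at the cost of first inverting $\mathcal{V}$. Both executions are valid and roughly comparable in length once written out in full.
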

Note that
the recurrence coefficients are related as
\begin{equation}\label{RelationRecurrenceCoefficients}
a_{2,n}=b_{-2,n+2},\;a_{1,n}=b_{-1,n+1},\;a_{0,n}=b_{0,n},\;a_{-1,n}=b_{1,n-1},\;a_{-2,n}=b_{2,n-2}.
\end{equation}
This follows from  the biorthogonality of the functions $P_n(x)$ and $Q_n(y)$, see Proposition \ref{Biorthogonality}.
Equation (\ref{RelationRecurrenceCoefficients})
can be  checked directly as well using the formulas in Proposition \ref{PropositionRecurrenceRelations} for the recurrence coefficients.

Using the recurrence relations stated in Proposition \ref{PropositionRecurrenceRelations} we derive the following Christoffel-Darboux
type formula for the correlation kernel $K_N(x,y)$.
\begin{thm}\label{TheoremChristoffelDarboux}
The Christoffel-Darboux type formula for the correlation kernel $K_{N}(x,y)$ valid for $N\geq2$ and $x\neq y$ is given by 
\begin{equation}\label{CDKernel}
\begin{split}
K_{N}(x,y)=-\frac{a_{-2,N}P_{N-2}(x)Q_{N}(y)+a_{-2,N+1}P_{N-1}(x)Q_{N+1}(y)+a_{-1,N}P_{N-1}(x)Q_{N}(y)}{x-y}\\
+\frac{a_{1,N-1}P_{N}(x)Q_{N-1}(y)+a_{2,N-2}P_{N}(x)Q_{N-2}(y)+a_{2,N-1}P_{N+1}(x)Q_{N-1}(y)}{x-y},
\end{split}
\end{equation}
where the coefficients $a_{-2,N}$, $a_{-1,N}$, $a_{1,N}$ and $a_{2,N}$ are given by Proposition \ref{PropositionRecurrenceRelations}.
\end{thm}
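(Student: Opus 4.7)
The plan is to imitate the classical Christoffel-Darboux derivation, but starting from a \emph{five}-term recurrence rather than a three-term one. Beginning from the definition $K_N(x,y)=\sum_{n=0}^{N-1}P_n(x)Q_n(y)$, I would multiply by $x-y$ and write
\[
(x-y)K_N(x,y)=\sum_{n=0}^{N-1}\bigl[xP_n(x)\bigr]Q_n(y)-\sum_{n=0}^{N-1}P_n(x)\bigl[yQ_n(y)\bigr],
\]
then substitute the two recurrences of Proposition \ref{PropositionRecurrenceRelations} on the right-hand side. This yields ten single sums, one for each index shift $\pm2,\pm1,0$ in each recurrence.

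Next I would pair the sums according to the symmetry (\ref{RelationRecurrenceCoefficients}): the $a_{0,n}P_nQ_n$ and $b_{0,n}P_nQ_n$ sums cancel immediately because $a_{0,n}=b_{0,n}$; the sum with coefficient $a_{2,n}$ (term $P_{n+2}(x)Q_n(y)$) should be re-indexed by $m=n-2$ against the sum with coefficient $b_{-2,n}$ (term $P_n(x)Q_{n-2}(y)$), using $a_{2,m}=b_{-2,m+2}$, so that the common range $0\le m\le N-3$ telescopes; the same trick pairs $(a_{1,n},b_{-1,n})$, $(a_{-1,n},b_{1,n})$, and $(a_{-2,n},b_{2,n})$ via the remaining identities in (\ref{RelationRecurrenceCoefficients}). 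In each pairing only the boundary terms survive: the $(a_{2},b_{-2})$ pairing leaves $a_{2,N-2}P_N(x)Q_{N-2}(y)+a_{2,N-1}P_{N+1}(x)Q_{N-1}(y)$ from the upper end and $-a_{-2,N}P_{N-2}(x)Q_N(y)-a_{-2,N+1}P_{N-1}(x)Q_{N+1}(y)$ from the lower end, while the $(a_{1},b_{-1})$ pairing leaves $+a_{1,N-1}P_N(x)Q_{N-1}(y)$ and the $(a_{-1},b_{1})$ pairing leaves $-a_{-1,N}P_{N-1}(x)Q_N(y)$.

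Here I would be careful with the lower end, using the convention $P_{-1}=P_{-2}=0$ and $Q_{-1}=Q_{-2}=0$ so that the ``missing'' terms in the reindexed sums vanish and no spurious boundary terms appear at $n=0$; the hypothesis $N\geq2$ is exactly what ensures that the six surviving terms are all genuine, i.e.\ their indices are nonnegative. Collecting everything and dividing by $x-y\neq 0$ produces (\ref{CDKernel}).

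The only real obstacle is combinatorial bookkeeping of the ten sums and verification that the re-indexing matches up correctly with the coefficient symmetries; there is no analytic difficulty. As a sanity check I would, at the end, use Proposition \ref{Biorthogonality} to verify the identity in the limit $y\to x$ (after differentiating), since the biorthogonality already forced the recurrence coefficients to satisfy (\ref{RelationRecurrenceCoefficients}), which is precisely what makes the telescoping close up.
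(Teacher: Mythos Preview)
Your proposal is correct and follows essentially the same telescoping argument as the paper: multiply $K_N(x,y)$ by $x-y$, insert the two five-term recurrences, use the coefficient symmetries (\ref{RelationRecurrenceCoefficients}) together with the convention $P_{-1}=P_{-2}=Q_{-1}=Q_{-2}=0$ to collapse the sums, and divide by $x-y$. One minor expository slip: the boundary terms $-a_{-2,N}P_{N-2}(x)Q_N(y)-a_{-2,N+1}P_{N-1}(x)Q_{N+1}(y)$ actually come from the $(a_{-2},b_{2})$ pairing rather than from the ``lower end'' of the $(a_{2},b_{-2})$ pairing, but the collected result is right.
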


The next Proposition gives contour integral representations for the functions $P_n(x)$ and $Q_n(y)$.
\begin{prop}\label{PropositionContourIntegralRepresentationsPQ}
(a) The following contour integral representation for the function $P_n(x)$ holds:
\begin{equation}\label{ContourIntegralRepresentationsPn}
\begin{split}
P_n(x)=&\frac{1}{2\pi i}(\nu+n)!(n!)^2\left(\alpha(\mu)^2-\delta(\mu)^2\right)^{\frac{1}{2}}\\
&\times\oint\limits_{\Sigma}\frac{\Gamma(t-n)\left(\alpha(\mu)^2-\delta(\mu)^2\right)^tx^t}{\left(\Gamma(t+1)\right)^2\Gamma(t+\nu+1)}
{}_0F_1\left(\begin{array}{c}
                - \\
               t+1
             \end{array}\biggl|\delta(\mu)^2x
\right)dt,
\end{split}
\end{equation}
where $\Sigma$ is a closed contour that encircles $0$, $1$, $\ldots$, $n$ once in positive direction, $n=0,1,\ldots$, and $x>0$.\\
(b) The following contour integral representation for the function $Q_n(y)$ is true:
\begin{equation}\label{ContourIntegralRepresentationsQn}
\begin{split}
Q_n(y)=&\frac{1}{2\pi i(n!)^2(n+\nu)!}\left(1-\frac{\delta(\mu)^2}{\alpha(\mu)^2}\right)^{\nu}\left(\alpha(\mu)^2-\delta(\mu)^2\right)^{\frac{1}{2}}\\
&\times\int\limits_{c-i\infty}^{c+i\infty}\frac{\Gamma^2(s)\Gamma(s+\nu)}{\Gamma(s-n)}
{}_2F_1\left(\begin{array}{c}
               -n,  \nu+s \\
               s-n
             \end{array}
             \biggl|\frac{\delta(\mu)^2}{\alpha(\mu)^2}\right)\left(\alpha(\mu)^2y\right)^{-s}ds,
\end{split}
\end{equation}
where $c>0$, $n=0,1,\ldots$, and $y>0$.
\end{prop}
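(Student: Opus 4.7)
My plan is to evaluate the right-hand side of (\ref{ContourIntegralRepresentationsPn}) by the residue theorem. The factor $\left[\Gamma(t+1)\right]^{-2}\Gamma(t+\nu+1)^{-1}$ is holomorphic in a neighbourhood of $\Sigma$ together with its interior, so the only singularities inside $\Sigma$ are the simple poles of $\Gamma(t-n)$ at $t=0,1,\ldots,n$, whose residues are $(-1)^{n-k}/(n-k)!$. The integral therefore evaluates to
\[\sum_{k=0}^{n}\frac{(-1)^{n-k}}{(n-k)!}\cdot\frac{(\alpha(\mu)^2-\delta(\mu)^2)^k x^k}{(k!)^2(k+\nu)!}\,{}_0F_1\!\left(\begin{array}{c}-\\k+1\end{array}\biggl|\delta(\mu)^2 x\right).\]
I would then use the series (\ref{BesselFunctionI}) to check the identity $x^{k/2}I_k(2\delta(\mu)\sqrt{x})=(\delta(\mu)^k x^k/k!)\,{}_0F_1(;k+1;\delta(\mu)^2 x)$ together with the Pochhammer relation $(-n)_k=(-1)^k n!/(n-k)!$; multiplication by the prefactor $(\nu+n)!(n!)^2(\alpha(\mu)^2-\delta(\mu)^2)^{1/2}$ then reproduces (\ref{P}) term by term.

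\textbf{Part (b).} The starting point is a Mellin--Barnes representation of $K_{l+\nu}$. From (\ref{K=G}) and the definition (\ref{G-def}) of the Meijer $G$-function with $m=2$, $n=p=0$, $q=2$, followed by the substitution $s\mapsto -s$ in the Barnes contour, one obtains
\[y^{\frac{l+\nu}{2}}K_{l+\nu}(2\alpha(\mu)\sqrt{y})=\frac{1}{2\alpha(\mu)^{l+\nu}}\cdot\frac{1}{2\pi i}\int_{c-i\infty}^{c+i\infty}\Gamma(s)\,\Gamma(s+l+\nu)\,(\alpha(\mu)^2 y)^{-s}\,ds,\]
valid for any $c>0$. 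Substituting this into (\ref{Q}) and pulling the (finite) sum over $l$ inside the integral, the task reduces to recognising the inner sum as a ${}_2F_1$. Using $\Gamma(l+\nu+s)=\Gamma(\nu+s)(\nu+s)_l$ and $(\nu+l)!=\nu!(\nu+1)_l$ one finds
\[\sum_{l=0}^n\left(\frac{\alpha(\mu)^2-\delta(\mu)^2}{\alpha(\mu)^2}\right)^l\frac{(-n)_l\,\Gamma(l+\nu+s)}{(\nu+l)!\,l!}=\frac{\Gamma(\nu+s)}{\nu!}\,{}_2F_1\!\left(-n,\nu+s;\nu+1;1-\tfrac{\delta(\mu)^2}{\alpha(\mu)^2}\right).\]
The final step is the Kummer connection formula, which for a terminating series takes the simple form
\[{}_2F_1(-n,b;c;1-w)=\frac{(c-b)_n}{(c)_n}\,{}_2F_1(-n,b;b-c-n+1;w);\]
applying this with $b=\nu+s$, $c=\nu+1$ converts the argument from $1-\delta(\mu)^2/\alpha(\mu)^2$ to $\delta(\mu)^2/\alpha(\mu)^2$ and the lower parameter from $\nu+1$ to $s-n$. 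The elementary simplifications $(1-s)_n=(-1)^n\Gamma(s)/\Gamma(s-n)$ and $(\nu+1)_n=(\nu+n)!/\nu!$ then produce (\ref{ContourIntegralRepresentationsQn}), the two signs $(-1)^n$ cancelling.

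\textbf{The main obstacle} is the transformation step in (b): the Mellin--Barnes approach produces a ${}_2F_1$ whose argument $1-\delta(\mu)^2/\alpha(\mu)^2$ and lower parameter $\nu+1$ both differ from what the statement requires, and it is the terminating Kummer connection formula that repairs both discrepancies simultaneously. Everything else --- the residue computation in (a) and the straightforward bookkeeping with Pochhammer symbols --- is routine.
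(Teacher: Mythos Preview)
Your proposal is correct and follows the paper's route: part (b) is identical (the paper also lands on ${}_2F_1(-n,\nu+s;\nu+1;1-\delta(\mu)^2/\alpha(\mu)^2)$ and invokes Gradshteyn--Ryzhik 9.131.2, which for a terminating upper parameter is exactly your Kummer connection formula). In part (a) the paper works in the opposite direction---expanding $I_k$ as a power series, regrouping the double sum into a ${}_1F_2$, and then writing ${}_1F_2$ as a residue sum---but this is the same residue computation you perform, merely read backwards; your choice to keep the ${}_0F_1$ factor intact is slightly more direct.
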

Finally, we  state  that as a consequence the correlation kernel $K_N(x,y)$ admits a double contour integral representation.
\begin{thm}\label{TheoremKContour}The correlation kernel, $K_{N}(x,y)$, can be written as
\begin{equation}
K_{N}(x,y)=\sum\limits_{k=0}^{N-1}K_{N}^{(k)}(x,y)\left(\frac{\delta(\mu)}{\alpha(\mu)}\right)^k,
\end{equation}
where
\begin{equation}\label{KorrelationKernelContourIntegral}
\begin{split}
K_{N}^{(k)}(x,y)=&
\sum\limits_{m=0}^k\frac{(-1)^m}{(2\pi i)^2}\left(\begin{array}{c}
                                        N \\
                                        m
                                      \end{array}
\right)\oint\limits_{\Sigma}dt\int\limits_{c-i\infty}^{c+i\infty}ds\frac{\Gamma^2(s)\Gamma(s+\nu+k)\Gamma(s-t+m-1)\Gamma(t-N+1)}{\Gamma^2(t+1)\Gamma(t+\nu+1)
\Gamma(s-N+m)\Gamma(s-t+k)}\\
&\times\left(1-\frac{\delta(\mu)^2}{\alpha(\mu)^2}\right)^{\nu}
\left(\alpha(\mu)^2-\delta(\mu)^2\right)^{t+1} x^t\ 
{}_0F_1\left(\begin{array}{c}
                - \\
               t+1
             \end{array}\biggl|\delta(\mu)^2x
\right)\left(\alpha(\mu)^2y\right)^{-s}.
\end{split}
\end{equation}
The contour $\Sigma$ is chosen in the same way as in Proposition \ref{PropositionContourIntegralRepresentationsPQ}, and $c>0$.
\end{thm}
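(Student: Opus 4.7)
The plan is to substitute the contour-integral representations of Proposition \ref{PropositionContourIntegralRepresentationsPQ} for $P_n(x)$ and $Q_n(y)$ into the kernel sum of Theorem \ref{TheoremCorrelationKernelExact1}, $K_N(x,y) = \sum_{n=0}^{N-1} P_n(x) Q_n(y)$, and then perform the summation over $n$ explicitly to obtain the stated double-contour representation.

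A first simplification exploits that the $n$-dependent normalising prefactors cancel exactly: $(\nu+n)!(n!)^2$ from $P_n$ against $1/[(n!)^2(n+\nu)!]$ from $Q_n$. Consequently $P_n(x)Q_n(y)$ is the double integral of an $n$-independent kernel $\Phi(t,s;x,y)$ times the $n$-dependent factor $\Gamma(t-n)/\Gamma(s-n)\cdot{}_2F_1(-n,\nu+s;s-n;\delta^2/\alpha^2)$. In Proposition \ref{PropositionContourIntegralRepresentationsPQ}(a) the contour $\Sigma$ only needs to enclose $\{0,\ldots,n\}$; since the remaining $t$-integrand is entire in $t$, I may enlarge $\Sigma$ once and for all to a fixed contour enclosing $\{0,\ldots,N-1\}$, which is the $\Sigma$ of Theorem \ref{TheoremKContour}. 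This legitimises interchanging the finite $n$-sum with the two integrals.

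I would then expand the terminating hypergeometric as a finite polynomial in $\delta^2/\alpha^2$,
\begin{equation*}
{}_2F_1\!\left(-n,\nu+s;s-n;\tfrac{\delta^2}{\alpha^2}\right)=\sum_{k=0}^n \frac{(-n)_k(\nu+s)_k}{(s-n)_k\,k!}\left(\frac{\delta^2}{\alpha^2}\right)^k,
\end{equation*}
rewrite $(\nu+s)_k=\Gamma(s+\nu+k)/\Gamma(s+\nu)$ and $(s-n)_k=\Gamma(s-n+k)/\Gamma(s-n)$ so that the $\Gamma(s-n)$ in the $Q_n$-integrand cancels, and use $(-n)_k/k!=(-1)^k\binom{n}{k}$. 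Interchanging $\sum_{n=0}^{N-1}\sum_{k=0}^n=\sum_{k=0}^{N-1}\sum_{n=k}^{N-1}$ isolates the coefficient of the $k$-th power of $\delta^2/\alpha^2$ and reduces the theorem to the evaluation of the single inner sum
\begin{equation*}
S_k(t,s)=\sum_{n=k}^{N-1}\binom{n}{k}\,\frac{\Gamma(t-n)}{\Gamma(s-n+k)},
\end{equation*}
which must be identified with the $\binom{N}{m}$-weighted Gamma-ratio expression that appears in Theorem \ref{TheoremKContour}.

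The main obstacle is precisely this identification, a combinatorial identity for ratios of Gamma functions. One attack goes through a Beta-function representation $\Gamma(t-n)/\Gamma(s-n+k)=B(t-n,s-t+k)/\Gamma(s-t+k)$, valid on a suitable strip of the contours and then extended by analyticity; this reduces $S_k(t,s)$ to a single Beta-type integral of the truncated geometric-like sum $\sum_{j=0}^{N-1-k}\binom{k+j}{k}u^{-j}$, and expressing the tail of this partial sum via Pascal's rule produces the required $\sum_{m=0}^k(-1)^m\binom{N}{m}$ structure, together with the characteristic combination $\Gamma(s-t+m-1)\Gamma(t-N+1)/[\Gamma(s-N+m)\Gamma(s-t+k)]$. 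A parallel, purely algebraic route is induction on $N$, in which one checks that both sides of the proposed identity receive the same increment when $N$ is replaced by $N+1$, using Pascal's identity for the $\binom{N}{m}$ coefficients together with the shift $\Gamma(x+1)=x\Gamma(x)$. Once $S_k(t,s)$ is in the desired form, collecting the remaining prefactors $(\alpha^2-\delta^2)^{t+1}$, $(1-\delta^2/\alpha^2)^\nu$, $1/[\Gamma^2(t+1)\Gamma(t+\nu+1)]$ and the surviving factors from $\Phi$ reproduces the integrand of $K_N^{(k)}(x,y)$ exactly as stated in Theorem \ref{TheoremKContour}.
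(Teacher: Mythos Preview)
Your approach is essentially the paper's: insert the contour representations of Proposition~\ref{PropositionContourIntegralRepresentationsPQ} into $K_N=\sum_n P_nQ_n$, expand the terminating ${}_2F_1$, swap the order of summation, and reduce everything to a closed-form evaluation of the inner $n$-sum $S_k(t,s)=\sum_{n=k}^{N-1}\binom{n}{k}\Gamma(t-n)/\Gamma(s-n+k)$. The paper establishes the required identity (its Lemma, equation~(\ref{CombinatorialLemma})) via a recurrence in $k$ derived from a telescoping relation, rather than by a Beta-integral trick or induction on $N$; either of your suggested routes is plausible, so this is a minor difference of technique.

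There is, however, one point you have glossed over. The closed form for $S_k(t,s)$ is \emph{not} exactly the $\binom{N}{m}$-weighted Gamma-ratio sum appearing in the theorem: the identity carries an additional boundary term proportional to
\[
\frac{\Gamma(t+1)\,\Gamma(s-t-1)}{\Gamma(s)\,\Gamma(s-t+k)}.
\]
The paper notes explicitly that this extra term does not contribute to the double contour integral. The reason is that once it is multiplied by the remaining factor $x^t\,{}_0F_1(-;t+1;\delta^2 x)/[\Gamma^2(t+1)\Gamma(t+\nu+1)]$, the $\Gamma(t+1)$ cancels and the resulting $t$-integrand is holomorphic inside $\Sigma$ (the only possible poles sit at $t=s-1,\dots,s+k-1$, which lie outside $\Sigma$ for a suitable choice of $c$), so the $t$-integral vanishes by Cauchy's theorem. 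You should expect this boundary term to appear in either of your proposed derivations of the identity, and you need to dispose of it separately; as stated, your claim that $S_k(t,s)$ equals the $\binom{N}{m}$-sum on the nose is not correct as a bare algebraic identity.
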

As $\mu\rightarrow 1$, the biorthogonal ensemble defined by equation (\ref{TheMainJointProbabilityDensityFunction}) turns into
that for the squared singular values of the product of  two matrices with independent complex Gaussian entries, see equation (\ref{FirstLimtingRelation}).
The biorthogonal ensemble for the squared singular values of products of $M$ matrices with independent complex Gaussian entries was
studied in \cite{AkemannKieburgWei,AkemannIpsenKieburg}. As $\mu\rightarrow 1$, the functions $P_n(x)$ and $Q_n(y)$ defined by equations
(\ref{P}) and (\ref{Q}) turn into the biorthogonal polynomials and their normalised dual functions there, see equations (43) and (47) in  \cite{AkemannIpsenKieburg}, resepctively.
Furthermore, our Propositions \ref{PropositionRecurrenceRelations} and \ref{PropositionContourIntegralRepresentationsPQ} are extensions
of the results obtained by  Kuijlaars and Zhang, see their Proposition 3.2 and formula (3.6), and the recurrence relations in Section 4 of Kuijlaars and Zhang \cite{KuijlaarsZhang}.
As $\mu\rightarrow 1$, the formulae for the correlation kernel $K_N(x,y)$ given in Theorem \ref{TheoremKContour} turn into the double integral formula of Proposition 5.1
in Kuijlaars and Zhang \cite{KuijlaarsZhang}.
\subsection{The hard edge scaling limit of the correlation kernel}
We use the Christoffel-Darboux type formula for the correlation kernel $K_N(x,y)$ given by Theorem  \ref{TheoremChristoffelDarboux},
and the contour integral representations for the functions $P_n(x)$ and $Q_n(y)$ of Proposition \ref{PropositionContourIntegralRepresentationsPQ}
to find the scaling limit of $K_N(x,y)$ near the origin (hard edge).
\begin{thm}\label{TheoremHardEdgeScalingLimit}
Let $\nu$ and $\mu$ be fixed. For $x$ and $y$ in a compact subset of the positive real axis,
\begin{equation}
K_{\nu}(x,y)=\underset{N\rightarrow\infty}{\lim}\left\{\frac{1}{N\left(\alpha(\mu)^2-\delta(\mu)^2\right)}
K_{N}\left(\frac{x}{N\left(\alpha(\mu)^2-\delta(\mu)^2\right)},
\frac{y}{N\left(\alpha(\mu)^2-\delta(\mu)^2\right)}\right)\right\},
\nonumber
\end{equation}
where the limiting Meijer G-kernel $K_{\nu}(x,y)$ is given by
$$
K_{\nu}(x,y)=\int\limits_0^1G^{1,0}_{0,3}\left(\begin{array}{ccc}
                       & - &  \\
                      0, & -\nu, & 0
                    \end{array}
\biggr|ux\right)G^{2,0}_{0,3}\left(\begin{array}{ccc}
                       & - &  \\
                      \nu,&0, & 0
                    \end{array}
\biggr|uy\right)du.
$$
Here $G^{1,0}_{0,3}\left(\begin{array}{ccc}
                       & - &  \\
                      0, & -\nu, & 0
                    \end{array}
\biggr|ux\right)$ and $G^{2,0}_{0,3}\left(\begin{array}{ccc}
                       & - &  \\
                      \nu,&0, & 0
                    \end{array}
\biggr|uy\right)$ are Meijer $G$-functions with a suitable choice of parameters.
\end{thm}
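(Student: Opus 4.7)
My plan is to combine the Christoffel--Darboux identity of Theorem \ref{TheoremChristoffelDarboux} with the contour integral representations of Proposition \ref{PropositionContourIntegralRepresentationsPQ}, carry out the hard-edge asymptotic analysis of each building block, and then assemble the six terms to identify the claimed integral form of the Meijer $G$-kernel. After the rescaling $x\mapsto x/(N(\alpha(\mu)^2-\delta(\mu)^2))$ and $y\mapsto y/(N(\alpha(\mu)^2-\delta(\mu)^2))$, the identity \eqref{CDKernel} reduces the problem to the large-$N$ asymptotics of six products $P_{N+k}(\cdot)Q_{N+k'}(\cdot)$ with $k,k'\in\{-2,-1,0,1\}$, weighted by the polynomially growing recurrence coefficients $a_{\pm1,N},a_{\pm2,N}$ of Proposition \ref{PropositionRecurrenceRelations}.

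For the rescaled $P_{N+k}$ I would apply \eqref{ContourIntegralRepresentationsPn}: after substitution, the product $(\alpha(\mu)^2-\delta(\mu)^2)^t x^t$ collapses to $(x/N)^t$, the argument $\delta(\mu)^2 x/(N(\alpha(\mu)^2-\delta(\mu)^2))$ of the ${}_0F_1$ factor tends to $0$ so that ${}_0F_1\to 1$ uniformly on compacts in $t$, and only the prefactor $(\alpha(\mu)^2-\delta(\mu)^2)^{1/2}$ carries the residual $\mu$-dependence. Stirling's formula applied to the factorial normalisation $(\nu+N+k)!\,((N+k)!)^2$ and to the $\Gamma$-quotient inside the integral, combined with a deformation of the closed contour $\Sigma$ encircling $0,1,\ldots,N+k$ into an expanding Hankel contour, leaves a Mellin--Barnes integrand recognised as that of
\[
G^{1,0}_{0,3}\!\left(\begin{array}{ccc} & - & \\ 0, & -\nu, & 0 \end{array}\!\biggl|x\right).
\]
A strictly parallel Mellin--Barnes analysis of \eqref{ContourIntegralRepresentationsQn}---with the vertical contour held fixed and the ${}_2F_1$ factor degenerating to $1$ in the scaling limit---yields the second Meijer $G$-function
\[
G^{2,0}_{0,3}\!\left(\begin{array}{ccc} & - & \\ \nu, & 0, & 0 \end{array}\!\biggl|y\right)
\]
as the large-$N$ limit of $Q_{N+k'}$ after suitable normalisation.

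Plugging these asymptotics back into \eqref{CDKernel}, the polynomial $N$-growth of the recurrence coefficients $a_{\pm1,N},a_{\pm2,N}$ exactly compensates the reciprocal powers of $N$ coming from the Stirling normalisations of $P_{N+k},Q_{N+k'}$ and from the outer rescaling $1/(N(\alpha(\mu)^2-\delta(\mu)^2))$, producing a finite $\mu$-independent limit of the form $[\Phi_1(x)\Psi_1(y)+\cdots]/(x-y)$, where each $\Phi_i,\Psi_j$ is one of the two $G$-functions above, possibly shifted in its parameters by the index offsets $k,k'$. To recognise this bilinear combination as the integral $\int_0^1 G^{1,0}_{0,3}(\cdot|ux)\,G^{2,0}_{0,3}(\cdot|uy)\,du$, I would invoke the Meijer $G$-function identity stemming from the third-order ODEs that $G^{1,0}_{0,3}$ and $G^{2,0}_{0,3}$ satisfy; this is the same mechanism used by Kuijlaars and Zhang \cite{KuijlaarsZhang} to write their limiting kernel as a single integral in the $\mu=1$ case.

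The main technical obstacle is the bookkeeping in this final cancellation: verifying that the polynomial $N$-divergences of the five $a_{\cdot,N}$ match exactly the Stirling normalisation factors of the six rescaled $P_{N+k},Q_{N+k'}$, so that no $\mu$-dependence survives. The duality relations \eqref{RelationRecurrenceCoefficients} between the $a$- and $b$-coefficients, together with the biorthogonality of Proposition \ref{Biorthogonality}, structurally enforce this matching and make the bookkeeping tractable.
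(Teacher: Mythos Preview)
Your overall strategy coincides with the paper's: feed the contour-integral representations of Proposition~\ref{PropositionContourIntegralRepresentationsPQ} into the Christoffel--Darboux formula \eqref{CDKernel}, carry out the large-$N$ asymptotics, and then identify the result with the Kuijlaars--Zhang integral. The implementation, however, differs in a way that hides the key step.

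The paper does \emph{not} analyse $P_{N+k}$ and $Q_{N+k'}$ separately. It substitutes both contour integrals into \eqref{CDKernel} simultaneously and collects all six terms under a \emph{single} double integral $\int_{c-i\infty}^{c+i\infty}ds\oint_\Sigma dt$. After pulling out the common part of the integrand, the six terms combine into two explicit rational functions $A(s,t;N)$ and $B(s,t;N)$ (the latter carrying the factor $\delta(\mu)^2/(\alpha(\mu)^2-\delta(\mu)^2)$), and the entire asymptotic computation reduces to the elementary algebraic limits
\[
\lim_{N\to\infty} A(s,t;N) = -s(s+\nu)-t(t+\nu)-st, \qquad \lim_{N\to\infty} B(s,t;N) = 0.
\]
The vanishing of $B$ is precisely why no $\mu$-dependence survives. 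The resulting integrand is then rewritten via the reflection formula as a product of the two Meijer $G$ Mellin--Barnes integrands, and the identification with the integral over $u\in[0,1]$ is quoted from \cite{KuijlaarsZhang}.

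Your plan of treating each $P_{N+k}$ and $Q_{N+k'}$ on its own runs into a difficulty you underestimate. After the correct Stirling normalisation, the leading-order limit of $P_{N+k}$ is the \emph{same} for every $k\in\{-2,-1,0,1\}$, and likewise for $Q_{N+k'}$; the index shift $k$ only affects subleading $1/N$ corrections. Hence at leading order all six Christoffel--Darboux terms are proportional to the single product $G^{1,0}_{0,3}(\,\cdot\,|x)\,G^{2,0}_{0,3}(\,\cdot\,|y)$, and the actual limit lives entirely in the interplay between these subleading corrections and the differing polynomial growth rates of the recurrence coefficients. That interplay can be worked out, but it is exactly the computation of $\lim A$ and $\lim B$ in disguise; doing it term by term rather than at the level of the combined integrand obscures the cancellations. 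The duality relations \eqref{RelationRecurrenceCoefficients} are a consequence of biorthogonality and do not by themselves enforce this cancellation---it is a genuine calculation.

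Two smaller corrections. First, the ${}_2F_1$ in \eqref{ContourIntegralRepresentationsQn} does not degenerate to $1$ but to $(1-\delta(\mu)^2/\alpha(\mu)^2)^{-(\nu+s)}$ as $N\to\infty$; this then combines with the explicit prefactor $(1-\delta(\mu)^2/\alpha(\mu)^2)^\nu$ and the rescaling of $(\alpha(\mu)^2 y)^{-s}$ to eliminate $\mu$. Second, the paper does not deform $\Sigma$ into an expanding Hankel contour as a separate step: the $N$-dependence is instead carried by the ratios $\Gamma(t-N+j)/\Gamma(s-N+j')$, whose asymptotics $\sim(\sin\pi s/\sin\pi t)N^{t-s}$ are computed directly via the reflection formula and Stirling.
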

The resulting limiting kernel $K_{\nu}(x,y)$ coincides  with the scaling limit found by
Bertola, Gekhtman, and Szmigielski in the Cauchy-Laguerre two-matrix model \cite{Bertola},  with the scaling limit for the product of two independent complex Gaussian
matrices found by Kuijlaars and Zhang \cite{KuijlaarsZhang}, and with the limiting kernel for the product of two independent complex Gaussian
matrices times a fixed arbitrary number of inverses of such matrices found by Forrester \cite{ForresterProductWishart}.
This confirms that the family of new limiting so-called Meijer G-kernels obtained in Kuijlaars and Zhang \cite{KuijlaarsZhang} in the context of products of independent matrices
represents a new universality class.
\subsection{The Central Limit Theorem}
Proposition \ref{PropositionRecurrenceRelations} gives explicitly the recurrence coefficients for the functions $P_n(x)$
and $Q_n(y)$ determining the correlation kernel of the biorthogonal ensemble defined by equation (\ref{TheMainJointProbabilityDensityFunction}).
This enables us to derive a Central Limit Theorem for the linear statistics of singular values of $X_1X_2$, and to give the limiting variance explicitly.

Here, instead of the probability distribution $P_{N,M}(X_1,X_2)dX_1dX_2$ (defined by equation (\ref{JointDensityX1X2}))
we consider the probability distribution $\widetilde{P_{N,M}}(X_1,X_2)dX_1dX_2$ on the Cartesian
product of $\Mat(\C, N\times  M)$ and $\Mat(\C, M\times  N)$ defined by
\begin{equation}\label{PNEW}
\begin{split}
\widetilde{P_{N,M}}(X_1,X_2)dX_1dX_2=&c\cdot \exp\left[-N\alpha(\mu)\Tr(X_1X_1^*+X_2^*X_2)+N\delta(\mu)\Tr(X_1X_2+X_2^*X_1^*)\right]\\
&\times\prod\limits_{i=1}^N\prod\limits_{j=1}^Md{X_{i,j}^{(1)}}^Rd{X_{i,j}^{(1)}}^I\prod\limits_{i=1}^M\prod\limits_{j=1}^Nd{X_{i,j}^{(2)}}^Rd{X_{i,j}^{(2)}}^I,
\end{split}
\end{equation}
where $X_{i,j}^{(1)}={X_{i,j}^{(1)}}^R+i{X_{i,j}^{(1)}}^I$ and $X_{i,j}^{(2)}={X_{i,j}^{(2)}}^R+i{X_{i,j}^{(2)}}^I$ denote the sums of the real and imaginary parts of the matrix entries $X_{i,j}^{(1)}$ and $X_{i,j}^{(2)}$, and $c$ is a normalising constant. Equation (\ref{PNEW}) is obtained from Equation (\ref{JointDensityX1X2}) by a simple rescaling of the matrix elements by $\sqrt{N}$.

Let $y_1$, $\ldots$, $y_N$ be the squared singular values of the matrix $X_1X_2$,
and define the linear statistics of 
$y_1$, $\ldots$, $y_N$ by the formula
$$
Y_f^{(N)}=\sum\limits_{i=1}^Nf(y_i).
$$
\begin{thm}\label{CentralLimitTheorem} Let $f$ be a polynomial with real coefficients. Then we have
$$
Y_f^{(N)}-\mathbb{E}Y_f^{(N)}\rightarrow {\mathcal N}\left(0,\sum\limits_{k=1}^{\infty}k\hat{f}_k\hat{f}_{-k}\right)
$$
in distribution, where
\begin{eqnarray}
\hat{f}_k&=&\frac{1}{2\pi i}\oint_{|w|=1}f\left(s(w;\mu)
\right)w^k\frac{dw}{w}\ ,\nonumber\\
s(w;\mu)&=&\frac{1}{4w^2}(w+1)^3\left( w(1-\mu)^2+(1+\mu)^2\right).
\label{sw}
\end{eqnarray}
\end{thm}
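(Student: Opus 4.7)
I would apply the central limit theorem for biorthogonal ensembles with banded recurrence of Breuer--Duits \cite{BreuerDuits}, whose hypotheses require the large-$n$ asymptotics of the recurrence coefficients of the biorthogonal system. Proposition \ref{PropositionRecurrenceRelations} provides precisely this input, once the rescaling from $P_{N,M}$ to $\widetilde{P_{N,M}}$ has been transferred to the level of the biorthogonal functions and their recurrence.

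First I would identify the rescaled biorthogonal system. Since $\widetilde{P_{N,M}}$ corresponds to $X_i\mapsto X_i/\sqrt N$, the squared singular values rescale as $\tilde y=y/N^2$, and a natural biorthogonal pair is $\tilde P_n(\tilde y)=N^{-2n}P_n(N^2\tilde y)$ and $\tilde Q_n(\tilde y)=N^{2n+2}Q_n(N^2\tilde y)$, yielding rescaled recurrence coefficients $\tilde a_{j,n}=N^{2j-2}a_{j,n}$. Substituting the explicit formulae of Proposition \ref{PropositionRecurrenceRelations} with $n = tN$, $t \in (0,1]$, and $N \to \infty$ gives
\[
\tilde a_{j,tN}\ \longrightarrow\ c_j(\mu)\, t^{\,2-2j},\qquad j=-2,-1,0,1,2,
\]
with explicit $\mu$-dependent constants $c_j(\mu)$ collected from the top-degree (in $n$) pieces of $a_{j,n}$. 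A direct algebraic check, using $\alpha(\mu)^2-\delta(\mu)^2=1/\mu$, identifies the Laurent polynomial $\mathcal B(w):=\sum_{j=-2}^{2} c_j(\mu) w^j$ with $(\alpha(\mu)^2-\delta(\mu)^2)^2\,s(w;\mu)$, for $s(w;\mu)$ as in \eqref{sw}; the generating function of the limiting recurrence coefficients is therefore, up to a universal normalization, exactly the symbol appearing in Theorem \ref{CentralLimitTheorem}.

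Next I would invoke the Breuer--Duits theorem to conclude the CLT, with variance given by their Szeg\H{o}-type formula associated to the limiting banded operator. The structural observation responsible for the clean form of the variance in the statement is the self-similar scaling of the limiting symbol,
\[
\mathcal A(t,z)\ :=\ \sum_{j=-2}^{2} c_j(\mu)\, t^{\,2-2j}\, z^j\ =\ t^2\,\mathcal B(z/t^2).
\]
Under the change of contour variable $z=t^2 w$, the $t$-dependence in the Fourier coefficients entering the Breuer--Duits variance rearranges through paired factors $t^{2k}$ and $t^{-2k}$ in $\hat f_{k,t}\hat f_{-k,t}$, and together with the identification $\mathcal B(w)=(\alpha(\mu)^2-\delta(\mu)^2)^2 s(w;\mu)$ the residual $t$-dependence is absorbed, leaving the stated Szeg\H{o}-type expression $\sum_{k\geq 1} k\,\hat f_k\hat f_{-k}$ with $\hat f_k$ the Fourier coefficients of $f\circ s(\,\cdot\,;\mu)$ on $|w|=1$. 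The CLT itself then follows from the Breuer--Duits moment-method argument applied to $\Tr f(\tilde J_N)$, where $\tilde J_N$ is the five-diagonal Jacobi-type matrix with entries $\tilde a_{m-n,n}$.

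The main obstacle is twofold. Verifying the precise hypotheses of Breuer--Duits---uniform convergence of $\tilde a_{j,n}\to c_j(\mu)t^{2-2j}$ on compact subsets of $(0,1)$, control near the endpoints $t=0$ and $t=1$, and suitable polynomial growth estimates---is technical but routine given the fully explicit formulae of Proposition \ref{PropositionRecurrenceRelations}. The decisive step is the reduction of the Breuer--Duits variance formula, written a priori in terms of the position-dependent symbol $\mathcal A(t,z)$, to the explicit Szeg\H{o} form in the theorem; this uses both the self-similar scaling $\mathcal A(t,z)=t^2\mathcal B(z/t^2)$ and the identification $\mathcal B(w)=(\alpha(\mu)^2-\delta(\mu)^2)^2 s(w;\mu)$, and is essentially forced by the polynomial growth pattern $a_{j,n}\sim n^{2-2j}$ that is itself a direct consequence of Proposition \ref{PropositionRecurrenceRelations}.
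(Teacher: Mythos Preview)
Your overall strategy---reduce to the Breuer--Duits CLT using the explicit recurrence coefficients of Proposition~\ref{PropositionRecurrenceRelations}---is the right one and is exactly what the paper does. But your implementation contains a genuine misunderstanding of the Breuer--Duits hypothesis that makes the argument more complicated than it needs to be and leaves a gap.

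The Breuer--Duits theorem (as quoted in the paper's Proposition~\ref{TheoremBreuerDuits}, and in their Corollary~2.2) requires only the \emph{right limit} of the banded recurrence matrix: one needs $\lim_{N\to\infty}\tilde a^{(N)}_{j,N}$ to exist for each $j$, i.e.\ the behaviour at $n\approx N$. There is no position-dependent symbol $\mathcal A(t,z)$ in their variance formula; the variance is determined by the single Laurent polynomial built from those limits. Your detour through $n=tN$ for general $t\in(0,1]$, and the subsequent ``self-similar scaling'' argument with the contour substitution $z=t^2w$ to collapse the $t$-dependence, is therefore unnecessary, and as written it is not a rigorous reduction of any variance formula that Breuer--Duits actually prove. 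Moreover your normalisation claim $\mathcal B(w)=(\alpha(\mu)^2-\delta(\mu)^2)^2\,s(w;\mu)$ is off: a direct check at $t=1$ using your own $\tilde a_{j,n}=N^{2j-2}a_{j,n}$ and the top-degree terms of Proposition~\ref{PropositionRecurrenceRelations} gives $\tilde a_{j,N}\to\alpha_j$ with $\sum_j\alpha_j w^j=s(w;\mu)$ on the nose, no extra factor.

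The paper's proof bypasses all of this. It rescales $\alpha(\mu)\mapsto N\alpha(\mu)$, $\delta(\mu)\mapsto N\delta(\mu)$ (equivalent to your $y\mapsto y/N^2$) and then applies the gauge $P_n'=P_n/(n!(n+\nu)!)$, $Q_n'=n!(n+\nu)!\,Q_n$. With this choice the rescaled recurrence coefficients $a'_{j,n}$ are written out explicitly, and one simply reads off $\lim_{N\to\infty}a'_{j,N}=\alpha_j$; the Laurent polynomial $\sum_j\alpha_j w^j$ is then identified with $s(w;\mu)$ and Proposition~\ref{TheoremBreuerDuits} applies directly. Your alternative gauge $\tilde P_n=N^{-2n}P_n(N^2\,\cdot\,)$ would work equally well \emph{provided} you evaluate only at $n=N$: you would find the same limits $\alpha_j$ and conclude immediately. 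The fix to your argument is thus simply to drop the parameter $t$ and the position-dependent symbol, compute $\lim_{N\to\infty}\tilde a_{j,N}$, and invoke Breuer--Duits.
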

\begin{rem}
Since $f$ is a polynomial with real coefficients, $\hat{f}_k$ is real. Furthermore, the Central Limit Theorems for the limiting cases $\mu\to0$ and $\mu\to1$ can be immediately read off by taking the limits on the Laurent polynomial $s(w;\mu)$. For the product of two independent complex Gaussian matrices we obtain $\lim_{\mu\to1}s(w;\mu)={(w+1)^3}/{w^2}$. 
This agrees with the results following from the recursion coefficients by Kuijlaars and Zhang, by specifying to two matrices there. 
In the opposite limit we obtain $\lim_{\mu\to0}s(w;\mu)={(w+1)^4}/{(4w^2)}$.  It is not difficult using Laguerre polynomials of square root arguments to directly show that this is the correct limit for the ensemble in eq. (\ref{SecondLimtingRelation}) - which is not the standard Laguerre ensemble due to the change of variables in equation (\ref{change}).
\end{rem}
The proof of Theorem \ref{CentralLimitTheorem} uses the results  for biorthogonal ensembles obtained in Breuer and Duits \cite{BreuerDuits}.
They showed that whenever the asymptotic of recurrence coefficients is available, a Central Limit Theorem
for the linear statistics can be derived. In our case, Proposition \ref{PropositionRecurrenceRelations} gives the recurrence coefficients explicitly.
Considering the rescaled probability distribution $\widetilde{P_{N,M}}(X_1,X_2)dX_1dX_2$ 
we obtain recurrence
coefficients that have finite limits as $N\rightarrow\infty$, which gives Theorem \ref{CentralLimitTheorem}.

\section{Proof of Theorem \ref{PropositionTheMainJointProbabilityDensity}}\label{ProofsInitial}
First we show that the computation of the joint probability density function of (squared) singular values
for products of rectangular matrices can be reduced to that for products of square matrices of the same size.
Namely, the following Lemma holds true.
\begin{lem}\label{LemmaReduction}
Let $X\in\Mat(\C, M\times N)$ and $G\in\Mat(\C, N\times M)$ be two $\mu$-dependent Gaussian complex matrices.
Assume that $M\geq N$. Then the squared singular values of the matrix $GX$ are distributed in the same way as the squared singular values
of the matrix $G_0X_0$, where  $G_0\in\Mat(\C, N\times N)$, $X_0\in\Mat(\C, N\times N)$, and the joint distribution of $G_0$, $X_0$ is given by
\begin{equation}\label{G0X0Distribution}
\begin{split}
P^{(N,M)}(G_0,X_0)dG_0dX_0=&\const\cdot\det\left(X_0^*X_0\right)^{M-N}\\
&\times e^{-\alpha(\mu)\Tr(G_0G_0^*+X_0^*X_0)+\delta(\mu)\Tr(G_0X_0+X_0^*G_0^*)}dG_0dX_0.
\end{split}
\end{equation}
\end{lem}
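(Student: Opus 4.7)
The strategy is to exhibit a change of variables on $(G, X)$ (unique only up to a $U(N)$-gauge) that manifestly preserves the product $GX$ and splits the exponent of the joint density, after which the extraneous coordinates are integrated out. Geometrically, I factor $X = Q X_0$ with $Q \in \operatorname{St}(M,N)$ an $M \times N$ isometry ($Q^* Q = I_N$) and $X_0 \in \Mat(\C, N \times N)$; this decomposition is unique modulo the diagonal action $(Q, X_0) \mapsto (Q U, U^* X_0)$ for $U \in U(N)$. I then complete $Q$ to a unitary $\tilde Q = [Q, Q_\perp] \in U(M)$ by a measurable section and split $G \tilde Q = [G_0, G_\perp]$ with $G_0 = GQ \in \Mat(\C, N \times N)$ and $G_\perp = G Q_\perp \in \Mat(\C, N \times (M-N))$. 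A short computation gives $GX = G_0 X_0$, so the squared singular values of $GX$ and $G_0 X_0$ coincide; moreover $\Tr(GG^*) = \Tr(G_0 G_0^*) + \Tr(G_\perp G_\perp^*)$, $\Tr(X^* X) = \Tr(X_0^* X_0)$ and $\Tr(GX) = \Tr(G_0 X_0)$. Consequently the exponent of $P_{N,M}$ separates into a $(G_0, X_0)$-block and an independent $G_\perp$-block, with no residual dependence on $Q$ or $Q_\perp$.

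With the variables separated, the $G_\perp$-integral is purely Gaussian, and the integration over $Q$ together with the $U(M-N)$-freedom in $Q_\perp$ has an integrand that no longer involves these variables; both contribute only multiplicative constants that get absorbed into the normalisation. The surviving marginal density on $(G_0, X_0)$ is therefore the exponential weight appearing in (\ref{G0X0Distribution}) multiplied by the Jacobian of the map $X \mapsto (Q, X_0)$, and the lemma reduces to identifying that Jacobian.

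The main obstacle is to show that this Jacobian equals $\det(X_0^* X_0)^{M-N}$, and I would do so by chaining two standard QR factorisations. For $X$ of full column rank the rectangular QR $X = Q' R$ with $R$ upper triangular and real positive diagonal carries the well-known Jacobian $dX = \prod_{j=1}^N r_{jj}^{2(M-j)+1}\, dR\, d\mu_{\operatorname{St}}(Q')$. A second, square QR on the $N \times N$ factor, $X_0 = V R$ with $V \in U(N)$ and the same upper triangular $R$, gives $dX_0 = \prod_{j=1}^N r_{jj}^{2(N-j)+1}\, dR\, dV$. Matching the two by absorbing the $U(N)$-gauge $V$ into the Stiefel coordinate via $Q = Q' V^*$, and taking the ratio of the two Jacobians, produces the extra factor $\prod_{j=1}^N r_{jj}^{2(M-N)} = \det(R^* R)^{M-N} = \det(X_0^* X_0)^{M-N}$, as required. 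Keeping track of the normalising constants at each step then yields (\ref{G0X0Distribution}) precisely.
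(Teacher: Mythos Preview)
Your proof is correct and follows essentially the same route as the paper's: both decompose $X = QX_0$ with $Q$ an isometry (the paper writes this as $X = U\left(\begin{smallmatrix}X_0\\0\end{smallmatrix}\right)$ with $U\in U(M)$, which is the same thing), then set $G_0 = GQ$, $G_\perp = GQ_\perp$, observe that $GX=G_0X_0$ and that the exponent separates, and finally integrate out $G_\perp$ and the unitary coordinates. The only substantive difference is that the paper simply cites Fischmann et al.\ for the Jacobian $\det(X_0^*X_0)^{M-N}$, whereas you derive it by chaining the rectangular and square QR Jacobians; your derivation is standard and correct.
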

Here and below the computation of the $\mu$-dependent constants is suppressed until the last part of the proof of Theorem \ref{PropositionTheMainJointProbabilityDensity}.
\begin{proof}
If $M=N$, then the statement of the Lemma follows immediately. Consider the case when $M>N$.
Recall that the matrices $G$, $X$  are distributed in accordance with
\begin{equation}
\begin{split}
P^{(N,M)}(G,X)dGdX=\const\cdot e^{-\alpha(\mu)\Tr(GG^*+X^*X)+\delta(\mu)\Tr(GX+X^*G^*)}dGdX.
\end{split}
\nonumber
\end{equation}
Consider the following decomposition of the matrix $X$
$$
X=U\left(\begin{array}{c}
           X_0 \\
           O_{M-N,N}
         \end{array}
\right),
$$
where $U$ is an $M\times M$ unitary matrix, $X_0$ is an $N\times N$ complex matrix, and $O_{M-N,N}$ is a complex matrix of size $(M-N)\times N$ with zero entries.\footnote{For a proof of the existence of such a decomposition see Fischmann, Bruzda, Khoruzhenko, Sommers, and Zyczkowski
\cite{Fischmann}, Section 2.}
We have
\begin{equation}
\begin{split}
&P^{(N,M)}(G,X)dGdX=\const\cdot\det\left(X_0^*X_0\right)^{M-N}
e^{-\alpha(\mu)\Tr(GG^*+X_0^*X_0)}\\
&\ \ \ \ \times \exp\left[\delta(\mu)\left\{\Tr\left(GU\left(\begin{array}{c}
           X_0 \\
           O_{M-N,N}
         \end{array}
\right)\right)+\Tr\left(\left(\begin{array}{cc}
                                      X_0^* & O_{N,M-N}
                                    \end{array}
\right)U^*G^*\right)\right\}\right]dGdUdX_0,
\end{split}
\nonumber
\end{equation}
where  we have used the results of Section 2 in  Fischmann, Bruzda, Khoruzhenko, Sommers, and Zyczkowski
\cite{Fischmann} (see also the discussion in Ipsen and Kieburg \cite{IpsenKieburg}, Section III, A).
Here $dU$ denotes the Haar measure.
If $\hat{G}=GU$, then the equation above can be rewritten as
\begin{equation}\label{GX1}
\begin{split}
&P^{(N,M)}(G,X)dGdX=\const\cdot\det\left(X_0^*X_0\right)^{M-N}
e^{-\alpha(\mu)\Tr(\hat{G}\hat{G}^*+X_0^*X_0)}\\
&\times \exp\left[\delta(\mu)\left\{\Tr\left(\hat{G}\left(\begin{array}{c}
           X_0 \\
           O_{M-N,N}
         \end{array}
\right)\right)+\Tr\left(\left(\begin{array}{cc}
                                      X_0^* & O_{N,M-N}
                                    \end{array}
\right)\hat{G}^*\right)\right\}\right]d\hat{G}dUdX_0,
\end{split}
\end{equation}
where we have used the invariance of the corresponding Lebesgue measure $dG$ under unitary transformations.
Now, set
\begin{equation}\label{HatG}
\hat{G}=\left(\begin{array}{cc}
                G_0 & \hat{G}_{N,M-N}
              \end{array}
\right).
\end{equation}
This is a block decomposition of the rectangular matrix $\hat{G}$ of size $N\times M$ $(M>N)$ such that $G_0$ is the square matrix of size $N\times N$
whose entries are those of the first $N$ columns of $\hat{G}$, and $\hat{G}_{N,M-N}$ is the remaining rectangular matrix of size $N\times (M-N)$.
Inserting (\ref{HatG}) into equation (\ref{GX1}), we obtain
\begin{equation}
\begin{split}
P^{(N,M)}(G,X)dGdX=&\const\cdot\det\left(X_0^*X_0\right)^{M-N}
e^{-\alpha(\mu)\left(\Tr\left(G_0G_0^*\right)+\Tr\left(\hat{G}_{N,M-N}\hat{G}_{N,M-N}^*\right)+\Tr\left(X_0^*X_0\right)\right)}\\
&\times e^{\delta(\mu)\Tr\left(G_0X_0+X_0^*G_0^*\right)}d\hat{G}_0
d\hat{G}_{N,M-N}dUdX_0.
\end{split}
\nonumber
\end{equation}
The formula just written above implies that the joint distribution of $G_0$, $X_0$ is given by equation (\ref{G0X0Distribution}).
Moreover, by construction the squared singular values of $GX$ coincide with those of $G_0X_0$.
\end{proof}
Let us turn to the proof of Theorem \ref{PropositionTheMainJointProbabilityDensity}. We use Lemma \ref{LemmaReduction}, and
 assume that both matrices $X_1$, $X_2$ are taken from $\Mat\left(\C,N\times N\right)$, and that the joint distribution of $X_1$, $X_2$ is given by
\begin{equation}\label{GXsquareDistribution}
\begin{split}
P^{(N,M)}(X_1,X_2)dX_1dX_2=&\const\cdot\det\left(X_2^*X_2\right)^{M-N}\\
&\times e^{-\alpha(\mu)\Tr(X_1X_1^*+X_2^*X_2)+\delta(\mu)\Tr(X_1X_2+X_2^*X_1^*)}dX_1dX_2,
\end{split}
\end{equation}
where $M\geq N$. In fact we need that $X_1,X_2\in Gl(N,\C)$. Because the set of invertible matrices is dense in $\Mat\left(\C,N\times N\right)$ this will not change the joint distribution.
Consider the change of variables
$$
X_1\mapsto Y_2=X_1X_2,\;\;X_2\mapsto Y_1=X_2.
$$
It is known that this transformation has a Jacobian $\det\left(Y_1^*Y_1\right)^{-N}$. Therefore we can write
\begin{equation}\label{GXsquareDistribution1}
\begin{split}
P^{(N,M)}(X_1,X_2)dX_1dX_2=&\const\cdot\det\left(Y_1^*Y_1\right)^{M-2N}\\
&\times e^{-\alpha(\mu)\left(\Tr\left(Y_2^*Y_2Y_1^{-1}\left(Y_1^*\right)^{-1}\right)+\Tr(Y_1^*Y_1)\right)+\delta(\mu)\Tr(Y_2+Y_2^*)}dY_2dY_1.
\end{split}
\end{equation}
Next we use the singular value decomposition for both $Y_2$ and $Y_1$
$$
Y_1=V_1\Lambda_1U_1,\;\; \Lambda_1=\left(\begin{array}{ccc}
                                           \lambda_1^{(1)} &  & 0 \\
                                            & \ddots &  \\
                                           0 &  & \lambda_N^{(1)}
                                         \end{array}
\right),
$$
$$
Y_2=V_2\Lambda_2U_2,\;\; \Lambda_2=\left(\begin{array}{ccc}
                                           \lambda_1^{(2)} &  & 0 \\
                                            & \ddots &  \\
                                           0 &  & \lambda_N^{(2)}
                                         \end{array}
\right),
$$
where $\Lambda_1$, $\Lambda_2$ are diagonal matrices with the singular values along the diagonals, and $V_1$, $V_2$, $U_1$ and $U_2$
are unitary $N\times N$ matrices.  It is known that
$$
dY_1=\const\cdot\triangle\left(\Lambda_1^2\right)^2\left(\prod\limits_{j=1}^N\lambda_j^{(1)}d\lambda_j^{(1)}\right)dU_1dV_1,
$$
$$
dY_2=\const\cdot\triangle\left(\Lambda_2^2\right)^2\left(\prod\limits_{j=1}^N\lambda_j^{(2)}d\lambda_j^{(2)}\right)dU_2dV_2,
$$
where we have introduced the Vandermonde determinant
$$
\triangle\left(\Lambda_1^2\right)=\prod\limits_{N\geq j>k\geq 1}\left(\left(\lambda_j^{(1)}\right)^2-\left(\lambda_k^{(1)}\right)^2\right),\;\;
\triangle\left(\Lambda_2^2\right)=\prod\limits_{N\geq j>k\geq 1}\left(\left(\lambda_j^{(2)}\right)^2-\left(\lambda_k^{(2)}\right)^2\right),
$$
and where $dU_1$, $dV_1$, $dU_2$, and $dV_2$ are the corresponding Haar measures on the unitary group $U(N)$.
Combining these formulae we obtain a probability measure
\begin{equation}
\begin{split}
&P^{(N,M)}(X_1,X_2)dX_1dX_2=\const\cdot e^{-\alpha(\mu)\left(\Tr(\Lambda_1^2)+\Tr\left(U_1U_2^*\Lambda_2^2U_2U_1^*\Lambda_1^{-2}\right)\right)
+\delta(\mu)\left(\Tr(V_2\Lambda_2U_2)+\Tr\left(U_2^*\Lambda_2V_2^*\right)\right)}\\
&\times\triangle\left(\Lambda_1^2\right)^2\triangle\left(\Lambda_2^2\right)^2{\det}^{M-2N}\left[\Lambda_1^2\right]
\left(\prod\limits_{j=1}^N\lambda_j^{(1)}d\lambda_j^{(1)}\right)\left(\prod\limits_{j=1}^N\lambda_j^{(2)}d\lambda_j^{(2)}\right)
dU_1dU_2dV_1dV_2.
\end{split}
\nonumber
\end{equation}
Using the invariance of the Haar measures under the subsequent shifts
$$
U_1\mapsto U_1U_2,\;\;\;\mbox{and}\;\;\; U_2\mapsto U_2V_2^*,
$$
and  integrating over $V_1$ and $V_2$ we obtain
\begin{equation}\label{GXsquareDistribution2}
\begin{split}
&P^{(N,M)}(X_1,X_2)dX_1dX_2=\const\cdot e^{-\alpha(\mu)\left(\Tr(\Lambda_1^2)+\Tr\left(U_1\Lambda_2^2U_1^*\Lambda_1^{-2}\right)\right)
+\delta(\mu)\left(\Tr(\Lambda_2U_2)+\Tr\left(U_2^*\Lambda_2\right)\right)}\\
&\times\triangle\left(\Lambda_1^2\right)^2\triangle\left(\Lambda_2^2\right)^2{\det}^{M-2N}\left[\Lambda_1^2\right]
\left(\prod\limits_{j=1}^N\lambda_j^{(1)}d\lambda_j^{(1)}\right)\left(\prod\limits_{j=1}^N\lambda_j^{(2)}d\lambda_j^{(2)}\right)
dU_1dU_2.
\end{split}
\end{equation}
The integration over $U_1$ can be performed using the Harish-Chandra-Itzykson-Zuber integration formula \cite{HC,IZ}
\begin{equation}\label{HCIZ}
\int_{U(N)} e^{-\Tr\left(U_1\Lambda_2^2U_1^*\Lambda_1^{-2}\right)}dU_1=
\const\cdot\frac{\det\left[\exp\left[-\left(\lambda_j^{(2)}\right)^2
\left(\lambda_i^{(1)}\right)^{-2}\right]\right]_{i,j=1}^N}{\triangle(\Lambda_2^2)\triangle(\Lambda_1^{-2})},
\end{equation}
where the constant does not depend on $\Lambda_1$ and $\Lambda_2$,  and we have used the transformation $\Lambda_1\mapsto\alpha(\mu)^{\frac{1}{2}}\Lambda_1$. In addition, we apply  the fact that the Vandermonde determinant of inverse powers is proportional to the Vandermonde determinant with positive powers, namely
\begin{equation}\label{VandermondeIdentity}
\triangle\left(\Lambda_1^{-2}\right)=\const\cdot\frac{\triangle(\Lambda_1^2)}{\det^{N-1}\left[\Lambda_1^2\right]}.
\end{equation}
As a result of application of formulae (\ref{HCIZ}), (\ref{VandermondeIdentity}) to probability measure (\ref{GXsquareDistribution2})
we have
\begin{equation}
\begin{split}
&P^{(N,M)}(X_1,X_2)dX_1dX_2=\const\cdot e^{-\alpha(\mu)^2\Tr(\Lambda_1^2)
+\delta(\mu)\left(\Tr(\Lambda_2U_2)+\Tr\left(U_2^*\Lambda_2\right)\right)}\triangle\left(\Lambda_1^2\right)\triangle\left(\Lambda_2^2\right)\\
&\times
\det\left[\exp\left[-\left(\lambda_j^{(2)}\right)^2
\left(\lambda_i^{(1)}\right)^{-2}\right]\right]_{i,j=1}^N
{\det}^{M-N-1}\left(\Lambda_1^2\right)
\left(\prod\limits_{j=1}^N\lambda_j^{(1)}d\lambda_j^{(1)}\right)\left(\prod\limits_{j=1}^N\lambda_j^{(2)}d\lambda_j^{(2)}\right)
dU_2.
\end{split}
\end{equation}
Now our task is to perform the integration over $U_2$. This can be done exploiting the following Leutwyler-Smilga integral formula \cite{LeutwylerSmilga}, see e.g.
\cite{Baha} for a derivation based on group characters,
\begin{equation}\label{IZBessel}
\int_{U(N)} e^{
\delta(\mu)\Tr\left(\Lambda_2\left(U_2+U_2^*\right)\right)}dU_2
=\const\cdot\frac{\det\left[\left(\lambda_j^{(2)}\right)^{i-1}I_{i-1}\left(2\delta(\mu)\lambda_j^{(2)}\right)\right]_{i,j=1}^N}{\triangle(\Lambda_2^2)}.
\end{equation}
Here $I_k(x)$ denotes
the modified Bessel function of the first  kind.
After the integration over $U_2$ we obtain the following probability distribution
\begin{equation}\label{GXsquareDistribution3}
\begin{split}
&P^{(N,M)}(X_1,X_2)dX_1dX_2=\const\cdot e^{-\alpha^2(\mu)\Tr(\Lambda_1^2)}
\triangle\left(\Lambda_1^2\right)\det\left[\left(\lambda_j^{(2)}\right)^{i-1}I_{i-1}\left(2\delta(\mu)\lambda_j^{(2)}\right)\right]_{i,j=1}^N
\\
&\times
\det\left[\exp\left[-\left(\lambda_j^{(2)}\right)^2
\left(\lambda_i^{(1)}\right)^{-2}\right]\right]_{i,j=1}^N
{\det}^{M-N-1}\left[\Lambda_1^2\right]
\left(\prod\limits_{j=1}^N\lambda_j^{(1)}d\lambda_j^{(1)}\right)\left(\prod\limits_{j=1}^N\lambda_j^{(2)}d\lambda_j^{(2)}\right).
\end{split}
\end{equation}
To get the induced probability distribution of the singular values $\lambda_1^{(2)}$,$\ldots$,$\lambda_N^{(2)}$ of the matrix $Y_2=X_1X_2$ we only
need to integrate the probability distribution (\ref{GXsquareDistribution3}) over the variables $\lambda_1^{(1)}$,$\ldots$,$\lambda_N^{(1)}$. The integral
over these variables is
\begin{equation}\label{GXsquareDistribution4}
\begin{split}
{\mathcal I}=&\int\det\left[\left(\lambda^{(1)}_i\right)^{2(j-1)}\right]_{i,j=1}^N
\det\left[\exp\left[-\left(\lambda_j^{(2)}\right)^2
\left(\lambda_i^{(1)}\right)^{-2}\right]\right]_{i,j=1}^N\\
&\times\prod\limits_{j=1}^Ne^{-\alpha(\mu)^2\left(\lambda_j^{(1)}\right)^2}\left(\lambda^{(1)}_j\right)^{2\nu-1}d\lambda_j^{(1)}.
\end{split}
\end{equation}
Applying the Andr$\acute{\mbox{e}}$ief integral identity valid for a set of integrable functions,
\begin{equation}\label{AndId}
\begin{split}
\int\det\left[\varphi_i(x_j)\right]_{i,j=1}^N\det\left[\psi_i(x_j)\right]_{i,j=1}^N\prod\limits_{j=1}^Nd\mu(x_j)
=N!\det\left[\int\varphi_i(x)\psi_j(x)d\mu(x)\right]_{i,j=1}^N\ ,
\end{split}
\end{equation}
to $\varphi_i(x)=x^{2(i-1)}$, $\psi_i(x)=e^{-\left(\lambda_i^{(2)}\right)^2x^{-2}}$, and
$d\mu(x)=e^{-\alpha(\mu)^2x^2}x^{2\nu-1}dx$ on ${\mathbb R}_+$, we obtain that integral (\ref{GXsquareDistribution4}) is 
equal
$$
{\mathcal I}=N!\det\left[\int\limits_0^{\infty}e^{-\alpha(\mu)^2x^2-\left(\lambda_j^{(2)}\right)^2x^{-2}}x^{2(i+\nu)-3}dx\right]_{i,j=1}^N.
$$
To compute the integral inside the determinant above we use the formula
\cite{GradshteinRyzhik} 8.432.6
$$
\int\limits_0^{\infty}x^{\nu-1}\exp\left[-x-\frac{\rho^2}{4x}\right]dx=2\left(\frac{\rho}{2}\right)^{\nu}K_{-\nu}(\rho),
$$
where $K_{-\nu}(\rho)=K_{+\nu}(\rho)$ is the modified Bessel function of the second kind. The result is that integral (\ref{GXsquareDistribution4}) is proportional to
$$
{\mathcal I}=\const\cdot\det\left[\left(\lambda_j^{(2)}\right)^{i+\nu-1}K_{i+\nu-1}\left(2\alpha(\mu)\lambda_j^{(2)}\right)\right]_{i,j=1}^N.
$$
We conclude that the joint density of the singular values of the matrix $X_1X_2$ with $\nu=M-N$
s given by
\begin{equation}\label{dd}
\begin{split}
P^{(N,M)}(X_1,X_2)dX_1dX_2=&\const\cdot\det\left[\left(\lambda_i^{(2)}\right)^{j-1}I_{j-1}\left(2\delta(\mu)\lambda_i^{(2)}\right)\right]_{i,j=1}^N\\
&\times
\det\left[\left(\lambda_i^{(2)}\right)^{j+\nu-1}K_{i+\nu-1}\left(2\alpha(\mu)\lambda_i^{(2)}\right)\right]_{i,j=1}^N\prod\limits_{j=1}^N\lambda_j^{(2)}d\lambda_j^{(2)}.
\end{split}
\end{equation}
Changing to squared singular values, $\left(\lambda_i^{(2)}\right)^2=y_i$, we obtain equation (\ref{TheMainJointProbabilityDensityFunction}) up to a normalisation constant $\const=1/Z_N$. 

In order to compute this constant we can apply again the Andr\'eief identity equation (\ref{AndId}), interpreting the left hand side as a 
probability measure, 
with the following choice of functions for the squared singular values $y_j$ of $X_1X_2$:
\begin{equation}\label{psij}
\psi_j(x)=x^{\frac{j}{2}}I_{j}\left(2\delta(\mu)\sqrt{x}\right),
\end{equation}
and
\begin{equation}\label{varphij}
\varphi_j(x)=x^{\frac{j+\nu}{2}}K_{j+\nu}\left(2\alpha(\mu)\sqrt{x}\right).
\end{equation}
After applying the integral identity
the requirement that this probability measure is normalised reads as follows, 
$$
1=Z_N^{-1}
N!\det\left[\int\limits_0^{\infty}y^{\frac{i+j+\nu}{2}}I_{i}\left(2\delta(\mu)\sqrt{y}\right)K_{j+\nu}\left(2\alpha(\mu)\sqrt{y}\right)dy\right]_{i,j=0}^{N-1}.
$$
The integral inside the determinant above can be computed explicitly. Namely, we have from \cite{GradshteinRyzhik} 6.576.7
\begin{equation}\label{IntegralIK}
\begin{split}
&\int\limits_0^{\infty}y^{\frac{i+j+\nu}{2}}I_{i}\left(2\delta(\mu)\sqrt{y}\right)K_{j+\nu}\left(2\alpha(\mu)\sqrt{y}\right)dy\\
&=\frac{1}{2}\alpha(\mu)^{j+\nu}\delta(\mu)^{i}\left(\alpha(\mu)^2-\delta(\mu)^2\right)^{-j-\nu-i-1}\Gamma(i+j+\nu+1).
\end{split}
\end{equation}
Taking into account the formula known from the normalisation of the Laguerre ensemble, 
$$
\det\left[\Gamma(i+j+\nu+1)\right]_{i,j=0}^{N-1}=\prod\limits_{j=1}^N\Gamma(j)\Gamma(j+\nu),
$$
we obtain the normalising constant (\ref{ZN}) in the formula for $P(y_1,\ldots,y_N)$.
The statement of Theorem \ref{PropositionTheMainJointProbabilityDensity} follows immediately.
\qed


\section{Proof of Theorem \ref{TheoremCorrelationKernelExact1}}
To derive an explicit formula for the correlation kernel of the biorthogonal ensemble under considerations we need the following Proposition.
\begin{prop}\label{CorrelationKernelBiorthogonal}
Let $\psi_j(x)$, $\varphi_j(x)$ be defined by equations (\ref{psij}) and (\ref{varphij}), where $j=0,1,\ldots,N-1$.
The correlation  kernel $K_N(x,y)$ of the biorthogonal ensemble defined by equation (\ref{detPP})
can be written as
\begin{equation}\label{KB}
K_N(x,y)=\sum\limits_{k,l=0}^{N-1}c_{k,l}\psi_k(x)\varphi_l(y),
\end{equation}
where the matrix $C=\left(c_{k,l}\right)_{k,l=0}^{N-1}$ is defined by
\begin{equation}\label{matrixG}
C=G^{-1},\;\; G=\left(g_{k,l}\right)_{k,l=0}^{N-1},\;\;g_{k,l}=\int\limits_0^{\infty}\psi_l(x)\varphi_k(x)dx.
\end{equation}
\end{prop}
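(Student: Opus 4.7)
The plan is to apply the standard Eynard--Mehta/Borodin machinery for biorthogonal ensembles. By Theorem \ref{PropositionTheMainJointProbabilityDensity}, the joint density has the product-of-determinants structure
\begin{equation*}
P(y_1,\ldots,y_N)=\frac{1}{Z_N}\det[\psi_{j-1}(y_i)]_{i,j=1}^N\det[\varphi_{j-1}(y_i)]_{i,j=1}^N,
\end{equation*}
with $\psi_j,\varphi_j$ given by \eqref{psij}--\eqref{varphij}. The strategy is to show that this density can be rewritten in the determinantal form $\frac{1}{N!}\det[K_N(y_i,y_j)]_{i,j=1}^N$ for the kernel $K_N$ defined in \eqref{KB}, and that $K_N$ is a reproducing kernel with trace $N$; Dyson's integration theorem then delivers the $k$-point correlation functions as $\det[K_N(y_i,y_j)]_{i,j=1}^k$, which is precisely the claim that the process is determinantal with correlation kernel $K_N$.

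First I would compute the normalisation $Z_N$ by applying Andreief's identity \eqref{AndId} to the two determinants, obtaining $Z_N = N!\det G$. Next I would write the matrix $[K_N(y_i,y_j)]_{i,j=1}^N$ as the triple product $\Psi\, C\,\Phi^{T}$, where $\Psi_{i,k+1}=\psi_k(y_i)$ and $\Phi_{j,l+1}=\varphi_l(y_j)$. Multiplicativity of the determinant, combined with $\det C=1/\det G$, immediately gives
\begin{equation*}
\frac{1}{N!}\det[K_N(y_i,y_j)]_{i,j=1}^N=\frac{1}{N!\det G}\det[\psi_{j-1}(y_i)]\det[\varphi_{j-1}(y_i)]=P(y_1,\ldots,y_N).
\end{equation*}
The two kernel identities $\int_0^\infty K_N(x,y)K_N(y,z)\,dy=K_N(x,z)$ and $\int_0^\infty K_N(x,x)\,dx=N$, upon substituting \eqref{KB} and the definition $g_{k,l}=\int\psi_l\varphi_k$, reduce respectively to the matrix equations $CGC=C$ and $\operatorname{tr}(CG)=N$, both of which are immediate from $C=G^{-1}$.

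No substantial obstacle arises: once the density has been cast in product-of-determinants form with $\psi_j,\varphi_j$ identified, the proposition reduces to linear algebra. The choice $C=G^{-1}$ is forced precisely by the requirement that $K_N$ be a reproducing kernel on the $N$-dimensional subspaces spanned by $\{\psi_0,\ldots,\psi_{N-1}\}$ and $\{\varphi_0,\ldots,\varphi_{N-1}\}$; any other choice would violate $CGC=C$. The result could therefore be quoted from Borodin \cite{BorodinBO}, but a direct derivation along the lines above is equally natural and self-contained.
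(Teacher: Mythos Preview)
Your argument is correct and is precisely the standard biorthogonal-ensemble computation; the paper does not spell this out at all but simply cites Borodin \cite{BorodinBO}, Section 2, whose content is exactly the derivation you sketch. So you have supplied the details the paper omits, with no divergence in method.
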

\begin{proof} See Borodin \cite{BorodinBO}, Section 2.
\end{proof}
The matrix entries of $G$ can be computed explicitly. Using equation (\ref{IntegralIK}) the result is
\begin{equation}
g_{k,l}=\frac{1}{2}\alpha(\mu)^{k+\nu}\delta(\mu)^l\left(\alpha(\mu)^2-\delta(\mu)^2\right)^{-k-\nu-l-1}(k+l+\nu)!\ .
\end{equation}
This yields
\begin{equation}\label{ckl}
c_{k,l}=\frac{2\left(\alpha(\mu)^2-\delta(\mu)^2\right)^{k+l+\nu+1}}{\alpha(\mu)^{\nu+l}\delta(\mu)^{k}}a_{k,l},
\end{equation}
where  $\left(a_{k,l}\right)_{k,l=0}^{N-1}$ is the inverse of the Hankel matrix
\begin{equation}\label{Hankel}
H_{N-1}=\left(h_{k+l}\right)_{k,l=0}^{N-1},\;\;h_k=(k+\nu)!\ .
\end{equation}
Thus the problem of the computation of the correlation kernel is reduced to that of finding the inverse of the Hankel matrix $H_{N-1}$ defined by equation (\ref{Hankel}). 
A general method to find the inverse of a Hankel matrix can be described as follows.

Assume that there exists a probability measure $d\mu(x)$ on $\R$ such that all moments exist:
$$
h_k=\int x^kd\mu(x),\;\;k=0,1,\ldots .
$$
Construct the corresponding system $\{P_k\}$ of orthonormal polynomials,
$$
\int P_k(x)P_l(x)d\mu(x)=\delta_{k,l},\;\;k,l\geq 0.
$$
Consider the Christoffel-Darboux kernel $K_n(x,y)$,
$$
K_n(x,y)=\sum\limits_{k=0}^{n}P_k(x)P_k(y),
$$
rewrite this kernel in the form
$$
K_n(x,y)=\sum\limits_{i=0}^n\sum\limits_{j=0}^nq_{i,j}^{(n)}x^iy^j,
$$
and set
$$
Q_n=\left(q_{i,j}^{(n)}\right)_{i,j=0}^n.
$$
\begin{prop}\label{PropositionHankelInverse} We have
$$
H_nQ_n=I_n,
$$
where $H_n=\left(h_{i+j}\right)_{i,j=0}^{n}$, and $I_n$ is the unit matrix of order $n+1$.
\end{prop}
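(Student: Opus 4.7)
The plan is to show directly that $(H_n Q_n)_{a,b} = \delta_{a,b}$ by combining the moment representation of the Hankel entries with the reproducing property of the Christoffel--Darboux kernel on polynomials of degree at most $n$.

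First, I would write out the matrix entry and use the moment representation $h_k = \int x^k d\mu(x)$. Since all sums are finite, the order of summation and integration can be exchanged freely:
\begin{equation}
(H_n Q_n)_{a,b} = \sum_{i=0}^n h_{a+i}\, q_{i,b}^{(n)} = \int x^a \left(\sum_{i=0}^n q_{i,b}^{(n)} x^i\right) d\mu(x).
\end{equation}
To recognise the inner sum, I would multiply by $y^b$ and sum over $b$, using the definition $K_n(x,y) = \sum_{i,j=0}^n q_{i,j}^{(n)} x^i y^j$ to obtain the generating identity
\begin{equation}
\sum_{b=0}^n (H_n Q_n)_{a,b}\, y^b = \int x^a\, K_n(x,y)\, d\mu(x).
\end{equation}

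The key step is then to evaluate the right-hand side via the reproducing property of the Christoffel--Darboux kernel. Since $x^a$ is a polynomial of degree $a \le n$, its expansion in the orthonormal basis $\{P_k\}_{k=0}^n$ reads $x^a = \sum_{k=0}^n c_{a,k} P_k(x)$, with $c_{a,k} = \int t^a P_k(t)\, d\mu(t)$. Substituting and using orthonormality term by term gives
\begin{equation}
\int x^a K_n(x,y)\, d\mu(x) = \sum_{k=0}^n c_{a,k}\, P_k(y) = y^a,
\end{equation}
valid for every $0 \le a \le n$. Equating the coefficient of $y^b$ on both sides of the previous identity yields $(H_n Q_n)_{a,b} = \delta_{a,b}$, which is exactly $H_n Q_n = I_n$.

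I do not anticipate a real obstacle: the whole argument is a direct manipulation of finite sums and the defining property of orthonormal polynomials. The only conceptual point is noticing that $K_n(\cdot,y)$ acts as a reproducing kernel on the $(n+1)$-dimensional space of polynomials of degree $\le n$, which is the very content that makes the Christoffel--Darboux kernel so useful. The existence of an auxiliary probability measure $d\mu$ with the prescribed moments is assumed in the hypothesis (and, for the specific case $h_k = (k+\nu)!$ relevant to the paper, it is realised by the Laguerre weight $x^\nu e^{-x} dx$ suitably normalised), so no additional analytic input is needed.
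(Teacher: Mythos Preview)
Your proof is correct and is essentially the same argument as the paper's: both hinge on the reproducing property $\int x^a K_n(x,y)\,d\mu(x)=y^a$ for $0\le a\le n$, followed by expanding $K_n$ in monomials and matching coefficients of $y^b$. The only cosmetic difference is that the paper invokes the reproducing property directly and then reads off the matrix identity, whereas you start from the matrix entry and work back to the kernel integral, supplying a short proof of the reproducing property along the way.
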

\begin{proof}
Using the reproducing property of the Christoffel-Darboux kernels we obtain
$$
\int x^kK_n(x,y)d\mu(x)=y^k,\;\;0\leq k\leq n.
$$
This can be rewritten as
$$
\int x^k\left(\sum\limits_{i=0}^n\sum\limits_{j=0}^nq_{i,j}^{(n)}x^iy^j\right)d\mu(x)
=\sum\limits_{i=0}^n\sum\limits_{j=0}^nq_{i,j}^{(n)}h_{k+i}y^j.
$$
Therefore,
$$
\sum\limits_{i=0}^n\sum\limits_{j=0}^nq_{i,j}^{(n)}h_{k+i}y^j=y^k,\;\;0\leq k\leq n.
$$
The equation just written above implies
$$
\sum\limits_{i=0}^nh_{k+i}q_{i,j}^{(n)}=\delta_{k,j},
$$
and the statement of the Proposition follows.
\end{proof}
\begin{prop} We have 
\begin{equation}\label{FormulaAkl}
a_{k,l}=\sum\limits_{p=0}^{N-1}\frac{(\nu+p)!(-p)_k(-p)_l}{p!(\nu+k)!k!(\nu+l)!l!}.
\end{equation}
\end{prop}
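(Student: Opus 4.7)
The plan is to apply Proposition \ref{PropositionHankelInverse} with the Laguerre weight on $[0,\infty)$. First, I observe that
$$h_k=(k+\nu)!=\int_0^\infty x^{k+\nu}e^{-x}dx,$$
so the $h_k$ are the moments of the (positive, not quite probability) measure $d\mu(x)=x^\nu e^{-x}dx$ on $\R_{\geq0}$. The proof of Proposition \ref{PropositionHankelInverse} uses only the existence of the orthonormal polynomials and the reproducing property of the Christoffel-Darboux kernel, so it applies verbatim; the overall scaling of the measure is immaterial because it cancels between the normalisation of the polynomials and the Hankel moments (alternatively one rescales to $d\mu/\nu!$ and rescales back at the end).

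Next, I would invoke the classical Laguerre polynomials, orthonormal with respect to $d\mu$,
$$\widetilde{L}_p^{(\nu)}(x)=\sqrt{\frac{p!}{(p+\nu)!}}L_p^{(\nu)}(x),\qquad L_p^{(\nu)}(x)=\sum_{k=0}^p\frac{(p+\nu)!\,(-p)_k}{(k+\nu)!\,k!\,p!}x^k,$$
where the explicit expansion uses $(-p)_k=(-1)^k p!/(p-k)!$ for $0\le k\le p$. Substituting into the Christoffel-Darboux kernel yields
$$K_{N-1}(x,y)=\sum_{p=0}^{N-1}\widetilde{L}_p^{(\nu)}(x)\widetilde{L}_p^{(\nu)}(y)=\sum_{k,l=0}^{N-1}x^ky^l\sum_{p=0}^{N-1}\frac{(\nu+p)!\,(-p)_k(-p)_l}{p!\,(\nu+k)!\,k!\,(\nu+l)!\,l!},$$
where the $k,l$ summation may be truncated at $N-1$ because $(-p)_k=0$ whenever $k>p$ and $p\le N-1$.

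By Proposition \ref{PropositionHankelInverse} with $n=N-1$, the coefficient of $x^ky^l$ in this double expansion is precisely $q_{k,l}^{(N-1)}=a_{k,l}$, which establishes the claimed formula (\ref{FormulaAkl}). There is no substantial obstacle: the argument is essentially bookkeeping once one recognises $(k+\nu)!$ as the Laguerre moments. The only minor subtlety is the overall normalisation of the weight, which I address above.
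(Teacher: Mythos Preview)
Your proof is correct and follows exactly the approach sketched in the paper: apply Proposition \ref{PropositionHankelInverse} with the Laguerre weight $x^{\nu}e^{-x}\,dx$, use the explicit expansion of the orthonormal Laguerre polynomials, and read off the coefficient of $x^ky^l$ in the Christoffel--Darboux kernel. The only difference is that you have written out the details (including the remark that the measure need not be normalised), whereas the paper simply points to the Laguerre polynomials and their standard explicit formula.
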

\begin{proof}Use Proposition \ref{PropositionHankelInverse}, and  observe that the relevant family of  orthogonal polynomials is that
of the classical Laguerre polynomials $\{L_n^{(\nu+k)}(x)\}$.  Then use the explicit formulae for $\{L_n^{(\nu+k)}(x)\}$ (see,  for example, \cite{GradshteinRyzhik} 8.970.1).
\end{proof}
After splitting factors accordingly among the functions $P_n(x)$ and $Q_n(x)$, 
including a factor of unity $(-1)^{n+n}$, 
formulae (\ref{KB}), (\ref{ckl}), and (\ref{FormulaAkl}) give us the expression for the correlation kernel stated in Theorem \ref{TheoremCorrelationKernelExact1}. Here we have also used that 
$(-n)_k=0$ for $k>n>0$.
\qed

\section{Proof of Theorem \ref{TheoremBorodinTypeFormula}}
In this Section we derive the formula for the correlation kernel $K_N(x,y)$ stated in Theorem \ref{TheoremBorodinTypeFormula} (equation \ref{KNMExplicitFormula1}). To obtain
equation (\ref{KNMExplicitFormula1}) from equations (\ref{K})-(\ref{Q}) of Theorem \ref{TheoremCorrelationKernelExact1} we use the following combinatorial fact.
\begin{prop}\label{prop6.1} Define $S(\alpha;k,r,N)$ by
\begin{equation}\label{Identity1}
S(\alpha;k,r,N)=\sum\limits_{n=0}^{N-1}\frac{n!}{(n-k)!(n-r)!}\Gamma(\alpha+n+1),
\end{equation}
where $N=1,2,\ldots;$ $k$, $r$ are two integers such that $0\leq k,r\leq N-1$, and $\alpha>-1$.
We have
\begin{equation}\label{Identity2}
S(\alpha;k,r,N)=\frac{(-1)^r\Gamma(\alpha+r+1)r!}{(N-1-k)!}\sum\limits_{i=0}^r\frac{\Gamma(N+i+\alpha+1)}{\Gamma(i+\alpha+1)}\frac{(-1)^i}{i!(r-i)!(\alpha+k+i+1)}.
\end{equation}
\end{prop}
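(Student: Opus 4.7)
The plan is to prove identity (\ref{Identity2}) by induction on $N$. Denote the right-hand side of (\ref{Identity2}) by $T(\alpha;k,r,N)$. For the base case $N=0$ the left-hand side of (\ref{Identity1}) is an empty sum and vanishes, while $T(\alpha;k,r,0)$ carries the prefactor $1/(-1-k)!=1/\Gamma(-k)$, which is zero for every nonnegative integer $k$. Hence both sides agree at $N=0$, and it suffices to check that the forward difference $T(\alpha;k,r,N+1)-T(\alpha;k,r,N)$ reproduces the new summand
$N!\,\Gamma(\alpha+N+1)/[(N-k)!(N-r)!]$ that appears when $N$ is incremented in (\ref{Identity1}).

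Using $1/(N-k)!=1/[(N-k)(N-1-k)!]$ to place both $T(\alpha;k,r,N+1)$ and $T(\alpha;k,r,N)$ over the common prefactor $1/(N-k)!$, and writing $\Gamma(N+i+\alpha+2)=(N+i+\alpha+1)\Gamma(N+i+\alpha+1)$, the combination $(N+i+\alpha+1)-(N-k)=\alpha+k+i+1$ appears inside the summand and cancels the factor $\alpha+k+i+1$ in the denominator of $T$. A short calculation yields
\begin{equation*}
T(\alpha;k,r,N+1)-T(\alpha;k,r,N)
=\frac{(-1)^{r}\,\Gamma(\alpha+r+1)}{(N-k)!}\sum_{i=0}^{r}(-1)^{i}\binom{r}{i}\frac{\Gamma(N+i+\alpha+1)}{\Gamma(i+\alpha+1)}.
\end{equation*}
Comparison with the desired new summand therefore reduces the proposition to the single finite-difference identity
\begin{equation*}
\sum_{i=0}^{r}(-1)^{i}\binom{r}{i}\frac{\Gamma(N+i+\alpha+1)}{\Gamma(i+\alpha+1)}
=\frac{(-1)^{r}\,N!\,\Gamma(\alpha+N+1)}{(N-r)!\,\Gamma(\alpha+r+1)}.
\end{equation*}

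For this last identity I would set $g(i)=\Gamma(\alpha+i+N+1)/\Gamma(\alpha+i+1)=\prod_{j=1}^{N}(\alpha+i+j)$, a monic polynomial of degree $N$ in $i$. Factoring out $\prod_{j=2}^{N}(\alpha+i+j)$ from $g(i+1)-g(i)$ gives the one-step relation $\Delta g(i)=N\,\Gamma(\alpha+i+N+1)/\Gamma(\alpha+i+2)$, and iterating this yields
\begin{equation*}
\Delta^{r}g(i)=\frac{N!}{(N-r)!}\cdot\frac{\Gamma(\alpha+i+N+1)}{\Gamma(\alpha+i+r+1)}.
\end{equation*}
Since $\sum_{i=0}^{r}(-1)^{i}\binom{r}{i}g(i)=(-1)^{r}\Delta^{r}g(0)$, evaluation at $i=0$ gives exactly the required identity; the degenerate range $N<r$ is automatically covered, since then $g$ has degree below $r$ so $\Delta^{r}g\equiv 0$, while $1/(N-r)!=0$ on the right-hand side.

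The only real obstacle is bookkeeping around the poles of the Gamma function in the various denominators: one must verify that the identifications $1/(N-1-k)!=1/\Gamma(N-k)$ and $1/(N-r)!=1/\Gamma(N-r+1)$ vanish exactly in the ranges one expects, so that both the base case and the telescoping in $N$ are valid uniformly over all admissible pairs $(k,r)$ with $0\le k,r\le N-1$. Once this is done, the identity is an immediate consequence of elementary finite-difference calculus applied to $g$.
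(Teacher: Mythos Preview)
Your proof is correct and takes a genuinely different route from the paper's. The paper proceeds by induction on $r$: it first evaluates the case $r=0$ via a telescoping sum, then derives from the definition the recurrence
\[
S(\alpha;k,r+1,N)=S(\alpha+1;k,r,N)-(\alpha+r+1)\,S(\alpha;k,r,N),
\]
which involves a simultaneous shift in $\alpha$, and finally checks by an index shift that the right-hand side obeys the same recurrence. Your argument instead inducts on $N$: the base case $N=0$ is trivial since $1/(-1-k)!=0$, and in the inductive step the two terms combine over the common prefactor $1/(N-k)!$ via the key cancellation $(N+i+\alpha+1)-(N-k)=\alpha+k+i+1$, which wipes out the only ``non-hypergeometric'' factor in the summand. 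What remains is the classical identity $\Delta^{r}g(0)=\dfrac{N!}{(N-r)!}\dfrac{\Gamma(\alpha+N+1)}{\Gamma(\alpha+r+1)}$ for the rising factorial $g(i)=\prod_{j=1}^{N}(\alpha+i+j)$, proved by iterating $\Delta g(i)=N\,\Gamma(\alpha+i+N+1)/\Gamma(\alpha+i+2)$.

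The two approaches trade off where the work is done. The paper's recurrence in $r$ keeps $N$ fixed but pays the price of shifting $\alpha$, so the induction step requires some care in matching terms. Your approach keeps $\alpha,k,r$ fixed and reduces everything to a one-line finite-difference fact; it also makes the degenerate ranges ($N\le k$ or $N\le r$) transparent, since both sides vanish by the $1/\Gamma$ convention. Your bookkeeping caveat is the right one, but it is harmless: the relation $1/(N-1-k)!=(N-k)/(N-k)!$ holds for all integers $N,k$ under that convention (both sides being zero when $N\le k$), so the telescoping in $N$ is uniform in $(k,r)$.
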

\begin{proof} We will prove the equivalence of expressions (\ref{Identity1}) and (\ref{Identity2}) by induction with respect to $r$.
Namely, we will check that the equivalence   of expressions (\ref{Identity1}) and (\ref{Identity2}) takes place for $r=0$,
then we will assume that equation (\ref{Identity2}) is valid for an arbitrary $r$, and then we will show that this identity remains to be valid
when we replace $r$ by $r+1$.

From the definition (\ref{Identity1}) we have at $r=0$
$$
S(\alpha;k,r=0,N)=\sum\limits_{n=0}^{N-1}\frac{\Gamma(\alpha+n+1)}{\Gamma(n-k+1)}=\sum\limits_{n=0}^{N-k-1}\frac{\Gamma(\alpha+k+n+1)}{\Gamma(n+1)}.
$$
Using the formula
$$
\frac{\Gamma(a+s)}{\Gamma(s)}-\frac{\Gamma(a+s+1)}{\Gamma(s+1)}=-a\frac{\Gamma(a+s)}{\Gamma(1+s)},
$$
it is not hard to see that for $a>0$
$$
\sum\limits_{n=0}^L\frac{\Gamma(a+n)}{\Gamma(1+n)}=\frac{\Gamma(a+L+1)}{a\Gamma(L+1)}.
$$
Replacing $L$ by $N-k-1$, and $a$ by $\alpha+k+1>0$, we obtain
$$
S(\alpha;k,r=0,N)=\frac{\Gamma(N+\alpha+1)}{(\alpha+k+1)\Gamma(N-k)}.
$$
On the other hand, if $r=0$, then the right-hand side of equation (\ref{Identity2}) can be rewritten as
$$
\frac{\Gamma(\alpha+1)}{(N-1-k)!}\;\frac{\Gamma(N+\alpha+1)}{\Gamma(\alpha+1)(\alpha+k+1)}=\frac{\Gamma(N+\alpha+1)}{(\alpha+k+1)\Gamma(N-k)}.
$$
So the Proposition is proved for $r=0$. 

Using formula (\ref{Identity1}) we can obtain a recurrence relation for $S(\alpha;k,r,N)$, namely
\begin{equation}\label{recurrence}
S(\alpha;k,r+1,N)=S(\alpha+1;k,r,N)-(\alpha+r+1)S(\alpha;k,r,N).
\end{equation}
Now assume that formula (\ref{Identity2}) holds true for a certain $r\in{\mathbb N}$. In order to see that it remains to be valid for $r+1$ it is enough to show
that the right-hand side of equation  (\ref{Identity2}) satisfies equation (\ref{recurrence}). To see this, note  that the right-hand side of equation
(\ref{recurrence}) (with  $S(\alpha;k,r,N)$ given by equation (\ref{Identity2})) can be explicitly rewritten as
\begin{equation}\label{Difference}
\begin{split}
&\frac{(-1)^{r}\Gamma(\alpha+r+2)r!}{(N-1-k)!}\sum\limits_{i=0}^r\frac{\Gamma(N+i+\alpha+2)}{\Gamma(i+\alpha+2)}\frac{(-1)^i}{i!(r-i)!(\alpha+k+i+2)}\\
&-\frac{(-1)^r\Gamma(\alpha+r+2)r!}{(N-1-k)!}\sum\limits_{i=0}^r\frac{\Gamma(N+i+\alpha+1)}{\Gamma(i+\alpha+1)}\frac{(-1)^i}{i!(r-i)!(\alpha+k+i+1)}.
\end{split}
\end{equation}
Changing the index of summation in the first sum by one, $i\mapsto j=i+1$, we can rewrite expression (\ref{Difference}) as
\begin{equation}
\begin{split}
&\frac{(-1)^{r+1}\Gamma(\alpha+r+2)r!}{(N-1-k)!}\sum\limits_{j=1}^{r+1}\frac{\Gamma(N+j+\alpha+1)}{\Gamma(j+\alpha+1)}\frac{(-1)^jj}{j!(r+1-j)!(\alpha+k+j+1)}\\
&+\frac{(-1)^{r+1}\Gamma(\alpha+r+2)r!}{(N-1-k)!}\sum\limits_{j=0}^r\frac{\Gamma(N+j+\alpha+1)}{\Gamma(j+\alpha+1)}\frac{(-1)^j(r+1-j)}{j!(r+1-j)!(\alpha+k+j+1)}.
\end{split}
\nonumber
\end{equation}
Clearly, the sum of the two terms just written above can be represented as
\begin{equation}
\begin{split}
&\frac{(-1)^{r+1}\Gamma(\alpha+r+2)(r+1)!}{(N-1-k)!}\sum\limits_{j=0}^{r+1}\frac{\Gamma(N+j+\alpha+1)}{\Gamma(j+\alpha+1)}\frac{(-1)^j}{j!(r+1-j)!(\alpha+k+j+1)},
\end{split}
\nonumber
\end{equation}
which is $S(\alpha;k,r+1,N)$ as
given by equation (\ref{Identity2}). Thus we have seen that
the right-hand side of equation (\ref{Identity2}) satisfies equation (\ref{recurrence}). The Proposition is proved.
\end{proof}

Setting $\alpha=M-N$ and $r=l$ in Proposition \ref{prop6.1} and multiplying with $(-1)^{k+l}$ we obtain the following 
\begin{cor} The following identity holds true
\begin{equation}\label{CrucialSummationFormula}
\begin{split}
&\sum\limits_{p=0}^{N-1}\frac{(M-N+p)!}{p!}(-p)_k(-p)_l\\
&=\frac{(M-N+l)!l!}{(N-1-k)!}\sum\limits_{i=0}^l\frac{(i+M)!}{(M-N+i)!}\frac{(-1)^{i+k}}{i!(l-i)!(M-N+k+i+1)},
\end{split}
\end{equation}
where $M\geq N$.
\end{cor}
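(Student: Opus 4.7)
The plan is to observe that the corollary is a direct specialization and cosmetic rewriting of Proposition \ref{prop6.1}, rather than a fresh inductive argument. Concretely, I would set $\alpha = M-N$ and $r = l$ in the identity for $S(\alpha;k,r,N)$ established in Proposition \ref{prop6.1}, translate the factorials on the LHS into Pochhammer symbols, and then multiply through by $(-1)^{k+l}$.

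First I would rewrite the LHS of Proposition \ref{prop6.1} at $\alpha = M-N$, $r = l$, namely
\[
S(M-N;k,l,N)=\sum_{n=0}^{N-1}\frac{n!}{(n-k)!(n-l)!}(M-N+n)!,
\]
with the convention that $1/(n-k)! = 0$ when $n<k$ (and similarly for $l$). The key step is then to apply the identity
\[
(-p)_k=(-1)^k\frac{p!}{(p-k)!}\qquad (p\ge k\text{ integer}),
\]
and its analogue for $(-p)_l$, which gives
\[
\frac{p!}{(p-k)!(p-l)!}=\frac{(-1)^{k+l}}{p!}(-p)_k(-p)_l,
\]
so that $(-1)^{k+l}S(M-N;k,l,N)$ is exactly the left-hand side of equation (\ref{CrucialSummationFormula}).

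For the RHS, I would substitute $\alpha = M-N$, $r = l$ into the closed form
\[
S(\alpha;k,r,N)=\frac{(-1)^r\Gamma(\alpha+r+1)r!}{(N-1-k)!}\sum_{i=0}^r\frac{\Gamma(N+i+\alpha+1)}{\Gamma(i+\alpha+1)}\frac{(-1)^i}{i!(r-i)!(\alpha+k+i+1)},
\]
converting the Gamma functions to factorials: $\Gamma(M-N+l+1)=(M-N+l)!$, $\Gamma(N+i+M-N+1)=(M+i)!$, and $\Gamma(i+M-N+1)=(M-N+i)!$. Multiplying by $(-1)^{k+l}$ absorbs the $(-1)^l$ into a $(-1)^k$ attached to the sum and yields the $(-1)^{i+k}$ factor displayed in the corollary. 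Matching the two sides then gives equation (\ref{CrucialSummationFormula}) exactly.

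There is no genuine obstacle here; the only thing to be slightly careful about is the range of summation on the LHS (terms with $n<\max(k,l)$ automatically vanish on both sides of the rewriting, so extending the sum from $n=\max(k,l)$ down to $n=0$ is harmless) and to check the sign bookkeeping $(-1)^r\cdot(-1)^{k+l} = (-1)^k(-1)^{2l} = (-1)^k$ so that the summand sign becomes $(-1)^{i+k}$. Once these are in place, the identity is nothing more than a repackaging of Proposition \ref{prop6.1}.
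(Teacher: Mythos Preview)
Your proposal is correct and follows exactly the same route as the paper: the paper's proof consists of the single sentence ``Setting $\alpha=M-N$ and $r=l$ in Proposition \ref{prop6.1} and multiplying with $(-1)^{k+l}$ we obtain the following,'' which is precisely what you do, only with the Pochhammer-to-factorial conversion and sign bookkeeping spelled out in more detail than the paper bothers to.
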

To get  equation (\ref{KNMExplicitFormula1}) for the correlation kernel $K_N(x,y)$ use formula
(\ref{CrucialSummationFormula}), and  equations (\ref{K})-(\ref{Q}) of Theorem \ref{TheoremCorrelationKernelExact1}.
\qed

\section{Proof of Proposition \ref{Biorthogonality}}
In this Section we begin to investigate the properties of the functions $P_n(x)$ and $Q_n(x)$ defined by
equations (\ref{P}) and (\ref{Q}). In particular,  we show that $P_n(x)$ and $Q_n(x)$  are biorthogonal functions.
To see this define two matrices, $V=\left(v_{k,p}\right)_{k,p=0}^{N-1}$ and $W=\left(w_{p,l}\right)_{p,l=0}^{N-1}$, by the formulae
\begin{equation}\label{vkp}
v_{k,p}=(-1)^p\frac{(\nu+p)!p!(-p)_k}{(\nu+k)!k!}\frac{\left(\alpha(\mu)^2-\delta(\mu)^2\right)^{k+\frac{1}{2}}}{\delta(\mu)^k},
\end{equation}
and
\begin{equation}\label{wpl}
w_{p,l}=(-1)^p\frac{2(-p)_l}{(p!)^2(\nu+l)!l!}\frac{\left(\alpha(\mu)^2-\delta(\mu)^2\right)^{l+\nu+\frac{1}{2}}}{\alpha(\mu)^{l+\nu}}.
\end{equation}
In addition, introduce two column vectors, ${\bf{\Psi}}(x)$, and ${\bf{\Phi}}(y)$,
\begin{equation}
{\bf{\Psi}}(x)=\left(\begin{array}{c}
                  \psi_0(x) \\
                  \psi_1(x) \\
                  \vdots \\
                  \psi_{N-1}(x)
                \end{array}
\right),\;\;\;
{\bf{\Phi}}(y)=\left(\begin{array}{c}
                  \varphi_0(y) \\
                  \varphi_1(y) \\
                  \vdots \\
                  \varphi_{N-1}(y)
                \end{array}
\right),
\nonumber
\end{equation}
where $\psi_j(x)$ and $\varphi_j(y)$ are defined by equations (\ref{psij}) and (\ref{varphij}).
Set
\begin{equation}\label{PQ}
{\bf{P}}(x)=V^{T}{\bf{\Psi}}(x),\;\; {\bf{Q}}(y)=W{\bf{\Phi}}(y).
\end{equation}
By elementary Linear Algebra calculations, the correlation kernel $K_{N}(x,y)$ equation (\ref{K})
can be written as
\begin{equation}\label{KernelBrief}
K_{N}(x,y)={\bf{P}}^T(x){\bf{Q}}(y).
\end{equation}
Observe that the matrix $G$ (defined by equation (\ref{matrixG})) can be written as
$$
G=\int\limits_{0}^{\infty}{\bf{\Phi}}(x){\bf{\Psi}}^T(x)dx.
$$
The notation above means that we integrate each matrix element of the $N\times N$ matrix
${\bf{\Phi}}(x){\bf{\Psi}}^T(x)$ from $0$ to $\infty$.

The matrix $C=\left(c_{k,l}\right)_{k,l=0}^{N-1}$
(whose matrix elements are given explicitly by equations (\ref{ckl}) and (\ref{FormulaAkl})) is the inverse of the matrix $G$. Therefore we can write
$$
C^{-1}=\int\limits_{0}^{\infty}{\bf{\Phi}}(x){\bf{\Psi}}^T(x)dx.
$$
The key observation is that
$$
C=VW,
$$
as it follows from  equations (\ref{ckl}), (\ref{FormulaAkl}), (\ref{vkp}), and (\ref{wpl}).
Since $C$ is invertible, both matrices $V$, $W$ are invertible, and we have
$$
\left(VW\right)^{-1}=\int\limits_{0}^{\infty}{\bf{\Phi}}(x){\bf{\Psi}}^T(x)dx,
$$
or
$$
I=\left(\int\limits_{0}^{\infty}{\bf{\Phi}}(x){\bf{\Psi}}^T(x)dx\right)VW.
$$
Multiplying both sides of the equation just written above by $W$ from the left, and using the definitions of the vectors
${\bf{P}}(x)$, ${\bf{Q}}(y)$ (see equation (\ref{PQ}) solved for ${\bf{\Phi}}(x)$ and ${\bf{\Psi}}^T(x)$) we obtain
$$
W=\left(\int\limits_{0}^{\infty}{\bf{Q}}(x){\bf{P}}^T(x)dx\right)W.
$$
Since the matrix $W$ is invertible, we conclude that
$$
\int\limits_{0}^{\infty}{\bf{Q}}(x){\bf{P}}^T(x)dx=I.
$$
In other words,  $P_n(x)$ and $Q_n(x)$  are biorthogonal functions. Proposition \ref{Biorthogonality} is proved.
\qed
\section{Proof of Proposition \ref{PropositionRecurrenceRelations} and Theorem \ref{TheoremChristoffelDarboux}}
In this Section we derive the recurrence relations for the functions $P_n(x)$ and $Q_n(y)$ stated in Proposition \ref{PropositionRecurrenceRelations}.
Using these recurrence relations we derive the Christoffel-Darboux type formula for the correlation kernel $K_N(x,y)$, and prove Theorem \ref{TheoremChristoffelDarboux}.
First, let us obtain equations (\ref{xPn1})-(\ref{xPn6}).
Setting
\begin{equation}
\hI_k(x)=\frac{k!x^{\frac{k}{2}}}{\delta(\mu)^k}I_k(2\delta(\mu)\sqrt{x}),\;\; k=0,1,\ldots\ ,
\end{equation}
the following recurrence relation holds true:
\begin{equation}\label{RRI}
x\hI_k(x)=\hI_{k+1}(x)+\frac{\delta(\mu)^2}{(k+1)(k+2)}\hI_{k+2}(x),\;\; \;\; k=0,1,\ldots
\end{equation}
To see this, use the recurrence relation for the Bessel functions, namely
$$
zI_{\nu}(z)=2(\nu+1)I_{\nu+1}(z)+zI_{\nu+2}(z).
$$
Introduce the vectors
$$
\mathbf{\hI}(x)=\left(\begin{array}{c}
          \hI_0(x)\\
          \hI_1(x)\\
          \vdots\\

       \end{array}
\right),\;\;\;
\mathbf{P}(x)=\left(\begin{array}{c}
               P_0(x) \\
               P_1(x) \\
               \vdots \\

             \end{array}
\right).
$$
The recurrence relations for the functions $\hI_k$ (equation(\ref{RRI})) can be rewritten as
\begin{equation}\label{xI}
x\mathbf{\hI}(x)=E\mathbf{\hI}(x),
\end{equation}
where the matrix $E$ is defined by the formula
\begin{equation}\label{MatrixE}
E_{k,m}=\delta_{k+1,m}+\frac{\delta(\mu)^2}{(k+1)(k+2)}\delta_{k+2,m};\;\;\; k,m=0,1,\ldots.
\end{equation}
Moreover, set
\begin{equation}\label{MatrixV}
\mathcal{V}_{p,k}=(-1)^p\frac{(\nu+p)!p!(-p)_k}{(\nu+k)!(k!)^2}
\left(\alpha(\mu)^2-\delta(\mu)^2\right)^{k+\frac{1}{2}},\;\;p,k=0,1,\ldots.
\end{equation}
Then we have
\begin{equation}\label{hP}
\mathbf{P}(x)=\mathcal{V}\mathbf{\hI}(x),
\end{equation}
where $\mathcal{V}=\left(\mathcal{V}_{p,k}\right)_{0\leq p,k\leq\infty}$.
From equations (\ref{xI}) and (\ref{hP}) we immediately obtain
\begin{equation}\label{xhP}
x\mathbf{P}(x)=\mathcal{V}E\mathbf{\hI}(x).
\end{equation}
Introduce the matrix $R_P$ by the formula
\begin{equation}\label{xhP1}
x\mathbf{P}(x)=R_P\mathbf{P}(x).
\end{equation}
The matrix $R_P$ is defining the recurrence relation for the functions $P_0(x)$, $P_1(x)$, $\ldots$ From equations (\ref{xI})-(\ref{xhP1})
we find
\begin{equation}\label{RPVEV}
R_P=\mathcal{V}E\mathcal{V}^{-1}.
\end{equation}
In the explicit calculations of the matrix $R_P$ below (and in the derivation of the recurrence relations)  we will exploit the following Lemma.
\begin{lem}For any 
non-negative
integers $i$,$j$ the following formulae hold true:
\begin{equation}\label{SUMMA1}
\sum\limits_{m=0}^{\infty}\frac{(-1)^{m+i}}{(i-m)!(m-j)!}=\delta_{i,j},
\end{equation}
\begin{equation}\label{SUMMA2}
\begin{split}
\sum\limits_{m=0}^{\infty}&\frac{(-1)^{m+i}(\nu+m+1)(m+1)^2}{(i-m)!(m+1-j)!}=(\nu+i+1)(i+1)^2\delta_{i+1,j}
\\
&
+\left(i^2+2i(\nu+i)+\nu+3i+1\right)\delta_{i,j}
+(\nu+3i)\delta_{i-1,j}+\delta_{i-2,j},
\end{split}
\end{equation}
\begin{equation}\label{SUMMA3}
\begin{split}
\sum\limits_{m=0}^{\infty}&\frac{(-1)^{m+i}(\nu+m+1)^2(m+1)}{(i-m)!(m+1-j)!}=
(i+1)(i+\nu+1)^2\delta_{i+1,j}\\
&+\left((\nu+i)^2+2(i+1)(i+\nu)+i+1\right)\delta_{i,j}+(2\nu+3i)\delta_{i-1,j}+\delta_{i-2,j},
\end{split}
\end{equation}
\begin{equation}\label{SUMMA4}
\begin{split}
\sum\limits_{m=0}^{\infty}&\frac{(-1)^{m+i}(m+1)(m+2)(\nu+m+1)(m+2)}{(i-m)!(m+2-j)!}=
(\nu+i+2)(\nu+i+1)(i+2)(i+1)\delta_{i+2,j}\\
&+2(i+1)(\nu+2i+2)(\nu+i+1)\delta_{i+1,j}+\left((\nu+i)(\nu+5i+3)+i(i+3)+2\right)\delta_{i,j}
\\
&+2(\nu+2i)\delta_{i-1,j}+\delta_{i-2,j}.
\end{split}
\end{equation}
\end{lem}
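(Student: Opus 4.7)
The plan is to establish (SUMMA1) directly and then reduce (SUMMA2)--(SUMMA4) to it via a Newton forward-difference expansion of the numerator. For (SUMMA1), I would substitute $k = m - j$ (using the convention $1/k! = 0$ at negative integers to handle the lower endpoint) to rewrite the sum as
\[
\frac{(-1)^{i+j}}{(i-j)!}\sum_{k=0}^{i-j}(-1)^k \binom{i-j}{k},
\]
which vanishes for $i > j$ by the binomial expansion of $(1-1)^{i-j}$ and equals $1$ at $i = j$.

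For (SUMMA2)--(SUMMA4), let $f(m)$ denote the polynomial numerator (of degree $3$ in the first two cases, degree $4$ in the last) and let the shifted factorial in the denominator be $(m+s-j)!$, with $s=1$ in (SUMMA2), (SUMMA3) and $s=2$ in (SUMMA4). As a preliminary check, $f(-1) = 0$ in (SUMMA2) and (SUMMA3), and $f(-1) = f(-2) = 0$ in (SUMMA4), which is immediate from the explicit $(m+1)$ and $(m+2)$ factors. This vanishing means that the sum can be extended to all $m \in \mathbb{Z}$ with the convention $1/\Gamma(-k+1) = 0$ at non-positive integers without introducing boundary contributions.

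The heart of the argument is the Newton forward-difference expansion
\[
f(m) \;=\; \sum_{c=0}^{\deg f} \frac{\Delta^c f(j-s)}{c!}\,(m+s-j)(m+s-j-1)\cdots(m+s-j-c+1),
\]
centred at $m_0 = j-s$, where $\Delta$ is the forward-difference operator acting in $m$. The falling-factorial numerator on the right telescopes with $(m+s-j)!$ in the original denominator, leaving $1/(m+s-j-c)!$. Each degree-$c$ piece therefore becomes an instance of (SUMMA1) with $j$ shifted to $j+c-s$, contributing $\delta_{i,\,j+c-s}\,\Delta^c f(j-s)/c!$. Writing this out produces the claimed four deltas $\delta_{i+1,j},\delta_{i,j},\delta_{i-1,j},\delta_{i-2,j}$ in (SUMMA2)--(SUMMA3) and the five deltas $\delta_{i+2,j},\ldots,\delta_{i-2,j}$ in (SUMMA4).

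The only remaining obstacle is pure bookkeeping: for each of the three identities one must compute $\Delta^c f(j-s)$ explicitly for $c = 0, 1, \ldots, \deg f$ and verify, after substituting the value of $j$ that each Kronecker delta picks out, that the resulting coefficient matches the stated polynomial in $i$ and $\nu$. Expanding $(\nu+m+1)(m+1)^2$, $(\nu+m+1)^2(m+1)$, and $(\nu+m+1)(m+1)(m+2)^2$ and applying $\Delta$ up to four times is mechanical with no conceptual surprises; the structure of the proof is rigid once (SUMMA1) and the finite-difference expansion are in place.
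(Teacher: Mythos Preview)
Your proposal is correct and follows a route that is related to, but methodologically distinct from, the paper's argument. The paper encodes the polynomial numerator $f(m)$ as a differential operator acting on the generating function $x^{j}(x-1)^{i-j+s}$ and evaluates at $x=1$; you instead expand $f(m)$ in the Newton (falling-factorial) basis centred at $m_{0}=j-s$, so that each term telescopes with $(m+s-j)!$ in the denominator and reduces to the $\mathbb{Z}$-extended form of (SUMMA1). Your preliminary observation that $f(-1)=\cdots=f(-s)=0$ is precisely what justifies this extension; an equivalent way to phrase its effect is that the low-order Newton coefficients $\Delta^{c}f(j-s)$ with $j+c-s<0$ automatically vanish, so only shifts to non-negative $j'$ ever contribute and the appeal to (SUMMA1) is unproblematic. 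The paper remarks that the cases $j=0,1$ must be handled separately in its differential-operator computation; your extension to $m\in\mathbb{Z}$ disposes of exactly this boundary issue in one stroke. Both approaches are elementary and leave the same residue of mechanical bookkeeping; yours is somewhat more algorithmic, while the paper's makes the polynomial structure of the resulting coefficients in $i$ and $\nu$ emerge a little earlier via the Leibniz rule.
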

\begin{proof}Equation (\ref{SUMMA1}) is a reformulation of the fact that
$$
\left.(x-1)^{k-m}\right|_{x=1}=\delta_{k,m}.
$$
Equations (\ref{SUMMA2})-(\ref{SUMMA4}) can be derived using 
straightforward calculations. For example equation (\ref{SUMMA2}) 
is obtained by differentiating $\partial_x(x^{\nu+1}\partial_x(x\partial_x(x^j(x-1)^{i-j+1})))$ at $x=1$, after normalising by $(i-j+1)!$. Here $j=0,1$ have to be treated separately. The remaining equations follow in a similar fashion.
\end{proof}
\begin{prop} The matrix $\mathcal{V}$ is invertible, and its inverse is given by
\begin{equation}
\left(\mathcal{V}^{-1}\right)_{k,l}=\frac{(k!)^2(\nu+k)!}{(k-l)!(l!)^2(\nu+l)!}
\frac{1}{(\alpha(\mu)^2-\delta(\mu)^2)^{k+\frac{1}{2}}}
\;\;k,l=0,1,\ldots.
\end{equation}
\end{prop}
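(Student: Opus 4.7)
The plan is to verify directly that the proposed formula defines a two-sided inverse of $\mathcal{V}$, exploiting the lower triangular structure of both matrices and reducing the computation to the elementary binomial identity $\sum_{j=0}^n (-1)^j \binom{n}{j}=[n=0]$.

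The first observation I would make is that $\mathcal{V}$ is lower triangular. Indeed, for non-negative integers $p$ and $k$ with $k>p$, the Pochhammer symbol $(-p)_k=(-p)(-p+1)\cdots(-p+k-1)$ contains the factor $0$, so $\mathcal{V}_{p,k}=0$. The candidate inverse is likewise lower triangular because of the factor $(k-l)!$ in its denominator (with the usual convention that $1/n!=0$ for negative integers $n$). Moreover, the diagonal entries of $\mathcal{V}$ are nonzero: using $(-p)_p=(-1)^p p!$ one finds $\mathcal{V}_{p,p}=p!(\alpha(\mu)^2-\delta(\mu)^2)^{p+1/2}/(p!)^2 \cdot (-1)^{2p}$, which is manifestly nonzero. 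So invertibility reduces to checking a single matrix product.

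The main step is to verify that $(\mathcal{V}\mathcal{V}^{-1})_{p,l}=\delta_{p,l}$. For $p\ge l$, the $(p,l)$ entry is
$$
\sum_{k=l}^{p}\mathcal{V}_{p,k}(\mathcal{V}^{-1})_{k,l}=\frac{(-1)^p(\nu+p)!\,p!}{(l!)^2(\nu+l)!}\sum_{k=l}^{p}\frac{(-p)_k}{(k-l)!},
$$
since the factors $(\nu+k)!$, $(k!)^2$ and $(\alpha(\mu)^2-\delta(\mu)^2)^{k+1/2}$ cancel cleanly between $\mathcal{V}_{p,k}$ and $(\mathcal{V}^{-1})_{k,l}$. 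Substituting $(-p)_k=(-1)^k p!/(p-k)!$ and changing the summation index to $j=k-l$, the inner sum becomes
$$
\frac{1}{(p-l)!}\sum_{j=0}^{p-l}(-1)^{j+l}\binom{p-l}{j}=\frac{(-1)^l(1-1)^{p-l}}{(p-l)!},
$$
which vanishes whenever $p>l$ and equals $1$ when $p=l$. Tracking the prefactor at $p=l$ gives exactly $1$, completing the verification of $\mathcal{V}\mathcal{V}^{-1}=I$.

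Finally, because both $\mathcal{V}$ and $\mathcal{V}^{-1}$ are lower triangular with nonzero diagonal entries, the right inverse automatically coincides with the left inverse, so $\mathcal{V}^{-1}\mathcal{V}=I$ follows for free. There is no substantive obstacle in this argument: once one notices the triangularity and tracks the cancellations, the proof collapses to the binomial identity above. The only care required is bookkeeping to confirm that the $k$-dependent factors cancel in the predicted way, which is the reason the authors have set up $\mathcal{V}$ with the specific normalisation appearing in \eqref{MatrixV}.
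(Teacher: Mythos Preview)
Your approach is exactly the paper's: the authors verify the inverse by direct calculation using formula \eqref{SUMMA1}, which is precisely the binomial identity $\sum_j(-1)^j\binom{n}{j}=\delta_{n,0}$ you invoke after exploiting the lower-triangular structure. One minor slip in your bookkeeping: the displayed inner sum is missing a factor of $p!$ from the substitution $(-p)_k=(-1)^k p!/(p-k)!$, so it should read $\frac{(-1)^l p!}{(p-l)!}\sum_{j=0}^{p-l}(-1)^j\binom{p-l}{j}$; with this correction the prefactor check at $p=l$ indeed yields $1$ as you claim.
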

\begin{proof}This can be checked by direct calculations using formula (\ref{SUMMA1}).
\end{proof}
Equation (\ref{xhP1}) says that it is enough to compute the matrix $R_P$ explicitly to obtain the recurrence for $P_0(x)$, $P_1(x)$, $\ldots$ (equations (\ref{xPn1})-(\ref{xPn6})). This can be done exploiting formula (\ref{RPVEV}), the formula for the matrix elements of $\mathcal{V}$ (equation (\ref{MatrixV})), and that for the matrix elements of $E$ (equation (\ref{MatrixE})). In the computations we use formulae (\ref{SUMMA2}), (\ref{SUMMA4}) to express the sums involved in terms of the Kronecker symbols.

Now we turn to derivation of the recurrence relation for $Q_0(y)$, $Q_1(y)$, $\ldots$ (equations (\ref{yQn1})-(\ref{yQn6})).
Set
$$
\widehat{K}_l(y)=\frac{(l+\nu)!y^{\frac{l+\nu}{2}}}{\alpha(\mu)^{l+\nu}}K_{l+\nu}(2\alpha(\mu)\sqrt{y}).
$$
We have
\begin{equation}\label{RRK}
y\widehat{K}_l(y)=-\widehat{K}_{l+1}(y)+\frac{\alpha(\mu)^2}{(l+\nu+2)(l+\nu+1)}\widehat{K}_{l+2}(y).
\end{equation}
To see that equation (\ref{RRK}) holds true use the recurrence relations
$$
zK_{\nu}(z)=-2(\nu+1)K_{\nu+1}(z)+zK_{\nu+2}(z).
$$
Introduce the vectors
$$
\mathbf{\hK}(y)=\left(\begin{array}{c}
          \hK_0(y)\\
          \hK_1(y)\\
          \vdots\\

       \end{array}
\right),\;\;\;
\mathbf{Q}(y)=\left(\begin{array}{c}
               Q_0(y) \\
               Q_1(y) \\
               \vdots \\

             \end{array}
\right).
$$
Then the recurrence relation for the functions $\hK_k(y)$ (equation (\ref{RRK})) can be rewritten as
\begin{equation}\label{xK}
y\mathbf{\hK}(y)=\widetilde{E}\mathbf{\hK}(y),
\end{equation}
where the matrix $\widetilde{E}$ is defined by the formula
\begin{equation}\label{MatrixETILDA}
\widetilde{E}_{k,m}=-\delta_{k+1,m}+\frac{\alpha(\mu)^2}{(k+\nu+1)(k+\nu+2)}\delta_{k+2,m}\ ,\;\;\; k,m=0,1,\ldots.
\end{equation}
Moreover, set
$$
{\mathcal W}_{p,k}=\frac{2(-1)^p(-p)_k}{(p!)^2k!((\nu+k)!)^2}
\left(\alpha(\mu)^2-\delta(\mu)^2\right)^{\nu+k+\frac{1}{2}}
\ ,\;\;\; p,k=0,1,\ldots.
$$
We have
$$
\mathbf{Q}(y)={\mathcal W}\mathbf{\hK}(y).
$$
By the same argument as in the derivation of the recurrence relation for the functions $\hP_p(x)$ we find that the recurrence matrix $R_Q$ for the functions $Q_p(y)$ is given by
$$
R_{Q}={\mathcal W}\widetilde{E}{\mathcal W}^{-1}.
$$
\begin{prop}We have
$$
\left({\mathcal W}^{-1}\right)_{k,l}=\frac{((\nu+k)!)^2l!k!}{2(k-l)!}
\frac{1}{\left(\alpha(\mu)^2-\delta(\mu)^2\right)^{k+\nu+\frac{1}{2}}}.
$$
\end{prop}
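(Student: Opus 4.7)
The plan is to verify the identity $\mathcal{W} \cdot \mathcal{W}^{-1} = I$ by direct matrix multiplication, following the same template used earlier for the invertibility of $\mathcal{V}$. The structure of the matrices is similar enough that the same combinatorial lemma, equation (\ref{SUMMA1}), should produce the Kronecker delta on the nose.

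First I would write out the $(p,l)$-entry of the product,
\begin{equation*}
\left(\mathcal{W}\,\mathcal{W}^{-1}\right)_{p,l}
=\sum_{k}\frac{2(-1)^p(-p)_k}{(p!)^2\,k!\,((\nu+k)!)^2}\left(\alpha(\mu)^2-\delta(\mu)^2\right)^{\nu+k+\tfrac{1}{2}}\cdot\frac{((\nu+k)!)^2\,l!\,k!}{2(k-l)!}\frac{1}{\left(\alpha(\mu)^2-\delta(\mu)^2\right)^{k+\nu+\tfrac{1}{2}}}.
\end{equation*}
The powers of $\alpha(\mu)^2-\delta(\mu)^2$, the prefactors of $2$, the $((\nu+k)!)^2$'s, and the $k!$'s all cancel immediately, leaving
\begin{equation*}
\left(\mathcal{W}\,\mathcal{W}^{-1}\right)_{p,l}=\frac{(-1)^p\,l!}{(p!)^2}\sum_{k}\frac{(-p)_k}{(k-l)!}.
\end{equation*}

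Next I would use $(-p)_k=(-1)^k p!/(p-k)!$ for $0\le k\le p$ and $(-p)_k=0$ for $k>p$ to reduce the sum to a finite range, obtaining
\begin{equation*}
\left(\mathcal{W}\,\mathcal{W}^{-1}\right)_{p,l}=\frac{(-1)^p\,l!}{p!}\sum_{k=l}^{p}\frac{(-1)^k}{(p-k)!\,(k-l)!}.
\end{equation*}
Setting $i=p$ and $j=l$ in equation (\ref{SUMMA1}), the inner sum is exactly $(-1)^p\delta_{p,l}$. This yields $\left(\mathcal{W}\,\mathcal{W}^{-1}\right)_{p,l}=\delta_{p,l}$, as required. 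A symmetric computation for $\mathcal{W}^{-1}\mathcal{W}$ works in the same way, or alternatively one can argue that the triangular structure (both $\mathcal{W}$ and the proposed inverse are upper-triangular in $k$, since $(-p)_k$ vanishes for $k>p$ and $1/(k-l)!$ is interpreted as zero for $l>k$) makes one-sided inversion sufficient.

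There is essentially no obstacle: the whole step is a bookkeeping exercise whose only delicate point is tracking the summation range so that the Pochhammer symbol $(-p)_k$ and the factorial $(k-l)!$ are both well-defined, and recognising the resulting telescoping binomial identity as an instance of (\ref{SUMMA1}). The proof proposal is therefore simply: perform the multiplication, cancel the common analytic factors, and invoke (\ref{SUMMA1}) — exactly the recipe used for $\mathcal{V}^{-1}$.
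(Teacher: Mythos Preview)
Your proposal is correct and follows exactly the approach the paper indicates: direct verification of $\mathcal{W}\mathcal{W}^{-1}=I$ by cancelling the analytic prefactors and reducing the remaining sum to the binomial identity (\ref{SUMMA1}). One small slip: $\mathcal{W}$ and the proposed inverse are \emph{lower}-triangular (nonzero only for $k\le p$, respectively $l\le k$), not upper-triangular, but this does not affect your argument.
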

\begin{proof}The formula for $\left({\mathcal W}^{-1}\right)_{k,l}$ can be obtained by direct calculations using formula (\ref{SUMMA1}).
\end{proof}
The subsequent computation leading to the recurrence relation for the functions $Q_0(y)$, $Q_1(y)$, $\ldots$
is very similar to that leading to the recurrence relation for the functions $P_0(x)$, $P_1(x)$, $\ldots$, where in the evaluation of the matrix $R_Q$ we use equations (\ref{SUMMA3}) and (\ref{SUMMA4}).
Proposition \ref{PropositionRecurrenceRelations} is proved.
\qed
\ \\

Now let us prove Theorem \ref{TheoremChristoffelDarboux}.  Setting $P_{-n}(x)=0=Q_{-n}(x)$ for $n=1,2$ we can apply the recurrence from Proposition \ref{PropositionRecurrenceRelations} as follows:
\begin{equation}
\begin{split}
(x-y)P_n(x)Q_n(y)=&\ \ a_{-2,n}P_{n-2}(x)Q_n(y)-a_{-2,n+2}P_{n}(x)Q_{n+2}(y)\\
&+a_{-1,n}P_{n-1}(x)Q_n(y)-a_{-1,n+1}P_{n}(x)Q_{n+1}(y)\\
&+a_{1,n}P_{n+1}(x)Q_n(y)-a_{1,n-1}P_{n}(x)Q_{n-1}(y)\\
&+a_{2,n}P_{n+2}(x)Q_n(y)-a_{2,n-2}P_{n}(x)Q_{n-2}(y)\ , \ \ \mbox{for}\ n=0,1,\ldots
\end{split}
\nonumber
\end{equation}
Here we have already used the relation between the coefficients  $a_{k,n}$ and $b_{k,n}$ in equation (\ref{RelationRecurrenceCoefficients}).
Summing up the right-hand side and the left-hand side of the equation above from $n=0$ to $N-1$ we obtain 
\begin{eqnarray}
(x-y)K_N(x,y)&=&\sum_{n=0}^{N-1}(a_{2,n}P_{n+2}(x)+a_{1,n} P_{n+1}(x))Q_n(y)+ \sum_{n=0}^{N-2}a_{-1,n+1} P_{n}(x)Q_{n+1}(y)\nonumber\\
&&+\sum_{n=0}^{N-3}a_{-2,n+2}P_{n}(x)Q_{n+2}(y)-\sum_{n=0}^{N-3}a_{2,n}P_{n+2}(x)Q_{n}(y)\nonumber\\
&&- \sum_{n=0}^{N-2}a_{1,n}P_{n+1}(x)Q_{n}(y)-\sum_{n=0}^{N-1}P_n(x)(a_{-1,n+1}Q_{n+1}(y)+a_{-2,n+2} Q_{n+2}(y)),
\end{eqnarray}
after shifting several summation indices. Cancelling all terms and dividing by $(x-y)\neq0$ 
 we obtain formula (\ref{CDKernel}) for the correlation kernel $K_{N}(x,y)$.
\qed

\section{Proof of Proposition \ref{PropositionContourIntegralRepresentationsPQ} and Theorem \ref{TheoremKContour}}
Let us first obtain the contour integral representation for the functions $P_0(x)$, $P_1(x)$, $\ldots$ as given in equation (\ref{ContourIntegralRepresentationsPn}).
Recall that $P_n(x)$ is given explicitly by equation (\ref{P}). We express the Bessel function in equation (\ref{P}) as an infinite sum,
$$
I_k\left(2\delta(\mu)x^{\frac{1}{2}}\right)=\sum\limits_{l=0}^{\infty}\frac{1}{l!(k+l)!}\left(\delta(\mu)x^{\frac{1}{2}}\right)^{k+2l}.
$$
Next we rewrite the formula for $P_n(x)$ as 
$$
P_n(x)=(-1)^n\frac{(\nu+n)!n!}{\nu!}\left(\alpha(\mu)^2-\delta(\mu)^2\right)^{\frac{1}{2}}
\sum\limits_{l=0}^{\infty}\frac{x^l\delta(\mu)^{2l}}{l!}\left(\sum\limits_{k=0}^n
\frac{(-n)_k\left(\alpha(\mu)^2-\delta(\mu)^2\right)^k}{(\nu+1)_kk!(k+l)!}x^k\right).
$$
The expression in the bracket on the right-hand side of the equation for $P_n(x)$ above can be written as a generalised 
hypergeometric series, so we have
\begin{equation}\label{PpContourIntegral1}
\begin{split}
P_n(x)=&(-1)^n\frac{(\nu+n)!n!}{\nu!}\left(\alpha(\mu)^2-\delta(\mu)^2\right)^{\frac{1}{2}}\\
&\times\sum\limits_{l=0}^{\infty}\frac{x^l\delta(\mu)^{2l}}{(l!)^2}{}_1F_2\left(\begin{array}{c}
               -n \\
               \nu+1, l+1
             \end{array}
             \biggl|\left(\alpha(\mu)^2-\delta(\mu)^2\right)x\right).
\end{split}
\end{equation}
The following contour integral representation can be obtained from residue calculus
$$
{}_1F_M\left(\begin{array}{c}
               -n \\
               1+\nu_1,\ldots,1+\nu_M
             \end{array}
             \biggl|x
\right)=\frac{(-1)^n\prod_{j=1}^M\Gamma(\nu_j+1)n!}{2\pi i}\oint\limits_{\Sigma}\frac{\Gamma(t+1)\Gamma(t-n)}{\prod_{j=1}^M\Gamma(t+\nu_j+1)}x^tdt,
$$
where $\Sigma$ is a closed contour that encircles $0,1,\ldots, n$ once in the positive direction.
In particular,
\begin{equation}
\begin{split}
&{}_1F_2\left(\begin{array}{c}
               -n \\
               \nu+1, l+1
             \end{array}
             \biggl|\left(\alpha(\mu)^2-\delta(\mu)^2\right)x\right)\\
&=\frac{(-1)^n\Gamma(\nu+1)\Gamma(l+1)n!}{2\pi i}
\oint\limits_{\Sigma}\frac{\Gamma(t-n)\left(\left(\alpha(\mu)^2-\delta(\mu)^2\right)x\right)^t}{\Gamma(t+1)\Gamma(t+\nu+1)\Gamma(t+l+1)}dt.
\end{split}
\nonumber
\end{equation}
Inserting the above formula into equation (\ref{P}), we obtain the desired expression for $P_n(x)$, equation (\ref{ContourIntegralRepresentationsPn}), after writing the remaining sum as another hypergeometric function.

Now we derive the contour integral representation for the functions $Q_0(y)$, $Q_1(y)$, $\ldots$ (equation (\ref{ContourIntegralRepresentationsQn})).
We start from the formula (\ref{Q}), and  use the relation (\ref{K=G}).
This enables us to rewrite equation  (\ref{Q}) as
\begin{equation}\label{QPG}
\begin{split}
Q_n(y)=&\frac{(-1)^n}{(n!)^2\nu!}\left(\alpha(\mu)^2-\delta(\mu)^2\right)^{\frac{1}{2}}\\
&\times\sum\limits_{l=0}^n\left(\frac{\alpha(\mu)^2-\delta(\mu)^2}{\alpha(\mu)^2}\right)^{l+\nu}
\frac{(-n)_l}{(\nu+1)_ll!}G^{2,0}_{0,2}\left(\begin{array}{cc}
                - \\
               0, l+\nu
             \end{array}\biggl|\alpha(\mu)^2y
\right).
\end{split}
\end{equation}
Following the definition (\ref{G-def}) a contour integral representation for the Meijer G-function in the formula above holds:
\begin{equation}\label{QPG1}
G^{2,0}_{0,2}\left(\begin{array}{cc}
                - \\
               0, l+\nu
             \end{array}\biggl|\alpha(\mu)^2y
\right)=\frac{1}{2\pi i}\int_{c-i\infty}^{c+i\infty}\Gamma(s+l+\nu)\Gamma(s)\left(\alpha(\mu)^2y\right)^{-s}ds,
\end{equation}
with $c>0$. Formulae (\ref{QPG}) and (\ref{QPG1}) result in the following expression for the function $Q_n(y)$
\begin{equation}
\begin{split}
Q_n(y)=&\frac{(-1)^n}{(n!)^2\nu!}\left(1-\frac{\delta(\mu)^2}{\alpha(\mu)^2}\right)^{\nu}\left(\alpha(\mu)^2-\delta(\mu)^2\right)^{\frac{1}{2}}\\
&\times\frac{1}{2\pi i}\int\limits_{c-i\infty}^{c+i\infty}\Gamma(s)\Gamma(s+\nu)
{}_2F_1\left(\begin{array}{c}
               -n, \nu+s\\
                1+\nu
             \end{array}
             \biggl|1-\frac{\delta(\mu)^2}{\alpha(\mu)^2}\right)(\alpha(\mu)^2y)^{-s}ds.
             \end{split}
\end{equation}
The (Gauss) hypergeometric function inside the integral above can be written as follows using \cite{GradshteinRyzhik} 9.131.2
for $n\in{\mathbb{N}}$
\begin{equation}
\begin{split}
{}_2F_1\left(\begin{array}{c}
               -n,  \nu+s\\
               1+\nu
             \end{array}
             \biggl|1-\frac{\delta(\mu)^2}{\alpha(\mu)^2}\right)
            =\frac{\Gamma(1+\nu)\Gamma(1-s+n)}{\Gamma(1+\nu+n)\Gamma(1-s)}\ 
             {}_2F_1\left(\begin{array}{c}
               -n,  \nu+s\\
               s-n
             \end{array}
             \biggl|\frac{\delta(\mu)^2}{\alpha(\mu)^2}\right).
\end{split}
\end{equation}
Applying this formula, and the fact that
\begin{equation}
\frac{\Gamma(1-s+n)}{\Gamma(1-s)}=(-1)^n\frac{\Gamma(s)}{\Gamma(s-n)},
\label{GammaIdentity}
\end{equation}
which can be shown using \cite{GradshteinRyzhik} 8.334.3, we obtain  equation (\ref{ContourIntegralRepresentationsQn}).  Proposition \ref{PropositionContourIntegralRepresentationsPQ} is proved.
\qed

To obtain the contour integral representation for the correlation kernel $K_N(x,y)$ given in Theorem \ref{TheoremKContour} we need the following Lemma.
\begin{lem} We have
\begin{equation}\label{CombinatorialLemma}
\begin{split}
\sum\limits_{n=0}^{N-1}\frac{\Gamma(t-n)}{\Gamma(s-n)}\frac{(-n)_k}{(s-n)_k}=&
k!\frac{\Gamma(t-N+1)}{\Gamma(s-t+k)}\sum\limits_{m=0}^k(-1)^m\left(\begin{array}{c}
                                                                                              N \\
                                                                                              m
                                                                                            \end{array}
\right)
\frac{\Gamma(s-t+m-1)}{\Gamma(s+m-N)}
\\
&-\frac{\Gamma(t+1)\Gamma(s-t-1)k!}{\Gamma(s)\Gamma(s-t+k)},
\end{split}
\end{equation}
where $k=0,1,\ldots, N-1$.
\end{lem}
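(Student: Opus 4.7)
The plan is to prove (\ref{CombinatorialLemma}) by induction on $N$. Denote the left-hand side by $L_N$ and the right-hand side by $R_N=R_N^{(1)}-R_N^{(2)}$, where $R_N^{(2)}=\Gamma(t+1)\Gamma(s-t-1)k!/[\Gamma(s)\Gamma(s-t+k)]$ is the $N$-independent second term. The base case $N=0$ is immediate: $L_0$ is the empty sum, and $\binom{0}{m}=\delta_{m,0}$ collapses $R_0^{(1)}$ to precisely $R_0^{(2)}$.

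For the inductive step I would compute both $L_{N+1}-L_N$ and $R_{N+1}-R_N$ and verify they coincide. The left increment is the summand
\[
a_N = \frac{\Gamma(t-N)}{\Gamma(s-N)}\frac{(-N)_k}{(s-N)_k} = (-1)^k k!\binom{N}{k}\frac{\Gamma(t-N)}{\Gamma(s-N+k)},
\]
using $(-N)_k=(-1)^k N!/(N-k)!$ for $N\geq k$ (and both expressions vanish when $N<k$). On the right only $R_N^{(1)}$ depends on $N$; pulling out $\Gamma(t-N)$ via $\Gamma(t-N+1)=(t-N)\Gamma(t-N)$ and writing $\Gamma(s+m-N)=(s+m-N-1)\Gamma(s+m-N-1)$ brings each summand to a coefficient
\[
\binom{N+1}{m}-\frac{(t-N)}{s+m-N-1}\binom{N}{m},
\]
which by Pascal's identity $\binom{N+1}{m}=\binom{N}{m}+\binom{N}{m-1}$ equals $\binom{N}{m-1}+\binom{N}{m}(s+m-t-1)/(s+m-N-1)$. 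The second piece, combined with $\Gamma(s-t+m-1)/\Gamma(s+m-N-1)$, collapses to $\Gamma(s-t+m)/\Gamma(s+m-N)$; after shifting $m\mapsto m+1$ in the $\binom{N}{m-1}$ sum (with $\binom{N}{-1}=0$ at the bottom endpoint), the two contributions telescope against each other and only the top boundary term at $m=k$ survives, giving
\[
R_{N+1}^{(1)}-R_N^{(1)} = (-1)^k k!\binom{N}{k}\frac{\Gamma(t-N)}{\Gamma(s-N+k)} = a_N.
\]
This agrees with $a_N$ both in the generic range $N\geq k$ and in the range $N<k$, where $\binom{N}{k}=0$ precisely matches the vanishing of $(-N)_k$.

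The main obstacle is purely bookkeeping: one has to carefully track the index shift at the lower endpoint $m=0$ and the surviving boundary term at $m=k$, and verify that the algebraic simplification $(s+m-t-1)\Gamma(s-t+m-1)/[(s+m-N-1)\Gamma(s+m-N-1)]=\Gamma(s-t+m)/\Gamma(s+m-N)$ is applied consistently in every summand. There are no analytic subtleties, as the entire identity is a formal manipulation of Gamma and Pochhammer symbols valid for generic complex parameters $s,t$.
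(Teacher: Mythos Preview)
Your proof is correct and complete; the telescoping in $m$ is exactly as you describe, and the base case $N=0$ together with the vanishing of $\binom{N}{k}$ for $N<k$ handles all ranges cleanly.

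The route, however, is genuinely different from the paper's. The paper fixes $N$ and inducts on $k$: it first rewrites the left-hand side via $(-n)_k/(s-n)_k = (-1)^k n!/[(n-k)!]\cdot \Gamma(s-n)/\Gamma(s-n+k)$ to obtain an auxiliary sum $\widetilde{S}_N(t,s;k)=\sum_{n}\Gamma(t-n)n!/[\Gamma(s-n)(n-k)!]$, then uses a telescoping identity in $n$ to derive the recurrence $(s-t-1)\widetilde{S}_N(t,s;k)+k\,\widetilde{S}_N(t,s-1;k-1)=\Gamma(t-N+1)N!/[\Gamma(s-N)(N-k)!]$, which is solved iteratively from the closed form at $k=0$. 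Your approach instead fixes $k$ and inducts on $N$, comparing increments directly. Your argument is somewhat more economical: it avoids the auxiliary change of variables and the need to unwind a recurrence in $k$, and the Pascal-identity telescoping makes the inductive step a one-line computation. The paper's approach, on the other hand, yields as a by-product the closed evaluation of $\sum_{n=0}^{N-1}\Gamma(t-n)/\Gamma(s-n)$ (the $k=0$ case), which may be of independent use.
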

\begin{proof} Denote by $S_N(t,s;k)$ the sum on the left hand side of equation (\ref{CombinatorialLemma}),
$$
S_N(t,s;k)=\sum\limits_{n=0}^{N-1}\frac{\Gamma(t-n)}{\Gamma(s-n)}\frac{(-n)_k}{(s-n)_k}.
$$
Also, set
$$
\widetilde{S}_N(t,s;k)=\sum\limits_{n=0}^{N-1}\frac{\Gamma(t-n)}{\Gamma(s-n)}\frac{n!}{(n-k)!}.
$$
These sums are related to each other according to the formula
\begin{equation}\label{SSTILDA}
    S_N(t,s;k)=(-1)^k\widetilde{S}_N(t,s+k;k).
\end{equation}
Thus it is enough to find a closed formula for $\widetilde{S}_N(t,s;k)$. Using the elementary property $x\Gamma(x)=\Gamma(x+1)$ is is easy to check that the following identity
holds true
\begin{equation}
\begin{split}
&\frac{\Gamma(t-n-1)}{\Gamma(s-n-1)}\frac{(n+1)!}{(n+1-k)!}-\frac{\Gamma(t-n)}{\Gamma(s-n)}\frac{n!}{(n-k)!}\\
&=k\frac{\Gamma(t-n-1)}{\Gamma(s-n-1)}\frac{n!}{(n-k+1)!}-(t-s)\frac{\Gamma(t-n-1)}{\Gamma(s-n)}\frac{n!}{(n-k)!}.
\end{split}
\end{equation}
This identity implies the following recurrence relation
$$
(s-t-1)\widetilde{S}_N(t,s;k)+k\widetilde{S}_N(t,s-1;k-1)=\frac{\Gamma(t-N+1)}{\Gamma(s-N)}\frac{N!}{(N-k)!},
$$
starting from $k=1,\ldots, N-1$. The recurrence relation above can be solved, and a formula for $\widetilde{S}_N(t,s;k)$
can be obtained. Namely, beginning with $k=0$,
$$
\widetilde{S}_N(t,s;k=0)=\sum\limits_{n=0}^{N-1}\frac{\Gamma(t-n)}{\Gamma(s-n)}=\frac{\Gamma(t-N+1)}{(s-t-1)\Gamma(s-N)}-\frac{\Gamma(t+1)}{(s-t-1)\Gamma(s)},
$$
which can be easily seen by induction in $N$ 
we find
\begin{equation}\label{STILDA}
\begin{split}
\widetilde{S}_N(t,s;k)=&\sum\limits_{l=0}^k\frac{\Gamma(t-N+1)(-k)_l}{(s-t-1)(s-t-2)\ldots (s-t-l-1)\Gamma(s-N-l)}\frac{N!}{(N-k+l)!}\\
&+\frac{(-1)^{k-1}\Gamma(t+1)k!}{(s-t-1)(s-t-2)\ldots (s-t-1-k)\Gamma(s-k)}.
\end{split}
\end{equation}
Formulae (\ref{SSTILDA}), (\ref{STILDA}) imply
\begin{equation}
\begin{split}
S_N(t,s;k)
=&\sum\limits_{l=0}^k\frac{(-1)^{k-l}\Gamma(t-N+1)\Gamma(s-t+k-l-1)k!}{\Gamma(s-t+k)(k-l)!\Gamma(s+k-l-N)}
\frac{N!}{(N-k+l)!}\\
&-\frac{\Gamma(t+1)k!\Gamma(s-t-1)}{\Gamma(s-t+k)\Gamma(s)}.
\end{split}
\nonumber
\end{equation}
Then, after setting $m=k-l$ we get formula (\ref{CombinatorialLemma}).
\end{proof}
Recall that the correlation kernel $K_{N}(x,y)$ can be represented as the sum of the biorthogonal functions $P_n(x)$ and $Q_n(y)$, see equation (\ref{K}).
We insert the integral representations for the functions $P_n(x)$ 
and 
$Q_n(y)$ (see Proposition \ref{PropositionContourIntegralRepresentationsPQ})  into equation (\ref{K}). 
We write the hypergeometric function ${}_2F_1$ as a finite sum up to $N-1$.
Then we interchange the finite sum and the double contour integral, use the combinatorial identity (\ref{CombinatorialLemma}),
and observe that the second term on the right-hand side of equation (\ref{CombinatorialLemma}) does not contribute to the  double contour integral.
The result of these calculations is the formula for the correlation kernel in the statement of  Theorem \ref{TheoremKContour}.
\qed
\section{Proof of Theorem \ref{TheoremHardEdgeScalingLimit}}
We use the contour integral representations for the functions $P_n(x)$ and $Q_n(y)$ obtained in  Proposition \ref{PropositionContourIntegralRepresentationsPQ}
together with the Christoffel-Darboux type formula for the correlation kernel $K_N(x,y)$, see Theorem \ref{TheoremChristoffelDarboux}.
Namely, we insert the contour integrals representing $P_n(x)$ and $Q_n(y)$ into formula (\ref{CDKernel}).
In the numerator of the right-hand side of equation (\ref{CDKernel}) we obtain a double contour integral.
Let us write this contour integral representation of the correlation kernel explicitly. We have for $N\geq2$
\begin{small}
\begin{equation}
\begin{split}
&K_{N}(x,y)=\frac{\alpha(\mu)^2-\delta(\mu)^2}{(2\pi i)^2(x-y)}\left(1-\frac{\delta(\mu)^2}{\alpha(\mu)^2}\right)^{\nu}\\
&\times\int\limits_{c-i\infty}^{c+i\infty}ds\oint\limits_{\Sigma}dt
\frac{\Gamma^2(s)\Gamma(s+\nu)(\alpha(\mu)^2-\delta(\mu)^2)^tx^t
{}_0F_1\left(\begin{array}{c}
                - \\
               t+1
             \end{array}\biggl|\delta(\mu)^2x
\right)}{\left(\Gamma(t+1)\right)^2\Gamma(t+\nu+1)}(\alpha(\mu)^2y)^{-s}\\
&\times\biggl\{-\ \frac{\alpha(\mu)^2}{\left(\alpha(\mu)^2-\delta(\mu)^2\right)^2}\frac{\Gamma(t-N+2)}{\Gamma(s-N)}
\;{}_2F_1\left(\begin{array}{c}
               -N,  \nu+s \\
               s-N
             \end{array}
             \biggl|\frac{\delta(\mu)^2}{\alpha(\mu)^2}\right)\\
&-\ \frac{\alpha(\mu)^2}{\left(\alpha(\mu)^2-\delta(\mu)^2\right)^2}\frac{\Gamma(t-N+1)}{\Gamma(s-N-1)}
\;{}_2F_1\left(\begin{array}{c}
               -N-1,  \nu+s \\
               s-N-1
             \end{array}
             \biggl|\frac{\delta(\mu)^2}{\alpha(\mu)^2}\right)\\
&-\left[\frac{3N+\nu}{\alpha(\mu)^2-\delta(\mu)^2}+\frac{2(2N+\nu)\delta(\mu)^2}{\left(\alpha(\mu)^2-\delta(\mu)^2\right)^2}\right]\frac{\Gamma(t-N+1)}{\Gamma(s-N)}
\;{}_2F_1\left(\begin{array}{c}
               -N,  \nu+s \\
               s-N
             \end{array}
             \biggl|\frac{\delta(\mu)^2}{\alpha(\mu)^2}\right)\\
&+\left[\frac{N^2(N+\nu)}{\alpha(\mu)^2-\delta(\mu)^2}+\frac{2N(N+\nu)(2N+\nu)\delta(\mu)^2}{\left(\alpha(\mu)^2-\delta(\mu)^2\right)^2}\right]\frac{\Gamma(t-N)}{\Gamma(s-N+1)}
\;{}_2F_1\left(\begin{array}{c}
               -N+1,  \nu+s \\
               s-N+1
             \end{array}
             \biggl|\frac{\delta(\mu)^2}{\alpha(\mu)^2}\right)\\
&+\frac{N(N-1)(N+\nu)(N+\nu-1)\delta(\mu)^2}{\left(\alpha(\mu)^2-\delta(\mu)^2\right)^2}
\frac{\Gamma(t-N)}{\Gamma(s-N+2)}
\;{}_2F_1\left(\begin{array}{c}
               -N+2,  \nu+s \\
               s-N+2
             \end{array}
             \biggl|\frac{\delta(\mu)^2}{\alpha(\mu)^2}\right)\\
&+\frac{N(N+1)(N+\nu)(N+\nu+1)\delta(\mu)^2}{\left(\alpha(\mu)^2-\delta(\mu)^2\right)^2}
\frac{\Gamma(t-N-1)}{\Gamma(s-N+1)}
\;{}_2F_1\left(\begin{array}{c}
               -N+1,  \nu+s \\
               s-N+1
             \end{array}
             \biggl|\frac{\delta(\mu)^2}{\alpha(\mu)^2}\right)\biggr\}.
\end{split}
\end{equation}
\end{small}
We note that as $N\rightarrow\infty$, we have the following
ratio asymptotic of Gamma functions
$$
\frac{\Gamma(t-N)}{\Gamma(s-N)}=\frac{\sin(\pi s)}{\sin(\pi t)}\frac{\Gamma(1-s+N)}{\Gamma(1-t+N)}
\simeq\frac{\sin(\pi s)}{\sin(\pi t)}N^{t-s}\left(1+O(N^{-1})\right),
$$
upon using equation (\ref{GammaIdentity}), $\Gamma(1-x)\Gamma(x)=\pi/\sin(\pi x)$, and the standard asymptotic expansion of the Gamma-function.
Moreover, we have \cite{Temme}
$$
{}_2F_1\left(\begin{array}{c}
               -N,  \nu+s \\
               s-N
             \end{array}
             \biggl|\frac{\delta(\mu)^2}{\alpha(\mu)^2}\right)
             \simeq\frac{1}{\left(1-\frac{\delta(\mu)^2}{\alpha(\mu)^2}\right)^{\nu+s}}\left(1+O(N^{-1})\right),
$$
as $N\rightarrow\infty$. 
Using the asymptotic formulae just written above, we find
\begin{equation}\label{KBeforeLimit}
\begin{split}
&\frac{1}{N\left(\alpha(\mu)^2-\delta(\mu)^2\right)}
K_{N}\left(\frac{x}{N\left(\alpha(\mu)^2-\delta(\mu)^2\right)},
\frac{y}{N\left(\alpha(\mu)^2-\delta(\mu)^2\right)}\right)\\
&=\frac{1}{(2\pi i)^2(x-y)}\int\limits_{c-i\infty}^{c+i\infty}ds\oint\limits_{\Sigma}dt\biggl[
\frac{\Gamma^2(s)\Gamma(s+\nu)}{\left(\Gamma(t+1)\right)^2\Gamma(t+\nu+1)}\frac{\sin\pi s}{\sin\pi t}\frac{x^t}{y^s}\\
&\times\left(A(s,t;N)+\frac{\delta(\mu)^2}{\alpha(\mu)^2-\delta(\mu)^2}B(s,t;N)\right)(1+O(N^{-1}))\biggr],
\end{split}
\end{equation}
where we used that the hypergeometric function $_0 F_1$ of rescaled argument tends to unity. 
The functions $A(s,t;N)$ and $B(s,t;N)$ are given by 
\begin{eqnarray}
A(s,t;N)&=&\frac{N^2(N+\nu)}{s-N}-(t-N)(s+t+N+\nu),\nonumber\\
B(s,t;N)&=&\frac{N(N+1)(N+\nu)(N+\nu+1)}{(t-N-1)(s-N)}+\frac{N(N-1)(N+\nu)(N+\nu-1)}{(s-N+1)(s-N)}\nonumber\\
&&+\frac{2N(N+\nu)(2N+\nu)}{s-N}-(t-N)(t+s+2N+2\nu).
\nonumber
\end{eqnarray}
Note that the additional factor in front of the kernel compensates the rescaling of the arguments of the factor $1/(x-y)$.
Computations show that
$$
\underset{N\rightarrow\infty}{\lim}A(s,t;N)=-s(s+\nu)-t(t+\nu)-st,
$$
and
$$
\underset{N\rightarrow\infty}{\lim}B(s,t;N)=0.
$$
Now we take the limit $N\rightarrow\infty$ from both sides of equation (\ref{KBeforeLimit}),
and interchange the limit and integrals in the right-hand side. The fact that we are allowed to take the limit inside the integrals
can be justified as in the proof of Theorem 5.3 in Kuijlaars and Zhang \cite{KuijlaarsZhang} using the dominated convergence theorem and the asymptotic properties of Gamma
functions.
Thus we obtain the limiting relation
\begin{equation}
\begin{split}
&\underset{N\rightarrow\infty}{\lim}\left\{\frac{1}{N\left(\alpha(\mu)^2-\delta(\mu)^2\right)}
K_{N}\left(\frac{x}{N\left(\alpha(\mu)^2-\delta(\mu)^2\right)},
\frac{y}{N\left(\alpha(\mu)^2-\delta(\mu)^2\right)}\right)\right\}\\
&=\frac{-1}{(2\pi i)^2(x-y)}\int\limits_{c-i\infty}^{c+i\infty}ds\oint\limits_{\Sigma}dt\biggl[
\frac{\Gamma^2(s)\Gamma(s+\nu)}{\left(\Gamma(t+1)\right)^2\Gamma(t+\nu+1)}\frac{\sin\pi s}{\sin\pi t}\frac{x^t}{y^s}
(s(s+\nu)+t(t+\nu)+st)\biggr].
\end{split}
\end{equation}
Since
$$
\frac{\Gamma(s)\sin\pi s}{\Gamma(t+1)\sin\pi t}=-\frac{\Gamma(-t)}{\Gamma(1-s)},
$$
we can rewrite the equation above as
\begin{equation}\label{ScalingLimit1}
\begin{split}
&\underset{N\rightarrow\infty}{\lim}\left\{\frac{1}{N\left(\alpha(\mu)^2-\delta(\mu)^2\right)}
K_{N}\left(\frac{x}{N\left(\alpha(\mu)^2-\delta(\mu)^2\right)},
\frac{y}{N\left(\alpha(\mu)^2-\delta(\mu)^2\right)}\right)\right\}\\
&=\frac{1}{(2\pi i)^2(x-y)}\int\limits_{c-i\infty}^{c+i\infty}ds\oint\limits_{\Sigma}dt\biggl[
\frac{\Gamma(-t)\Gamma(s)\Gamma(s+\nu)}{\Gamma(t+1)\Gamma(t+\nu+1)\Gamma(1-s)}\frac{x^t}{y^s}
(s(s+\nu)+t(t+\nu)+st)\biggr].
\end{split}
\end{equation}
It follows from the definition (\ref{G-def}) that
$$
\frac{1}{2\pi i}\oint\limits_{\Sigma}dt\frac{\Gamma(-t)}{\Gamma(t+1)\Gamma(t+\nu+1)}x^t=
-G^{1,0}_{0,3}\left(\begin{array}{ccc}
                       & - &  \\
                      0, & -\nu, & 0
                    \end{array}
\biggr|x\right),
$$
and that
$$
\frac{1}{2\pi i}\int\limits_{c-i\infty}^{c+i\infty}ds\frac{\Gamma(s)\Gamma(s+\nu)}{\Gamma(1-s)}y^{-s}
=G^{2,0}_{0,3}\left(\begin{array}{ccc}
                       & - &  \\
                      \nu, &0,& 0
                    \end{array}
\biggr|y\right).
$$
Now we can rewrite the right-hand side of equation (\ref{ScalingLimit1}) as
\begin{equation}\label{ScalingLimit2}
\begin{split}
\frac{f(x)\left(\nu y\frac{d}{dy}g(y)-\left(y\frac{d}{dy}\right)^2g(y)\right)}{x-y}
+\frac{x\frac{d}{dx}f(x)\left(-\nu g(y)+y\frac{d}{dy}g(y)\right)}{x-y}
-\frac{(x\frac{d}{dx})^2f(x)g(y)}{x-y},
\end{split}
\end{equation}
where
\begin{equation}\label{ScalingLimit3}
f(x)=G^{1,0}_{0,3}\left(\begin{array}{ccc}
                       & - &  \\
                      0, & -\nu, & 0
                    \end{array}
\biggr|x\right),\;\;
g(y)=G^{2,0}_{0,3}\left(\begin{array}{ccc}
                       & - &  \\
                      \nu, &0,& 0
                    \end{array}
\biggr|y\right).
\end{equation}
Expression (\ref{ScalingLimit2}) (with the functions $f(x)$, $g(y)$ defined by equation (\ref{ScalingLimit3}))
 gives the limiting kernel for the product of two matrices with independent complex Gaussian entries, see Proposition 5.4 in Kuijlaars and Zhang \cite{KuijlaarsZhang}.
 As it is shown in  Kuijlaars and Zhang \cite{KuijlaarsZhang} (see the proof of Theorem 5.3) such limiting kernel can be also
written as
$$
\int\limits_0^1G^{1,0}_{0,3}\left(\begin{array}{ccc}
                       & - &  \\
                      0, & -\nu, & 0
                    \end{array}
\biggr|ux\right)G^{2,0}_{0,3}\left(\begin{array}{ccc}
                       & - &  \\
                      \nu,&0, & 0
                    \end{array}
\biggr|uy\right)du.
$$
Theorem \ref{TheoremHardEdgeScalingLimit} is proved.
\qed
\section{Proof of Theorem \ref{CentralLimitTheorem}}\label{proofsFinite}
We use the following result for biorthogonal ensembles obtained by Breuer and Duits \cite{BreuerDuits}.
Assume we are given a biorthogonal ensemble on $\R_{\geq 0}$ defined by the joint probability density
function $P_N(x_1,\ldots,x_N)$. Assume further that the correlation kernel of this ensemble,  $K_N(x,y)$,
is given by
$$
K_N(x,y)=\sum\limits_{p=0}^{N-1}\psi_p^{(N)}(x)\phi_p^{(N)}(y),
$$
where the functions $\psi_p^{(N)}$, $\phi_k^{(N)}$ are orthonormal,
$$
\int\limits_{0}^{\infty}\psi_p^{(N)}(x)\phi_k^{(N)}(x)dx=\delta_{p,k}.
$$
Suppose we know that the functions $\psi_p^{(N)}$ satisfy a $2m+1$ term recurrence relation
\begin{equation}
\begin{split}
x\psi_n^{(N)}(x)=\sum\limits_{j=-m}^{m}a_{j,n}^{(N)}\psi_{n+j}^{(N)}(x),
\end{split}
\nonumber
\end{equation}
where $n=0,1,\ldots$, and $m$ is independent of $N$.
Here we define that $\psi_{-m}^{(N)}(x)=0,\ldots$, $\psi_{-1}^{(N)}(x)=0$.
In other words, there exists a banded matrix $J^{(N)}$ such that
$$
x\left(\begin{array}{c}
         \psi_{0}^{(N)}(x) \\
         \psi_{1}^{(N)}(x) \\
         \psi_{2}^{(N)}(x) \\
         \vdots
       \end{array}
\right)=J^{(N)}\left(\begin{array}{c}
         \psi_{0}^{(N)}(x) \\
         \psi_{1}^{(N)}(x) \\
         \psi_{2}^{(N)}(x) \\
         \vdots
       \end{array}\right).
$$
Let us consider the situation when the recurrence coefficients $a_{m,N}^{(N)}$, $a_{m-1,N}^{(N)}$,
$\ldots$, $a_{-m,N}^{(N)}$ have limits as $N\rightarrow\infty$, namely
$$
\underset{N\rightarrow\infty}{\lim}a_{m,N}^{(N)}=\alpha_m,\;\;
\underset{N\rightarrow\infty}{\lim}a_{m-1,N}^{(N)}=\alpha_{m-1},\ldots,
\underset{N\rightarrow\infty}{\lim}a_{-m,N}^{(N)}=\alpha_{-m}.
$$
In this situation we associate with $J^{(N)}$ a Laurent polynomial $s(w)$ defined by
$$
s(w)=\sum\limits_{j=-m}^m\alpha_jw^{j}.
$$
\begin{prop}\label{TheoremBreuerDuits}
Let $f$ be a polynomial with real coefficients, and define the linear statistics of the biorthogonal ensemble by the formula
$$
X_f^{(N)}=\sum\limits_{i=1}^Nf(x_i),
$$
where $x_1$, $\ldots$, $x_N$ are the points of the biorthogonal ensemble under considerations.
Then
$$
X_f^{(N)}-\mathbb{E}X_f^{(N)}\rightarrow {\mathcal N}\left(0,\sum\limits_{k=1}^{\infty}k\hat{f}_k\hat{f}_{-k}\right)
$$
in distribution, where
$$
\hat{f}_k=\frac{1}{2\pi i}\oint_{|w|=1}f(s(w))w^k\frac{dw}{w}.
$$
\end{prop}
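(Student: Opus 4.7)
The plan is to proceed via the method of cumulants, exploiting the determinantal structure of the ensemble. Since $K_N(x,y) = \sum_{p=0}^{N-1} \psi_p^{(N)}(x) \phi_p^{(N)}(y)$ with biorthonormal $\{\psi_p^{(N)}, \phi_p^{(N)}\}$, the associated integral operator $\mathcal{K}_N$ acts on $L^2([0,\infty))$ as a (generally oblique) rank-$N$ projection. Standard theory for determinantal point processes yields the cumulant expansion
\begin{equation*}
\log \mathbb{E}\left[e^{t(X_f^{(N)} - \mathbb{E} X_f^{(N)})}\right] = \sum_{\ell \geq 2} \frac{t^\ell}{\ell!}\,\kappa_\ell(X_f^{(N)}),
\end{equation*}
where each cumulant $\kappa_\ell$ is a finite alternating sum of traces $\mathrm{tr}(\mathcal{K}_N M_f \mathcal{K}_N M_f \cdots)$, with $M_f$ the multiplication operator $g \mapsto f g$. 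The goal is therefore to show $\kappa_2 \to \sum_{k \geq 1} k\,\hat{f}_k \hat{f}_{-k}$ and $\kappa_\ell \to 0$ for $\ell \geq 3$, whereupon the method of moments produces the claimed Gaussian limit.

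The second step is to transfer the problem to $\ell^2(\mathbb{Z}_{\geq 0})$ via the biorthogonal basis. In coordinates determined by $\{\psi_p^{(N)}\}$ with dual $\{\phi_p^{(N)}\}$, the operator $\mathcal{K}_N$ is the coordinate projection $P_N$ onto the first $N$ basis vectors, while multiplication by $x$ is represented by the banded matrix $J^{(N)}$ given by the hypothesised $(2m{+}1)$-term recurrence. Hence $M_f \leftrightarrow f(J^{(N)})$, still banded, with bandwidth $2m\deg(f)+1$. Every trace appearing in $\kappa_\ell$ becomes a trace of a product $P_N f(J^{(N)}) P_N f(J^{(N)}) \cdots P_N f(J^{(N)})$ on $\ell^2(\mathbb{Z}_{\geq 0})$, a concrete linear-algebraic object. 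Using $P_N = P_N^2$ and cyclicity of the trace, a standard algebraic manipulation rewrites each $\kappa_\ell$ in terms of the commutator $[P_N, f(J^{(N)})]$; in particular
\begin{equation*}
\kappa_2(X_f^{(N)}) = \mathrm{tr}\bigl(P_N f(J^{(N)})(I - P_N) f(J^{(N)}) P_N\bigr) = \bigl\|P_N f(J^{(N)})(I-P_N)\bigr\|_{\mathrm{HS}}^2,
\end{equation*}
in the spirit of the derivation of the strong Szeg\H{o} theorem.

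The third step is the asymptotic analysis. Because the diagonals $a_{j,n}^{(N)}$ stabilise to $\alpha_j$ as $n,N \to \infty$, the banded matrix $f(J^{(N)})$ converges, diagonal by diagonal near row $N$, to the banded Laurent operator with symbol $f(s(w))$, whose $k$-th Fourier coefficient is exactly $\hat{f}_k$. The Hilbert-Schmidt norm above counts pairs $(i,j)$ with $i < N \leq j$ of matrix entries of $f(J^{(N)})$; in the limit, each such pair with $j - i = k$ contributes $\hat{f}_k \hat{f}_{-k}$, and there are $k$ such pairs with $j-i=k$ straddling the boundary, giving
\begin{equation*}
\lim_{N \to \infty} \kappa_2(X_f^{(N)}) = \sum_{k \geq 1} k\,\hat{f}_k \hat{f}_{-k}.
\end{equation*}
The main obstacle, and the technical core of Breuer-Duits, is the vanishing of all higher cumulants. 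One must establish uniform Schatten-class bounds $\|[P_N, f(J^{(N)})]\|_{S_\ell} = O(1)$ and then show via a combinatorial telescoping of commutators that the leading contributions to $\kappa_\ell$ cancel for $\ell \geq 3$. Once these estimates are in hand, convergence in distribution to the centred Gaussian with the claimed variance follows from the moment method, completing the proof.
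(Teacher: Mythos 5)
The paper's own ``proof'' of this Proposition is a single sentence: it cites Breuer and Duits \cite{BreuerDuits}, Theorem 2.1 and Corollary 2.2, and does not rederive anything. Your proposal, by contrast, attempts to reconstruct the Breuer--Duits argument itself, and the outline you give is in fact the correct one: cumulant expansion for the determinantal process, transfer to $\ell^2(\mathbb{Z}_{\ge 0})$ where the kernel becomes the coordinate projection $P_N$ and multiplication by $x$ becomes the banded recurrence matrix $J^{(N)}$, convergence of $f(J^{(N)})$ near row $N$ to the Laurent operator with symbol $f(s(w))$, and the boundary-pair count giving $\sum_{k\ge 1} k\,\hat f_k \hat f_{-k}$. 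One small imprecision: the identity $\mathrm{tr}\bigl(P_N f(J)(I-P_N)f(J)P_N\bigr)=\|P_N f(J)(I-P_N)\|_{\mathrm{HS}}^2$ requires $f(J^{(N)})$ to be self-adjoint, which the five-term recurrence matrices here need not be; Breuer and Duits work with the trace expression directly, and the variance limit goes through without the Hilbert--Schmidt reformulation.

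The genuine gap is the one you flag yourself: the vanishing of the higher cumulants. Establishing uniform Schatten-class bounds on $[P_N, f(J^{(N)})]$ and showing the combinatorial cancellation that forces $\kappa_\ell \to 0$ for $\ell \ge 3$ is the actual content of the Breuer--Duits theorem (their right-limit machinery and cumulant estimates), and your proposal asserts it rather than proving it. Since the paper deliberately outsources exactly this to the reference, your write-up is consistent with the source but does not constitute a self-contained proof; to make it one, you would need to reproduce or adapt the Schatten-norm estimates from \cite{BreuerDuits}.
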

\begin{proof} This statement is a corollary of a more general result for biorthogonal ensembles obtained by Breuer and Duits \cite{BreuerDuits}, see  
Theorem 2.1 and Corollary 2.2 therein.
\end{proof}
Note that since $f$ is a polynomial with real coefficients, $\hat{f}_k$ is real.

Now, let us consider the $N$-dependent probability distribution $\widetilde{P_{N,M}}(X_1,X_2)$ on the Cartesian
product of $\Mat(\C, N\times  M)$ and $\Mat(\C, M\times  N)$ defined by equation (\ref{PNEW}).
Let $y_1$, $\ldots$, $y_N$ be the squared singular values of the random matrix $X_1X_2$,
with its linear statistics given by
$$
Y_f^{(N)}=\sum\limits_{i=1}^Nf(y_i).
$$
By Theorem \ref{PropositionTheMainJointProbabilityDensity} the squared singular values $y_1$, $\ldots$, $y_N$ of the random matrix $X_1X_2$
form a biorthogonal ensemble on $\R_{\geq 0}$. The correlation kernel of this ensemble, $\widetilde{K_{N}}(x,y)$,
can be written as
$$
\widetilde{K_{N}}(x,y)=\sum\limits_{n=0}^{N-1}P_n'(x)Q_n'(y).
$$
The new functions, $P_n'(x)$ and $Q_n'(y)$, are defined in terms of $P_n(x)$ and $Q_n(y)$  as
$$
P_n'(x)=\frac{1}{n!(n+\nu)!}P_n(x),\;\;Q_n'(y)=n!(n+\nu)!Q_n(y),
$$
where $P_n(x)$ and $Q_n(y)$ are defined as previously  by equations (\ref{P}), (\ref{Q}), with $\alpha(\mu)$
replaced by $N\alpha(\mu)$, and $\delta(\mu)$ replaced by $N\delta(\mu)$. Clearly, the functions
$P'_p(x)$ and $Q'_m(x)$ are orthonormal,
$$
\int\limits_{0}^{\infty}P'_p(x)Q'_m(x)dx=\delta_{p,m}.
$$
Moreover, the 5 term recurrence relation ($m=2$ here) for the functions $P_n'(x)$ can be written as
\begin{equation}
x{P}_n'(x)=a_{2,n}'P_{n+2}'(x)+a_{1,n}'P_{n+1}'(x)+a_{0,n}'P_{n}'(x)
+a_{-1,n}'P_{n-1}'(x)+a_{-2,n}'P_{n-2}'(x).
\end{equation}
The coefficients $a_{2,n}'$, $a_{1,n}'$, $a_{0,n}'$, $a_{-1,n}'$, and $a_{-2,n}'$ easily follow from Proposition \ref{PropositionRecurrenceRelations} and 
are given explicitly by
\begin{eqnarray}
a_{2,n}'&=&\frac{\delta(\mu)^2}{\left(\alpha(\mu)^2-\delta(\mu)^2\right)^2}\frac{(n+\nu+1)(n+\nu+2)}{N^2},\\
a_{1,n}'&=&\frac{1}{\alpha(\mu)^2-\delta(\mu)^2}\frac{(n+1)(n+\nu+1)}{N^2}\nonumber\\
&&+
\frac{\delta(\mu)^2}{\left(\alpha(\mu)^2-\delta(\mu)^2\right)^2}\frac{2(2n+\nu+2)(n+\nu+1)}{N^2},\\
a_{0,n}'&=&\frac{1}{\alpha(\mu)^2-\delta(\mu)^2}\frac{3n^2+2\nu n+3n+\nu+1}{N^2}\nonumber\\
&&+\frac{\delta(\mu)^2}{\left(\alpha(\mu)^2-\delta(\mu)^2\right)^2}\frac{6n^2+6n\nu+\nu^2+6n+3\nu+2}{N^2},\\
a_{-1,n}'&=&\frac{1}{\alpha(\mu)^2-\delta(\mu)^2}\frac{n(3n+\nu)}{N^2}
+\frac{\delta(\mu)^2}{\left(\alpha(\mu)^2-\delta(\mu)^2\right)^2}\frac{2n(\nu+2n)}{N^2},\\
a_{-2,n}'&=&\frac{1}{\alpha(\mu)^2-\delta(\mu)^2}\frac{n(n-1)}{N^2}
+\frac{\delta(\mu)^2}{\left(\alpha(\mu)^2-\delta(\mu)^2\right)^2}\frac{n(n-1)}{N^2}.
\end{eqnarray}
Noting that
$$
\frac{1}{\alpha(\mu)^2-\delta(\mu)^2}=\mu,\;\;\frac{\delta(\mu)^2}{\left(\alpha(\mu)^2-\delta(\mu)^2\right)^2}=\frac{(1-\mu)^2}{4},
$$
we obtain
\begin{eqnarray}
&&\underset{N\rightarrow\infty}{\lim}a_{2,N}'=\alpha_2=\frac{(1-\mu)^2}{4},
\;\;
\underset{N\rightarrow\infty}{\lim}a_{1,N}'=\alpha_1=\mu+(1-\mu)^2,\nonumber\\
&&\underset{N\rightarrow\infty}{\lim}a_{0,N}'=\alpha_0=3\mu+\frac{3}{2}(1-\mu)^2,\;\;\underset{N\rightarrow\infty}{\lim}a_{-1,N}'=\alpha_{-1}=3\mu+(1-\mu)^2,
\nonumber\\
&&\underset{N\rightarrow\infty}{\lim}a_{-2,N}'=\alpha_{-2}=\mu+\frac{(1-\mu)^2}{4}.
\nonumber
\end{eqnarray}
Thus, Proposition \ref{TheoremBreuerDuits} can be applied, and the relevant Laurent polynomial can be computed explicitly.
The result follows.
\qed

\begin{appendix}

\section{Limits of the joint probability density function}\label{limitA}

In this appendix we derive the two limits $\mu\to1$ and $\mu\to0$ of the joint probability density function  $P(y_1,\ldots,y_N)$
equation (\ref{TheMainJointProbabilityDensityFunction}), as given in equations (\ref{FirstLimtingRelation}) and (\ref{SecondLimtingRelation}), respectively.

For the first limit  $\mu\to1$ leading to two independent Gaussian complex matrices we have 
$\delta(\mu)\rightarrow 0$, $\alpha(\mu)\rightarrow 1$. From the series representation of the function $I_{\kappa}(z)$, equation (\ref{BesselFunctionI}),
it is not hard to obtain the following limiting relation
$$
\underset{\mu\rightarrow 1}{\lim}\left(
\frac{\det\left[y_i^{\frac{j-1}{2}}I_{j-1}(2\delta(\mu)\sqrt{y_i})\right]_{i,j=1}^N}{\delta(\mu)^{\frac{N(N-1)}{2}}}\right)
=\frac{\det\left[y_i^{j-1}\right]_{i,j=1}^N}{\prod\limits_{j=1}^N\Gamma(j)}.
$$
The limit of the remainder of the pre-factor $Z_N$  and of the modified Bessel function 
of the second kind 
$K_{\kappa}(2\alpha(\mu)\sqrt{y})$ is trivial, and after expressing the latter in terms of the Meijer G-function from equation (\ref{K=G}) the limiting 
joint probability density function  $\lim_{\mu\to1}P(y_1,\ldots,y_N)$
in equation (\ref{FirstLimtingRelation}) follows.

In the second limit $\mu\rightarrow 0$ both $\delta(\mu)$ and $\alpha(\mu)$ diverge. Hence in  equation 
(\ref{TheMainJointProbabilityDensityFunction}) we have to replace the modified Bessel functions inside the determinants by their large argument asymptotic expressions.
Namely, we use the formulae
\begin{equation}\label{Bessasypm}
I_{\kappa}(z)\simeq \frac{e^z}{\sqrt{2\pi z}},\;\;\; K_{\kappa}(z)\simeq \sqrt{\frac{\pi}{2z}}e^{-z},
\end{equation}
see Gradshteyn and Ryzhik \cite{GradshteinRyzhik}, Section 8.45. This gives asymptotically 
\begin{equation}
\begin{split}
\det\left[y_i^{\frac{j+\nu-1}{2}}K_{j+\nu-1}\left(2\alpha(\mu)\sqrt{y_i}\right)\right]_{i,j=1}^N\simeq
\frac{\pi^{\frac{N}{2}}}{2^N\alpha(\mu)^{\frac{N}{2}}}
\prod\limits_{i=1}^Ny_i^{\frac{\nu}{2}-\frac{1}{4}}\exp\left[-2\alpha(\mu)y_i^{\frac{1}{2}}\right]
\det\left[y_i^{\frac{j-1}{2}}\right]_{i,j=1}^N,
\end{split}
\nonumber
\end{equation}
and
\begin{equation}
\begin{split}
\det\left[y_i^{\frac{j-1}{2}}I_{j-1}\left(2\delta(\mu)\sqrt{y_i}\right)\right]_{i,j=1}^N\simeq\frac{\pi^{\frac{N}{2}}}{2^N\delta(\mu)^{\frac{N}{2}}}
\prod\limits_{i=1}^Ny_i^{-\frac{1}{4}}\exp\left[2\delta(\mu)y_i^{\frac{1}{2}}\right]
\det\left[y_i^{\frac{j-1}{2}}\right]_{i,j=1}^N.
\end{split}
\nonumber
\end{equation}
Noting that
$$
\alpha(\mu)-\delta(\mu)=\frac{1+\mu}{2\mu}-\frac{1-\mu}{2\mu}=1,
$$
and that asymptotically 
$$
\alpha(\mu)\delta(\mu)=\frac{(1+\mu)}{2\mu}\frac{(1-\mu)}{2\mu}\simeq\frac{1}{4\mu^2},
$$
we obtain that the product of the two determinants in equation (\ref{TheMainJointProbabilityDensityFunction}) turns
into
$$
\frac{\mu^N}{2^N}
\left(\det\left[y_i^{\frac{j-1}{2}}\right]_{i,j=1}^N\right)^2
\prod\limits_{i=1}^Ny_i^{-\frac{1}{2}}
\exp\left[-2y_i^{\frac{1}{2}}\right].
$$
Moreover, as $\mu\rightarrow 0$, the normalising constant $Z_N$ in equation (\ref{TheMainJointProbabilityDensityFunction}) becomes asymptotically
equal to
$$
Z_N\simeq\frac{2^{NM}}{N!\mu^N\prod\limits_{j=1}^N\Gamma(j)\Gamma(j+\nu)}.
$$
Putting all these results together we obtain equation (\ref{SecondLimtingRelation}).

\end{appendix}



\begin{thebibliography}{99}

\bibitem{ABD} Akemann, G.; Baik, J.; DiFrancesco, P. (eds.) The Oxford Handbook of Random Matrix Theory, Oxford University Press, Oxford, 2011.

\bibitem{Akemann1}
Akemann, G.; Burda, Z. Universal microscopic correlation functions for products of independent Ginibre matrices. J. Phys. A: Math. Theor. 45 (2012)  465201.

\bibitem{AkemannIpsen} Akemann, G.; Ipsen, J.R.
Recent exact and asymptotic results for products of independent random matrices. arXiv:1502.01667 [math-ph].

\bibitem{AkemannIpsenKieburg}
Akemann, G.; Ipsen, J.; Kieburg M.
Products of rectangular random matrices: singular values and progressive scattering.  Phys. Rev. E 88 (2013) 052118.

\bibitem{AkemannIpsenStrahov} Akemann, G.; Ipsen, J.R.; Strahov, E.
Permanental processes from products of complex and quaternionic induced Ginibre ensembles.
Random Matrices: Th. Appl. 3 (2014), no. 4, 1450014.

\bibitem{AkemannKieburgWei}
Akemann, G.; Kieburg M.; Wei, L.
Singular value correlation functions for products of Wishart random matrices.
J. Phys. A. 46 (2013) 275205.

\bibitem{AkemannStrahov}
Akemann, G.; Strahov, E. Hole probabilities and overcrowding estimates for  products of complex Gaussian matrices.
J. Stat. Phys. 151(2013)  987-1003.

\bibitem{AlexeevGotzeTikhomirov}
 Alexeev, N.; G$\ddot{\mbox{o}}$tze, F.; Tikhomirov, A. Asymptotic distribution of singular values of powers of random matrices. Lith. Math. J. 50 (2010), no. 2, 121-132.
 
 \bibitem{Anderson}
Anderson, G.W.; Guionnet, A.; Zeitouni, O. An introduction to random matrices. Cambridge Studies in Advanced Mathematics, 118. Cambridge University Press, Cambridge, 2010.

\bibitem{Baha}
Balantekin, A.B., Character Expansions, Itzykson-Zuber Integrals, and the QCD Partition Function. Phys. Rev. D 62 (2000) 085017.

\bibitem{Bertola}
 Bertola, M.; Gekhtman, M.; Szmigielski, J. Cauchy-Laguerre two-matrix model and the Meijer-G random point field. Comm. Math. Phys. 326 (2014), no. 1, 111-144.

\bibitem{MB2} Bertola, M.; Bothner, T.
Universality conjecture and results for a model of several coupled positive-definite matrices. arXiv:1407.2597.

\bibitem{BorodinBO}
Borodin, A. Biorthogonal ensembles. Nuclear Phys. B  536  (1999),  no. 3, 704-732.

\bibitem{BreuerDuits}
Breuer, J.; Duits, M. Central Limit Theorems for Biorthogonal Ensembles and Asymptotics of Recurrence Coefficients.
arXiv:1309.6224v2.

\bibitem{Burda3}
Burda, Z.; Janik, R. A.; Waclaw, B. Spectrum of the product of independent random Gaussian matrices. Phys. Rev. E 81 (2010), no. 4, 041132.

\bibitem{CPV}  Crisanti, A.; Paladin, G.; Vulpiani, A. Products of Random Matrices. Springer, Heidelberg, 1993.

\bibitem{Deift1}
Deift, P.A. Integrable operators. In: V.~Buslaev, M.~Solomyak,
D.~Yafaev (eds.) Differential operators and spectral theory:
M.~Sh.~Birman's 70th anniversary collection. American Mathematical
Society Translations, ser. 2, 189, Providence, R.I., (1999).

\bibitem{Fischmann}
 Fischmann, J.; Bruzda, W.; Khoruzhenko, B.A.; Sommers, H.-J.; Zyczkowski, K.
 Induced Ginibre ensemble of random matrices and quantum operations. J. Phys. A 45 (2012), no. 7, 075203.
 
 \bibitem{ForresterLogGases}
 Forrester, P.J. Log-gases and random matrices. London Mathematical Society Monographs Series, 34. Princeton University Press, Princeton, NJ, 2010.

 \bibitem{ForresterProductWishart}
  Forrester, P.J. Eigenvalue statistics for product complex Wishart matrices. J. Phys. A 47 (2014), no. 34, 345202.

\bibitem{PeterMario} Forrester, P.J.; Kieburg, M.
  Relating the Bures measure to the Cauchy two-matrix model.
arXiv:1410.6883 [math-ph].

\bibitem{ForresterLiu} Forrester, P.J.; Liu, D.-Z.
Singular values for products of complex Ginibre matrices with a source: hard edge limit and phase transition.
arXiv:1503.07955 [math.PR].

\bibitem{ForresterWang} Forrester, P.J.; Wang, D.
Muttalib--Borodin ensembles in random matrix theory --- realisations and correlation functions. arXiv:1502.07147 [math-ph].

\bibitem{FK} Furstenberg, F.; Kesten, H. 
Products of random matrices.
Ann. Math. Stat. {\bf 31} (1960) 457-469.

 \bibitem{GotzeKostersTikhomirov}G$\ddot{\mbox{o}}$tze, F.; K$\ddot{\mbox{o}}$sters, H.; Tikhomirov, A.
 Asymptotic Spectra of Matrix-Valued Functions of Independent Random Matrices and Free Probability.  arXiv:1408.1732.
 
  \bibitem{GotzeNaumovTikhomirov}G$\ddot{\mbox{o}}$tze, F.; Naumov, A.; Tikhomirov, A.
 Distribution of Linear Statistics of Singular Values of the Product of Random Matrices.  arXiv:1412.3314.
 
\bibitem{GradshteinRyzhik}
Gradshteyn, I.S.; Ryzhik, I.M. Table of Integrals, Series, and Products.
A. Jeffrey and D. Zwillinger (eds.). Fifth edition (January 1994).

\bibitem{HC} Harish-Chandra. Differential operators on a semisimple Lie algebra, Am. J. Math. 79 (1957) 87-120.

\bibitem{IpsenKieburg} Ipsen, J.R.; Kieburg, M.
Weak Commutation Relations and Eigenvalue Statistics for Products of Rectangular Random Matrices.
Phys. Rev. E 89 (2014), 032106.

\bibitem{Its}
Its, A.R.; Isergin, A.G.; Korepin, V.E.;  Slavnov, N.A.
Differential equations for quantum correlation functions. Int. J.
Mod. Phys. {B 4} (1990) 1003--1037.

\bibitem{IZ} Itzykson C. and Zuber, J.B. The Planar Approximation. 2, J. Math. Phys. 21 (1980) 411.

\bibitem{TT1} Kanazawa, T.; Wettig, T.; Yamamoto, N.
Singular values of the Dirac operator in dense QCD-like theories.
JHEP 12 (2011) 007.

\bibitem{TT2} Kanazawa, T.; Wettig, T.
Stressed Cooper pairing in QCD at high isospin density: effective Lagrangian and random matrix theory.
JHEP 10 (2014) 055.


\bibitem{Kuijlaars} Kuijlaars, A.B.J.
Transformations of polynomial ensembles. arXiv:1501.05506 [math.PR]

\bibitem{KS} Kuijlaars, A.B.J.; Stivigny, D.
Singular values of products of random matrices and polynomial ensembles.	
Random Matrices: Th. Appl. {03} (2014) 1450011.

 \bibitem{KuijlaarsZhang}
Kuijlaars, A B.J.; Zhang, L.
Singular values of products of Ginibre random matrices, multiple orthogonal polynomials and hard edge scaling limits.
Comm. Math. Phys. 332 (2014) 759--781.

\bibitem{LeutwylerSmilga} Leutwyler H. and Smilga, A.
Spectrum of Dirac operator and role of winding number in QCD.
Phys. Rev. D 46 (1992) 5607.

\bibitem{Luke}
Luke, Y.L.  The special functions and their approximations. Academic Press, New York 1969.

\bibitem{Ralf} M\"uller, R.R. 
On the asymptotic eigenvalue distribution of concatenated vector-valued fading channels.
IEEE Trans. Inf. Theor. {Vol. 48} No. 7 (2002) 2086-2091

\bibitem{NIST} Olver, F.W.L et al. (eds.), NIST Handbook of Mathematical Functions. Cambridge University Press, Cambridge 2010.

\bibitem{Osborn} Osborn, J.C. Universal results from an alternate random matrix model for QCD with a baryon chemical potential.
Phys. Rev. Lett. 93 (2004) 222001.

\bibitem{Karol} Penson, K.A.; Zyczkowski, K.
Product of Ginibre matrices: Fuss-Catalan and Raney distributions. Phys. Rev. E {83} (2011) 061118.

\bibitem{Rourke}
 O'Rourke, S.; Soshnikov, A. Products of independent non-Hermitian random matrices. Electron. J. Probab. 16 (2011), no. 81, 2219–-2245.

\bibitem{Strahov}
 Strahov, E. Differential equations for singular values of products of Ginibre random matrices. J. Phys. A 47 (2014), no. 32, 325203.

\bibitem{Temme} 
 Temme, N. Large parameter cases of the Gauss hypergeometric function.
J. Comp. Appl. Math. 153 (2003) 441-462. 

 \bibitem{JT} Verbaarschot, J.J.M.; Wettig, T.
Random Matrix Theory and Chiral Symmetry in QCD.
Ann. Rev. Nucl. Part. Sci. {50} (2000) 343-410.

\end{thebibliography}
\end{document}